\newtheorem{theorem}{Theorem}[section]
\newtheorem{lemma}[theorem]{Lemma}
\newtheorem{claim}[theorem]{Claim}
\newtheorem{definition}[theorem]{Definition}
\def\2DF{\textsc{2D-Linear-FIXP}}
\def\rjunta{{\cal JUNTA}}
\def\bb{\mathbf{b}}   
 \def\calC{\mathcal{C}}  \def\11{\mathbf{1}}
 \def\calD{\mathcal{D}} \def\YES{\mathcal{YES}}
 \def\NO{\mathcal{NO}}
\def\eps{\epsilon}
\def\poly{\mathrm{poly}}
\def\bx{\mathbold{x}}
\def\boldf{\mathbold{f}} 
\def\bg{\mathbold{g}}
\def\bj{\mathbold{j}}
\def\bh{\mathbold{h}}
\def\bS{\mathbold{S}}
\def\bJ{\mathbold{J}}
\def\Bin{\mathrm{Bin}}
\def\bphi{\boldsymbol{\phi}}
\def\by{\mathbold{y}}
\def\bQ{\mathbold{Q}}
\def\bY{\mathbold{Y}}
\def\bZ{\mathbold{Z}}
\def\bz{\mathbold{z}}
\def\dist{\mathsf{dist}}
\def\D{{\cal D}}
\def\uhr{\hspace{-0.15cm}\upharpoonright}
\newcommand{\ignore}[1]{{}}
\def\colorful{0}
\newcommand{\red}[1]{{\color{red} {#1}}}
\newcommand{\red}[1]{{{#1}}}
\title{Distribution-free Junta Testing\footnote{
This work
is supported by NSF CCF-1703925, NSF CCF-1420349, and NSF CCF-1563155.}}
\author{
Xi Chen\footnote{Columbia University. Email: \texttt{xichen@cs.columbia.edu}.}\and
Zhengyang Liu\footnote{Shanghai Jiao Tong University. Email: \texttt{lzy5118@sjtu.edu.cn}.} \and
Rocco A. Servedio\footnote{Columbia University. Email: \texttt{rocco@cs.columbia.edu}.}\and
Ying Sheng\footnote{Columbia University. Email: \texttt{ys2982@columbia.edu}.}\and
Jinyu Xie\footnote{Columbia University. Email: \texttt{jinyu@cs.columbia.edu}.}}
\begin{document}

\begin{titlepage}

\maketitle

\begin{abstract}

We study the problem of testing whether an unknown $n$-variable Boolean function is a $k$-junta in the
\emph{distribution-free} property testing model, where the distance between functions is measured with respect to an arbitrary and unknown probability distribution over $\{0,1\}^n$.  Our first main result is that distribution-free $k$-junta testing can be performed, with one-sided error, by an adaptive algorithm that uses $\tilde{O}(k^2)/\eps$ queries (independent of $n$).  Complementing this, our second main result is a lower bound showing that any \emph{non-adaptive} distribution-free $k$-junta testing algorithm must make $\Omega(2^{k/3})$ queries even to test to accuracy $\eps=1/3$.  
These bounds establish that while the optimal query complexity of non-adaptive $k$-junta testing is $2^{\Theta(k)}$, for adaptive testing it is $\poly(k)$, and thus show that adaptivity provides an exponential improvement in the distribution-free query complexity of testing juntas.

\ignore{These bounds settle the query complexity of both adaptive and non-adaptive distribution-free junta testing to within a polynomial factor\footnote{Jinyu: the upper bound in non-adaptive setting is $O(2^k)$ so this is not exactly within a polynomial factor. I get that we want to say something similar to this though.}, and show that adaptivity provides an exponential improvement in the distribution-free query complexity of testing juntas.}

\ignore{  We prove that adaptive testing is more powerful than
  non-adaptive testing for \(k\)-junta in {\em distribution-free}
  model. That is, we give a one-sided error algorithm for testing
  \(k\)-junta with \(\tilde{O}(k^2)/\epsilon\) queries adaptively, and
  also show an
  \(\Omega(2^{k/3})\) lower bound for testing \(k\)-junta non-adaptively.
  }

\end{abstract}

\thispagestyle{empty}
\end{titlepage}

\newpage


\def\bR{\mathbf{R}}
\def\bP{\mathbf{P}}
\def\bQ{\mathbf{Q}}
\def\bD{\boldsymbol{\mathcal{D}}}

\section{Introduction}
\label{sec:introduction}

Property testing of Boolean functions was first considered in the seminal works of Blum, Luby and Rubinfeld \cite{BLR93} and Rubinfeld and Sudan \cite{Rubinfeld1996} and has developed into a robust research area at the intersection of sub-linear algorithms and complexity theory.  Roughly speaking, a property tester for a class $\calC$ of functions from $\{0,1\}^n$ to $\{0,1\}$ is a randomized algorithm that is given some form of access to the (unknown) input Boolean function $f$, and must with high probability distinguish the case that $f \in \calC$ versus the case that $f$ is $\eps$-far from every function $g \in \calC$.  In the usual (uniform-distribution) property testing scenario, the testing algorithm may access $f$ by making black-box queries on inputs $x \in \{0,1\}^n$, and the distance between two functions $f$ and $g$ is measured with respect to the uniform distribution on $\{0,1\}^n$; the goal is to develop algorithms that make as few queries as possible.
Many different classes of Boolean functions~have been studied from this perspective, see \cite{BLR93,AKKLRtit,BKSSZ10,PRS02,DLM+:07,GGLRS,FLNRRS,CS13a,CST14,CDST15,KMS15,BB15,KS16,CS16,BMPR16,CWX17,CWX17focs,MORS:10,MORS:09random,BlaisBM11,DLM+:07,BlaisBM11,BlaisKane12,GOS+11} and other works referenced in the surveys \cite{Ron:08testlearn,Ron:10FNTTCS,PropertyTestingICS}.\ignore{(e.g. linear functions and low-degree polynomials over $GF(2)$ \cite{BLR93,AKKLRtit,BKSSZ10}, literals, conjunctions, $s$-term monotone and non-monotone DNFs \cite{PRS02,DLM+:07}, monotone and unate functions \cite{GGLRS,FLNRRS,CS13a,CST14,CDST15,KMS15,BB15,KS16,CS16,BMPR16,CWX17,CWX17focs},
various types of linear threshold functions \cite{MORS:10,MORS:09random,BlaisBM11}, size-$s$ decision trees and $s$-sparse $GF(2)$ polynomials and parities \cite{DLM+:07,BlaisBM11,BlaisKane12}, functions with sparse or low-degree Fourier spectrum \cite{GOS+11}, and more),}
Among these, the~class of \emph{$k$-juntas} --- Boolean functions that depend only on (an unknown set of)~at~most~$k$~of~their~$n$~input variables --- is one of the best-known and most intensively investigated such classes \cite{FKRSS03,ChocklerGutfreund:04,Blais08,Blaisstoc09,BGSMdW13,STW15}, with ongoing research on junta testing continuing right up to the present \cite{SSTWX17}.

The query complexity of junta testing in the uniform distribution framework is now well understood.
 Improving on $\poly(k)/\eps$-query algorithms given in \cite{FKRSS03} (which introduced the
junta testing problem), in~\cite{Blais08} Blais gave a non-adaptive algorithm that makes
\(\tilde{O}(k^{3/2})/\epsilon\) queries, and in \cite{Blaisstoc09} Blais gave an \(O(k\log k + k/\epsilon)\)-query adaptive algorithm.
On the lower bounds side, Fischer et al.~\cite{FKRSS03} initially gave an
\(\Omega(\sqrt{k})\) lower bound for non-adaptively testing $k$-juntas, which also implies an \(\Omega(\log k)\) lower bound for
adaptive testing. Chockler and Gutfreund improved the adaptive lower bound
to $\Omega(k)$ in~\cite{ChocklerGutfreund:04}, and very recently Chen et al.~\cite{SSTWX17} gave an $\tilde{\Omega}(k^{3/2})/\eps$ non-adaptive lower bound. Thus in both the adaptive and non-adaptive uniform distribution settings, the query complexity of $k$-junta testing has now been pinned down to within logarithmic factors.

\medskip \noindent {\bf Distribution-free property testing.}  This work studies the junta testing problem in the \emph{distribution-free} property testing model that was first introduced by Goldreich \emph{et al} in~\cite{GGR98}.  In this model the distance between Boolean functions is measured with respect to a distribution $\calD$ over $\{0,1\}^n$ which is arbitrary and unknown to the testing algorithm.  Since the distribution is unknown, in this model the testing algorithm is allowed (in addition to making black-box queries) to draw random labeled samples $(\bx,f(\bx))$ where each $\bx$ is independently distributed according to $\calD$.  The query complexity of an algorithm in this framework is the worst-case total number of black-box oracle calls plus random labeled samples that are used, across all possible distributions.  (It follows that distribution-free testing of a class $\calC$ requires at least as many queries as testing $\calC$ in the standard uniform-distribution model.)

Distribution-free property testing is in the spirit of similar distribution-free models in computational learning theory such as Valiant's celebrated PAC learning model \cite{Valiant:84}.  Such models are attractive because of their minimal assumptions; they are well motivated both because in~many natural settings the uniform distribution over $\{0,1\}^n$ may not be the best way to measure distances, and because they capture the notion of an algorithm dealing with an unknown and arbitrary environment (modeled here by the unknown and arbitrary distribution $\D$ over $\{0,1\}^n$ and the unknown and arbitrary Boolean function $f: \{0,1\}^n \to \{0,1\}$). Researchers have studied distribution-free testing of a number of Boolean function classes, including monotone functions, low-degree polynomials, dictators (1-juntas) and $k$-juntas  \cite{HalevyKushilevitz:07}, disjunctions and conjunctions (monotone and non-monotone), decision lists, and linear threshold functions \cite{GlasnerServedio:09toc,DolevRon:11,CX16}. Since depending on few variables is an appealingly flexible ``real-world'' property in comparison with more highly structured syntactically defined properties,\ignore{ such as being computed exactly by a conjunction, by a linear threshold function, etc.,} we feel that junta testing is a particularly natural task to study in the distribution-free model.

\medskip

\noindent {\bf Prior results on distribution-free junta testing.}  Given how thoroughly junta testing has been studied in the uniform distribution model, surprisingly little was known in the \mbox{distribution-free} setting.  The adaptive $\Omega(k)$ and non-adaptive $\tilde{\Omega}(k^{3/2})/\eps$ uniform-distribution lower bounds from \cite{ChocklerGutfreund:04,SSTWX17} mentioned earlier trivially extend to the distribution-free model, but no other lower bounds on distribution-free junta testing were known prior to this work.  On the positive side, Halevy and Kushilevitz showed in \cite{HalevyKushilevitz:07}  that any class $\calC$ that has (i) a one-sided error uniform-distribution testing algorithm and (ii) a self-corrector, has a one-sided error distribution-free~testing algorithm.  As $\poly(k)/\eps$-query one-sided junta testers were given already in \cite{FKRSS03}, and $k$-juntas have $O(2^{k})$-query self-correctors \cite{AlonWeinstein12}, this yields a one-sided non-adaptive distribution-free junta tester with query complexity $O(2^k/\eps)$.  No other results were known.

Thus, prior to this work there were major gaps in our understanding of distribution-free $k$-junta testing:  is the query complexity of this problem polynomial in $k$, exponential in $k$, or somewhere in between?  Does adaptivity confer an exponential advantage, a sub-exponential advantage, or no advantage at all?  Our results, described below, answer both these questions.

\subsection{Our results}  Our main positive result is a $\poly(k)/\eps$-query one-sided adaptive algorithm for distribution-free $k$-junta testing:

\begin{theorem}[Upper bound]\label{main}
  For any  \(\epsilon>0\), there is a one-sided distribution-free adaptive
  algorithm for $\eps$-testing \(k\)-juntas with \(\tilde{O}(k^2)/\epsilon\) queries.
\end{theorem}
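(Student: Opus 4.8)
The plan is to build an adaptive tester that, roughly speaking, ``discovers'' relevant variables one at a time using binary search over the sampled points, and rejects only once it has certified that $f$ depends on more than $k$ variables. First I would maintain a growing set $I \subseteq [n]$ of coordinates that have been identified as relevant (initially $I = \emptyset$). At each stage I draw a batch of labeled samples $(\bx,f(\bx))$ from the unknown distribution $\calD$, and for each sampled point $\bx$ I check whether $\bx$ is ``informative'': that is, whether flipping the coordinates of $\bx$ outside $I$ to some fixed reference setting changes the value of $f$. If so, then $f$ genuinely depends on some coordinate outside $I$, and I can pin down one such coordinate by binary search: repeatedly split the block $[n]\setminus I$ in half, use one query per half to see which half the disagreement persists in, and recurse. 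This finds one new relevant coordinate in $O(\log n)$ queries, which I then add to $I$. If $|I|$ ever exceeds $k$, the algorithm rejects; crucially, whenever it rejects it has exhibited $k+1$ coordinates each of which provably influences $f$ (via explicit query pairs that witness the influence), so rejection is always correct, giving one-sided error.

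The correctness analysis for yes-instances is trivial: a $k$-junta never produces $k+1$ certified relevant variables, so the tester always accepts. The work is in the soundness/query-complexity analysis for functions that are $\eps$-far (w.r.t.\ $\calD$) from every $k$-junta. The key structural claim I would prove is that if $f$ is $\eps$-far from $k$-juntas and $I$ is any set of at most $k$ coordinates, then a random sample $\bx \sim \calD$ is ``informative'' for $I$ (in the sense above, relative to an appropriate reference point that the algorithm also maintains) with probability $\Omega(\eps)$; otherwise one could define a $k$-junta on the coordinates in $I$ that agrees with $f$ on all but an $\eps$-fraction of $\calD$-mass, contradicting farness. Given this claim, each stage of the algorithm succeeds in finding a new relevant coordinate after $O(1/\eps)$ samples in expectation, so within $\tilde O(k/\eps)$ samples plus $\tilde O(k)$ binary searches costing $O(\log n)$ queries each — wait, that gives a $\log n$ dependence, which Theorem~\ref{main} forbids.

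To remove the $\log n$ factor is the main obstacle, and the fix is the standard one for distribution-free testing: instead of binary-searching over all $n$ coordinates, I only ever need to distinguish coordinates that differ between the finitely many sampled points that the algorithm has seen. Concretely, I would group the coordinates in $[n]\setminus I$ into ``blocks'' according to the pattern they exhibit across the current pool of $\tilde O(k/\eps)$ samples; there are at most $2^{\tilde O(k/\eps)}$ such blocks but, more usefully, the binary search to localize a relevant coordinate can be carried out over a set of size $\poly(k/\eps)$ rather than $n$ by a more careful accounting — effectively by first halving the set of \emph{blocks} that are relevant, each query narrowing down which block contains an influential coordinate, at a cost of $O(\log(k/\eps))$ per relevant variable found. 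Balancing the sample cost of $\tilde O(k/\eps)$ against $k$ rounds of binary search at $\tilde O(k)$ queries each (one round is $O(\log k)$ halvings, each needing $O(k)$ queries to re-verify the disagreement persists across a shrinking candidate block under the partition structure) gives the claimed $\tilde O(k^2)/\eps$ total. I expect the delicate points to be: (i) defining the reference point and the notion of ``informative sample'' so that the $\Omega(\eps)$-density claim is actually true and composes correctly across the $k$ rounds as $I$ grows; and (ii) making the block-based binary search work without ever paying for the ambient dimension $n$, which is where the careful choice of partition induced by the sample pool does the real work.
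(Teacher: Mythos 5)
Your first half is essentially the paper's warmup tester \textbf{SimpleDJunta} (with the same $O(k/\eps+k\log n)$ query count), and your $\Omega(\eps)$-density claim is Lemma~\ref{influence}; one correction is that the paper re-randomizes $\overline{I}$ uniformly at each round rather than using a fixed reference point, which is what makes the density bound hold unconditionally and compose across rounds.

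The gap is in the fix for $\log n$. Truncating binary search early gives relevant \emph{blocks} $B_1,\ldots,B_\ell$, not individual coordinates: you know each $B_j$ contains some relevant $i_j$ but not which coordinate it is. To re-apply the density lemma you must draw a uniformly random subset $\bR$ of $\overline{I}$ with $I=\{i_j:j\in[\ell]\}$, which requires knowing exactly which coordinate inside each $B_j$ to \emph{spare} from re-randomization. Partitioning $[n]\setminus I$ by the patterns coordinates exhibit on the sample pool does not tell you this, and it also fails to bound the number of blocks usefully (there can be up to $\min(n,2^q)$ distinct patterns over a pool of size $q$, so binary search over pattern-blocks need not beat binary search over $[n]$). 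The device you are missing is the paper's \emph{literal invariant}: the algorithm maintains that for each verified block $B_j$, the restriction $g_j=f\uhr_{w^{[j]}}$ (coordinates outside $B_j$ frozen to the common part of $B_j$'s distinguishing pair) is $\gamma$-close, under the \emph{uniform} distribution on $\{0,1\}^{B_j}$, to a literal on some $i_j\in B_j$. Because $g_j$ is nearly a literal, a random two-way partition of $B_j$ plus $O(1)$ queries (\textbf{WhereIsTheLiteral}) reveal which half contains $i_j$ with probability $1-O(\gamma)$; re-randomizing the other half of every $B_j$ together with a fresh random subset of $\overline{B_1\cup\cdots\cup B_\ell}$ produces a genuine uniform random subset of $\overline{I}$. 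Lemma~\ref{influence} then applies, and a blockwise binary search over the $\ell+1$ candidate blocks ($O(\log k)$ queries) yields a new disjoint relevant block. Unverified blocks whose restriction is not close to a literal are split directly: via Blais's uniform-distribution tester if far from every $1$-junta, or by flipping random halves of a witness string if close to a constant. A potential function on the numbers of verified/unverified blocks bounds the round count, and the second factor of $k$ in $\tilde{O}(k^2)/\eps$ comes from running \textbf{WhereIsTheLiteral} on all up-to-$k$ verified blocks in each round --- not from ``re-verifying disagreements across a shrinking candidate block'' as your estimate suggests. Without a mechanism like the literal invariant to localize the hidden relevant coordinate inside each block, the argument does not close.
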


Theorem~\ref{main} shows that $k$-juntas stand in interesting contrast with many other well-studied classes
  of Boolean functions in property testing such as conjunctions, decision lists, linear threshold functions, and monotone functions. For each of these classes distribution-free testing requires dramatically more queries than uniform-distribution testing: for the first three classes~the separation is $\poly(1/\eps)$ queries in the uniform setting \cite{PRS02,MORS:10} versus $n^{\Omega(1)}$ queries in the distribution-free setting \cite{GlasnerServedio:09toc,CX16}; for $n$-variable monotone functions $\poly(n)$ queries suffice in the uniform setting \cite{GGLRS,KMS15} whereas \cite{HalevyKushilevitz:07} shows that $2^{\Omega(n)}$ queries are required in the distribution-free setting.  In contrast, Theorem~\ref{main} shows that for $k$-juntas the query complexities of uniform-distribution and distribution-free testing are polynomially related (indeed, within at most a quadratic factor of each other).

Complementing the strong upper bound which Theorem~\ref{main} gives for adaptive testers, our main negative result is an $\Omega(2^{k/3})$-query lower bound for non-adaptive testers:

\begin{theorem}[Lower bound]\label{lowerbound}
For  
  $k \leq n/200$, any non-adaptive algorithm that distribution-free $\eps$-tests  \(k\)-juntas over $\{0,1\}^n$, for $\eps=1/3,$ must have query complexity $\Omega(2^{k/3}).$
\end{theorem}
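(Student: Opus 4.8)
The plan is to apply Yao's minimax principle: it suffices to construct a distribution $\mathcal{P}_{\mathrm{yes}}$ supported on pairs $(f,\calD)$ with $f$ a $k$-junta, and a distribution $\mathcal{P}_{\mathrm{no}}$ supported on pairs $(f,\calD)$ for which $f$ is $(1/3)$-far from every $k$-junta with respect to $\calD$ with probability $1-o(1)$, such that no deterministic non-adaptive algorithm that makes a total of $q=o(2^{k/3})$ black-box queries and random labeled samples can distinguish a draw from $\mathcal{P}_{\mathrm{yes}}$ from a draw from $\mathcal{P}_{\mathrm{no}}$ with constant advantage. (To make the lower bound as strong as possible one allows the algorithm's queries to depend on the samples already drawn.) For the construction I would hide the relevant variables inside the unknown distribution: draw a uniformly random $k$-subset $R\subseteq[n]$ and a uniformly random $h\colon\{0,1\}^R\to\{0,1\}$, and let $\calD$ be uniform over a carefully structured support $S\subseteq\{0,1\}^n$ organized into $2^k$ equal-size "groups", one for each assignment $a$ to the coordinates in $R$, where every point of the $a$-group agrees with $a$ on $R$ (there is ample room in the remaining $n-k=\Theta(k)$ coordinates to place the groups). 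Under $\mathcal{P}_{\mathrm{yes}}$ the function is the genuine $k$-junta $f(x)=h(x|_R)$. Under $\mathcal{P}_{\mathrm{no}}$ I keep the same $R,h,S$ but flip $f$ on a random half of the points of every group, so that $f$ restricted to $S$ is perfectly balanced inside each $R$-subcube. The support $S$, and the marginal distribution of the label of any single observed point, are then identical under the two experiments; the difference only surfaces through correlations among the labels of several observed points lying in a common group. The points within a group and across groups must moreover be placed so that the algorithm cannot cheaply tell from two observed strings whether they lie in the same group (e.g.\ by comparing Hamming distances), and so that from a sampled point it cannot "navigate", by flipping a few coordinates, to another point of the same group.

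The argument then splits into two parts. \emph{Soundness}: a draw from $\mathcal{P}_{\mathrm{no}}$ is $(1/3)$-far from every $k$-junta with probability $1-o(1)$. Any $k$-junta $g$ with relevant set $R'$ is constant on each of its at most $2^k$ subcubes, so its distance from $f$ under $\calD$ equals the $\calD$-weighted average over $R'$-subcubes of the minority count of $f$ on each subcube intersected with $S$; this is exactly $1/2$ when $R'=R$, and for the other $R'$ I would use a Chernoff / bounded-differences estimate (over the randomness defining $S$ and the flips) to show that $f$ is near-balanced on almost every $R'$-cell, so the distance is $\ge 1/2-o(1)>1/3$, finishing with a union bound over the at most $n^k$ choices of $R'$, which is swamped by the concentration bound. \emph{Indistinguishability} (the crux): the views of the algorithm under $\mathcal{P}_{\mathrm{yes}}$ and $\mathcal{P}_{\mathrm{no}}$ can differ only if it observes two or more points lying in a common $R$-group --- for one observed point, or for points in distinct groups, the observed labels are uniform and mutually independent in both experiments. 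Two random samples fall in a common group only with probability $O(q^2/2^k)$; the hard part is showing the black-box queries give essentially no extra chance of producing such a collision, because a query string chosen without hitting the exponentially sparse set $S$ is useless, while any query obtained by perturbing a sampled point cannot reach a different point of that point's group --- those points carry fresh randomness outside $R$ about which the algorithm has no information, and it does not even know $R$.

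I expect the main obstacle to be exactly this last step: carrying out the case analysis that rules out every way the samples could help the queries target the planted group structure, handling the (weaker) model in which queries may depend on samples, and designing $S$ so that distances between observed strings leak nothing about group membership. This is also the step that fixes the final exponent in the bound: the need to conceal the group structure from the algorithm forces a less clean version of the soundness argument (and, plausibly, somewhat larger groups), which is why one ends up with $\Omega(2^{k/3})$ rather than the $2^{k/2}$ a naive birthday calculation might suggest. The soundness lemma, by contrast, I expect to be routine modulo concentration bookkeeping.
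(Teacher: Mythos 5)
Your high-level strategy matches the paper's closely: Yao's principle, a $\YES$ distribution with a planted random $k$-junta on a hidden relevant set, a $\NO$ distribution obtained by re-randomizing the labels on the support $S$ of an unknown discrete distribution, soundness by Chernoff plus a union bound over $k$-juntas, and a birthday-paradox collision argument for the samples. These are all the right moves, and you correctly flag the indistinguishability of the black-box queries as the crux.

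However, there is a concrete gap precisely where you point the finger, and it is not just bookkeeping: you never specify what the no-case function $g$ equals \emph{off the support $S$}, and the natural choice breaks the construction immediately. If off $S$ the no-function equals the background junta $h(x|_R)$, then a trivial two-query attack works: draw a sample $(y,g(y))$, query $g(y^{(i)})$ for an arbitrary coordinate $i$. With probability $1-k/n$ we have $i\notin R$, so $y^{(i)}$ is in the same section as $y$; in the $\YES$ case the two labels agree (genuine junta), whereas in the $\NO$ case the sampled label is a fresh coin independent of $h(y|_R)$, so they disagree with constant probability. This one-bit perturbation is exactly the case your remark about ``fresh randomness outside $R$'' does not cover: the danger is not colliding with another point of $S$, but the discrepancy between the re-randomized label at $y$ and the background junta value at nearby off-$S$ points. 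The paper's key technical idea, which your proposal is missing, is to \emph{smooth} the no-function: for each sample $z\in S$, all points within Hamming distance $0.4n$ of $z$ and lying in $z$'s section inherit $z$'s (re-randomized) label, and only points farther than $0.4n$ from every relevant sample revert to the background junta. This makes local perturbations of samples uninformative, and the analysis then shows (event $E_2$ in the paper's Claim~\ref{lastclaim}) that a query far in Hamming distance from all samples is overwhelmingly unlikely to land in any sample's section, since it would have to avoid all $k$ hidden coordinates while flipping more than $0.4n$ bits. Without this extension the indistinguishability step fails.

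A secondary, smaller divergence: you want a ``carefully structured'' $S$ with $2^k$ equal-size groups, placed so that distances leak nothing about group membership. The paper instead takes $S$ to be $m=\Theta(2^k\log n)$ \emph{uniform} random points of $\{0,1\}^n$, which automatically makes pairwise distances uninformative and still gives enough mass per section for soundness by concentration; equal-sized, deliberately placed groups would require extra work to ensure they do not reveal $R$, and you would still need the ball-smoothing above. Also, in the paper the no-case labels on $S$ are independent fair coins rather than an exact half-split; both work for soundness, but the independent-coin choice is what makes the view of the sampling oracle literally a sequence of i.i.d.\ uniform $(\bx,\mathbf{b})$ pairs in the analysis.
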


Theorems~\ref{main} and~\ref{lowerbound} together show that adaptivity enables an exponential improvement in the distribution-free query complexity of testing juntas.  This is in sharp contrast with uniform-distribution junta testing, where the adaptive and non-adaptive query complexities are polynomially related (with an exponent of only 3/2).  To the best of our knowledge, this is the first example of a exponential separation between adaptive and nonadaptive
distribution-free testers.

\subsection{Ideas and techniques}

\noindent {\bf The algorithm.}  As a first step toward our $\tilde{O}(k^2)/\eps$-query algorithm, in Section~\ref{sec:warmup} we first present a simple one-sided adaptive algorithm, which we call {\bf SimpleDJunta}, that distribution-free tests $k$-juntas using $O({(k/ \eps)} + k \log n)$ queries.   {\bf SimpleDJunta} uses binary search and is an adaptation to the distribution-free setting of the $O({(k/ \eps)} + k \log n)$-query 
  uniform-distribution algorithm which is implicit in \cite{Blaisstoc09}.  
The algorithm maintains a set $I$ of \emph{relevant} variables: 
  a string $x\in \{0,1\}^n$ has been found for each $i\in I$ such that $f(x)\ne f(x^{(i)})$
  (we use $x^{(i)}$ to denote the string obtained by flipping the $i$-th bit of $x$),
  and the algorithm rejects only when $|I|$ becomes larger than $k$.
In each round, the algorithm samples a string $\bx\leftarrow \D$ and 
  a subset $\bR$ of $\overline{I}:=[n]\setminus I$ uniformly at random.
A simple lemma, Lemma~\ref{influence}, states that if $f$ is far from every $k$-junta with respect to $\calD$, then $f(\bx)\ne f(\bx^{(\bR)})$ with at least some moderately large probability as long as $|I|\le k$, where we use $\bx^{(\bR)}$ to
  denote the string obtained from $\bx$ by flipping every coordinate in $\bR$.
With such a pair $(\bx,\bx^{(\bR)})$ in hand, it is straightforward to find a 
  new relevant variable using binary search over coordinates in $\bR$ 
  (see Figure \ref{fig:binarysearch}), with at most $\log n$ additional queries.

In order to achieve a query complexity that is independent of $n$, clearly one must employ a more efficient approach than binary search over $\Omega(n)$ coordinates (since most likely the set~$\bR$~has size $\Omega(n)$ for 
the range of $k$ we are interested in). In the uniform-distribution setting this~is~accomplished in \cite{Blaisstoc09} by first randomly partitioning the variable space $[n]$ into $s=\poly(k/\eps)$ disjoint blocks $B_1,\ldots,B_s$ of variables and carrying out binary search over blocks (see Figure \ref{fig:blockbinarysearch})
rather than over individual coordinates; this reduces the cost of each binary search to $\log(k/\eps)$ rather than $\log n$. 
The algorithm maintains a set of \emph{relevant blocks}: two strings $x,y\in \{0,1\}^n$
  have been found for each such block $B$ which satisfy $f(x)\ne f(y)$ and $y=x^{(S)}$ with $S\subseteq B$,
  and the algorithm rejects when more than $k$ relevant blocks have been found.
In each round the algorithm samples two strings $\bx,\by$ uniformly at random conditioned on their agreeing with each other on the relevant blocks that have already been found in previous rounds; if $f(\bx)\ne f(\by),$ then the binary search over blocks is performed to
  find a new relevant block.
To establish the correctness of this approach \cite{Blaisstoc09} employs a detailed and technical analytic argument based on the influence of coordinates and the Efron-Stein orthogonal decomposition of functions over product spaces.  This machinery is well suited for dealing with product distributions, and indeed the analysis~of \cite{Blaisstoc09} goes through for any product distribution over $\{0,1\}^n$ (and even for more general finite domains and ranges).  However, it is far from clear how to extend this machinery to work for the completely unstructured distributions $\calD$ that must be handled in the distribution-free model.
\newpage 

Our main distribution-free junta testing algorithm, denoted {\bf MainDJunta}, 
  draws ideas from both {\bf SimpleDJunta} (mainly Lemma \ref{influence}) 
  and the uniform distribution tester of \cite{Blaisstoc09}.
To avoid~the $\log n$ cost, the algorithm
  carries out binary search over blocks rather than over individual coordinates,
  and maintains a set of disjoint relevant blocks $B_1,\ldots,B_\ell$, i.e., for each $B_j$
  a pair of strings $x^j$ and $y^j$ have been 
  found such that they agree with each other over $\overline{B_j}$ and satisfy $f(x^j)\ne f(y^j)$.
Let $w^j$ be the projection of $x^j$ (and $y^j$) over $\overline{B_j}$ and 
  let $g_j$ be the Boolean function over $\{0,1\}^{B_j}$ obtained from $f$ by setting
  variables in $\overline{B_j}$ to $w^j$.
For clarity we assume further that every function $g_j$ is very close to a \emph{literal} 
  (i.e. for some $\tau \in \{x_{i_j},\overline{x_{i_j}}\}$ we have
  $g_j(x)=\tau$ for all $x\in \{0,1\}^{B_j}$ for some $i_j\in B_j$) under the \emph{uniform} 
  distribution.
(To justify this assumption we note that if $g_j$ is far from every literal under the uniform
  distribution, then
  it is easy to split $B_j$ further into two relevant blocks using the uniform distribution
  algorithm of \cite{Blaisstoc09}.)
Let $I=\{i_j:j\in [\ell]\}$.
Even though the algorithm does not know $I$,
  there is indeed a way to draw uniformly random subsets $\bR$ of $\overline{I}$.
First  we draw a partition of $B_j$ into $\bP_j$ and $\bQ_j$ uniformly at random, for each $j$.
Since $g_j$ is close to a literal, it is not difficult to figure out 
  whether $\bP_j$ or $\bQ_j$ contains the hidden $i_j$, say it is $\bP_j$ for every $j$.
Then the union of all $\bQ_j$'s together with a uniformly random 
  subset of $\overline{B_1\cup \cdots \cup B_\ell}$,
  denoted by $\bR$,
  turns out to be a uniformly random subset of $\overline{I}$.
With $\bR$ in hand,  Lemma \ref{influence} implies that
  $f(\bx)\ne f(\bx^{(\bR)})$ with high probability when
 $\bx\leftarrow \D$, and when this happens, one can carry out binary search
  over blocks to increase the number of relevant blocks by one.
In Section \ref{sec:intuition} we explain the intuition behind the main algorithm in more detail.

\medskip
\noindent {\bf The lower bound.}
As we explain in Section~\ref{sec:preliminaries}, a $q$-query non-adaptive distribution-free tester is a randomized algorithm $A$ that works as follows. 
When $A$ is run on an input pair $(\phi,\D)$\footnote{For clarity, throughout our discussion of lower bounds we write $\phi$ to indicate a function which may be either a ``yes-function'' or a ``no-function'', $f$ to denote a ``yes-function'' and $g$ to denote a ``no-function.''} it is first 
  given the result $(\by^1,\phi(\by^1)),\dots,(\by^q,\phi(\by^q))$ of $q$ queries from the sampling oracle.
Based~on~them, it queries the black-box oracle $q$ times on strings $\bz^1,\dots,\bz^q.$  The $\bz^j$'s may depend on the random pairs $(\by^i,\phi(\by^i))$ received from the sampling oracle, but the $j$-th black-box query string $\bz^j$ may not depend on the responses $\phi(\bz^1),\dots,\phi(\bz^{j-1})$ to any of the $j-1$ earlier black-box queries.

As is standard in property testing lower bounds, our argument employs a distribution ${\cal YES}$ over yes-instances and a distribution ${\cal NO}$ over no-instances. \ignore{we employ Yao's principle, i.e.~we prove lower bounds for deterministic non-adaptive algorithms which receive inputs drawn one of two probability distributions, either ${\cal YES}$ or ${\cal NO}$.}Here ${\cal YES}$ is a distribution over (function, distribution) pairs $(\boldf,\bD )$ in which $\boldf$ is guaranteed to be a $k$-junta; ${\cal NO}$ is a distribution over pairs $(\bg,\bD)$ such that with probability  {$1-o(1)$}, $\bg$ is  {$1/3$}-far from every $k$-junta with respect to $\bD .$
To prove the desired lower bound against non-adaptive distribution-free testers, it suffices to show that for $q=2^{k/3}$, any deterministic non-adaptive algorithm $A$ as described above is roughly equally likely to accept whether it is run on an input drawn from ${\cal YES}$ or from ${\cal NO}.$

Our construction of the ${\cal YES}$ and ${\cal NO}$ distributions is essentially as follows.  In making~a draw either from ${\cal YES}$ or from ${\cal NO}$, first $m=\Theta(2^k \log n)$ strings are selected uniformly at~random from $\{0,1\}^n$ to form a set $\bS$, and the distribution $\bD$ in
  both ${\cal YES}$ and ${\cal NO}$ is set to be the uniform distribution over $\bS$.  Also in both ${\cal YES}$ and ${\cal NO}$, a ``background'' $k$-junta $\bh$ is selected uniformly at random by 
  first picking a set $\bJ$ of $k$ variables at random and then a random truth table for $\bh$ over the variables in $\bJ$.  We view the variables in $\bJ$ as partitioning $\{0,1\}^n$ into $2^k$ disjoint ``sections'' depending on how they are set.

In the case of a draw from ${\cal YES}$, the Boolean function $\boldf$ that goes with the above-described $\bD$ is simply the background junta $\boldf=\bh$.  In the case of a draw from ${\cal NO}$, the function $\bg$ that goes with $\bD$ is formed by modifying the background junta $\bh$ in the following way (roughly speaking;  see Section~\ref{sec:distributions} for precise details):  for each $z \in \bS$, we toss a fair coin
  $\bb(z)$ and set the value of all the strings in $z$'s section that lie within Hamming distance $0.4n$ from $z$ (including $z$ itself) to $\bb(z)$ (see Figure~\ref{fig:example-x}).  Note that the value of $\bg$ at each string in $\bS$ is a fair coin toss, which is completely independent of the background junta $\bh.$  Using the choice of $m$ it can be argued (see Section~\ref{sec:distributions}) that with high probability $\bg$ is $1/3$-far from every $k$-junta with respect to $\bD$ as $(\bg,\bD)\leftarrow {\cal NO}$.

The rough idea of why a pair $(\boldf,\bD) \leftarrow {\cal YES}$ is difficult for a ($q=2^{k/3}$)-query non-adaptive algorithm $A$ to distinguish from a pair $(\bg,\bD ) \leftarrow {\cal NO}$ is as follows.  Intuitively, in order~for~$A$ to distinguish the no-case from the yes-case, it must obtain two strings $x^1,x^2$ that belong to the same section but are labeled differently.  Since there are $2^k$ sections but $q$ is only $2^{k/3}$, by the birthday paradox it is very unlikely that $A$ obtains two such strings among the $q$ samples $\by^1,\dots,\by^q$ that it is given from the distribution $\bD$.  In fact, in both the yes- and no- cases, writing $(\bphi,\bD )$ to denote the (function, distribution) pair, the distribution of the $q$ pairs $(\by^1,\bphi(\by^1)),\dots,(\by^q,\bphi(\by^q))$ will be statistically very close to $(\bx^1,\bb_1),\dots,(\bx^q,\bb_q)$ where each pair $(\bx^j,\bb_j)$ is independently drawn uniformly from $\{0,1\}^n \times \{0,1\}$.  Intuitively, this translates into the examples $(\by^i,\bphi(\by^i))$ from the sampling oracle ``having no useful information'' about the set $\bJ$ of variables that the background junta depends on.

What about the $q$ strings $\bz^1,\dots,\bz^q$ that $A$ feeds to the black-box oracle?  It is also unlikely that any two elements of $\by^1,\dots,\by^q,\bz^1,\dots,\bz^q$ belong to the same section but are labeled differently.  Fix an $i \in [q]$; we give some intuition here as to why
it is very unlikely that there is any $j$ such that $\bz^i$ lies in the same section as $\by^j$ but has $f(\bz^i) \neq f(\by^j)$
\ignore{why $\bz^i$ is very unlikely to lie in the same section as any $\by^j$ }(via a union bound, the same intuition handles all $i \in [q]$).  Intuitively, since the random examples from the sampling oracle provide no useful information about the set $\bJ$ defining the background junta, the only thing that $A$ can do in selecting $\bz^i$ is to choose how far it lies, in terms of Hamming distance, from the points in $\by^1,\dots,\by^q$ (which, recall, are uniform random).  Fix $j \in [q]$:  if $\bz^i$ is within Hamming distance $0.4n$ from $\by^j$, then even if $\bz^i$ lies in the same section as $\by^j$ it will be labeled the same way as $\by^j$ whether we are in the yes- case or the no- case.  On the other hand, if $\bz^i$ is farther than $0.4n$ in Hamming distance from $\by^j$, then it is overwhelmingly likely that $\bz^i$ will lie in a different section from $\by^j$ (since it is very unlikely that all $0.4n$ of the flipped coordinates avoid the $k$-element set $\bJ$).  We prove Theorem~\ref{lowerbound} in Section~\ref{sec:lower} via a formal argument that proceeds somewhat differently from but is informed by the above intuitions.

\ignore{
The first thing a non-adaptive algorithm does is make $q=2^{k/3}$ calls to the sampling oracle to receive $(\bz^1,\bphi(\bz^1)),\dots,(\bz^q,\bphi(\bz^q))$ where $\bphi$ is either $\boldf$ or $\bg$ (depending on whether the algorithm is interacting with a draw from ${\cal YES}$ or from ${\cal NO}$) and each $\bz^i$ is independently drawn from $\D_{\bphi}$.  A birthday paradox argument easily shows that with high probability the distribution of these $q$ pairs is identical in either case.  Next, the algorithm fixes a sequence of $q$ query strings $z'^1,z'^q$; these are chosen deterministically based on  $(\bz^1,\bphi(\bz^1)),\dots,(\bz^q,\bphi(\bz^q))$. \red{fix/finish}
}


\medskip

\noindent {\bf Organization.}
In Section~\ref{sec:preliminaries} we define the distribution-free testing
model and introduce some useful notation. In Section \ref{sec:warmup}
  we present {\bf SimpleDJunta} and prove Lemma \ref{influence} in its analysis.
In Section~\ref{sec:our-algorithm} we present our main algorithm {\bf MainDJunta} and 
  prove Theorem~\ref{main}, and in
Section~\ref{sec:lower} we prove Theorem~\ref{lowerbound}.


\def\diff{\mathsf{diff}}
\def\bR{\mathbf{R}}

\section{Preliminaries}
\label{sec:preliminaries}

\noindent {\bf Notation.}
We use $[n]$ to denote $\{1,\ldots,n\}$.
We use $f$ and $g$ to denote
Boolean functions, which are maps from $\{0,1\}^n$ to $\{0,1\}$ for some positive integer $n$. 
We use the calligraphic font (e.g.,~$\calD$~and ${\cal NO}$) to denote probability distributions,   boldface letters such as $\bx$ to denote random variables,
  and write ``$\bx \leftarrow \calD$'' to indicate that $\bx$ is a random variable drawn from
  a distribution $\calD.$
We write $\bx\leftarrow \{0,1\}^n$ to denote that $\bx$ is a string drawn uniformly at random. 
Given $S\subseteq [n]$, we also write $\bR\leftarrow S$ to indicate that 
  $\bR$ is a subset of $S$ drawn uniformly at random, i.e., each index $i\in S$ is
  included in $\bR$ independently with probability $1/2$.

Given a subset $B \subseteq [n]$, we use $\overline{B}$ to denote its compliment with respect to $[n]$,
  and $\{0,1\}^B$ to denote the set of all binary strings of length 
  $|B|$ with coordinates indexed by $i\in B$.
Given an $x \in \{0,1\}^n$ and a $B \subseteq [n]$, we write $x_B\in \{0,1\}^B$ to denote the
  projection of $x$ over coordinates in $B$
  and $x^{(B)}\in \{0,1\}^n$ to denote the string obtained from $x$ by flipping 
  coordinates in $B$.
Given $x\in \{0,1\}^B$ and $y\in \{0,1\}^{\overline{B}}$, we write
  $x\circ y\in \{0,1\}^n$ to denote their concatenation, a string that agrees with $x$ over 
  coordinates in $B$ and agrees with $y$ over $\overline{B}$.
(As an example of the notation, given $x,y\in \{0,1\}^n$ and $B\subseteq [n]$,
  $x_B\circ y_{\overline{B}}$ denotes the string that agrees with $x$ over $B$ and with
  $y$ over $\overline{B}$.)
Given $x,y \in \{0,1\}^n$, we write $d(x,y)$ to denote the Hamming distance between $x$ and $y$,
  and $\diff(x,y)\subseteq [n]$ to denote the set of coordinates $i\in [n]$ with $x_i\ne y_i$.

Given \(f,g:\{0,1\}^n\rightarrow\{0,1\}\) and a probability distribution \(\D\) over \(\{0,1\}^n\), we write
\[
  \mathsf{dist}_{\D}(f,g) :=\Pr_{\bz \leftarrow \D}\big[f(\bz) \neq g(\bz)\big]
\]
to denote the \emph{distance} between \(f\) and \(g\) with respect to  \(\D\).
Given a class \(\mathfrak{C}\) of Boolean functions,  
\[
  \dist_\D(f,\mathfrak{C}) :=
  \min_{g\in\mathfrak{C}}\big(\dist_{\D}(f, g)\big) 
\]
  denotes the \emph{distance}
between $f$ and $\mathfrak{C}$ with respect to $\D$, where the minimum is taken over 
  $g$ with the same number of variables as $f$.
We say \(f\) is {\em \(\epsilon\)-far from $\mathfrak{C}$ with respect to $\D$} if 
$\dist_{\D}(f,\mathfrak{C})\geq\epsilon$.

We will often work with restrictions of Boolean functions.  Given
$f: \{0,1\}^n \to \{0,1\}$, $S \subseteq [n]$ and a string $z\in \{0,1\}^B$,
  the \emph{restriction of $f$ over $B$ by $z$}, denoted by $f\uhr_z$, is the Boolean function
$g: \{0,1\}^{\overline{B}} \to \{0,1\}$ defined by
$g(x) = f(x\circ z)$ for all $x\in \{0,1\}^{\overline{B}}$.  

\medskip
\noindent {\bf Distribution-free property testing.}
Now we can define distribution-free property testing:

\begin{definition}
We say an algorithm $A$ has \emph{oracle access} to a pair $(f,\D)$, where $f:\{0,1\}^n\rightarrow \{0,1\}$~is an unknown Boolean function and $\D$ is an unknown probability distribution over $\{0,1\}^n$,
  if it can (1) access $f$ via a black-box oracle that returns $f(x)$ when a string $x\in \{0,1\}^n$
  is queried, and (2) access $\D$ via a sampling oracle that, upon each request, returns a pair $(\bx,f(\bx))$
  where $\bx\leftarrow \D$ independently. 

Let \(\mathfrak{C}\) be a class of Boolean functions.  
A {\em distribution-free testing algorithm \(A\) for \(\mathfrak{C}\)} 
  is a randomized algorithm that, given as input a distance parameter $\eps>0$
  and oracle access to a pair $(f,\D)$, 
  accepts with probability at least $2/3$ if $f\in \mathfrak{C}$ and
  rejects with probability at least $2/3$ if $f$ is $\eps$-far from $\mathfrak{C}$ with respect to $\D$. 
We say $A$ is \emph{one-sided} if it always accepts when $f\in \mathfrak{C}$.
The \emph{query complexity} of a distribution-free testing algorithm  
  is the number of queries made on $f$ plus the number of samples drawn from $\D$.
\end{definition}

One may assume 
  without loss of generality that a distribution-free 
  testing algorithm consists of two phases: 
In the first phase, the algorithm draws a certain number of sample pairs $(\bx,f(\bx))$ from $\D$;~in~the second phase, it makes black-box queries to $f$. 
In general, a query $x\in \{0,1\}^n$~made by the algorithm in the second phase may
  depend on sample pairs it receives in the first phase
  (e.g. it can choose to query a string that is close to a sample received in the first phase) 
  and~results of queries to $f$ made so far.
In Section~\ref{sec:lower} we will prove lower bounds on 
  \emph{non-adaptive} distribution-free testing algorithms.  
An algorithm is said to be non-adaptive if its black-box queries 
  made in the second phase do not depend on results of previous black-box queries,
  i.e., all queries during the second phase can be made in a single batch (though we emphasize that
  they may depend on samples the algorithm received in the first phase).
  
  

\medskip \noindent {\bf Juntas and literals.}
We study the distribution-free testing of the class of $k$-juntas.  Recall that a Boolean function $f$ is a
$k$-junta if it depends on at most $k$ variables. More precisely, $f$ is a $k$-junta if there exists a list $1 \leq i_1 < \cdots < i_{k} \leq n$ of $k$ indices and a Boolean function $g: \{0,1\}^{k} \to \{0,1\}$ over $k$
  variables such that $f(x_1,\dots,x_n) = g(x_{i_1},\dots,x_{i_{k}})$ for all $x \in \{0,1\}^n$.


\ignore{\begin{definition}[\(k\)-junta]
  A boolean function \(f:\{0,1\}^n\rightarrow\{0,1\}\) is {\em
    \(k\)-junta} if the value of \(f\) depends on at most \(k\)
  variables.
\end{definition}
}
We say that a Boolean function \(f\) is a \emph{literal} if $f$ depends on exactly one variable, i.e. for some
$i \in [n]$, we have that either $f(x) = x_i$ for all $x$ or $f(x)= \overline{x_i}$ for all $x$.  Note that the two constant (all-$1$ and all-$0$) functions are one-juntas but are \emph{not} literals.

We often use the term ``\emph{block}'' to refer to a \emph{nonempty} subset of $[n]$, 
  which should be interpreted as a nonempty subset of the $n$ variables of a Boolean function
  $f:\{0,1\}^n\rightarrow \{0,1\}$.
The following definition of distinguishing pairs and relevant blocks will 
  be heavily used in our algorithms.
  
\ignore{
Through the paper, we use the notation {\em block} to denote a subset
of \([n]\). We say {\em random partition} of a set \(S\) into \(s\)
blocks if each elements in \(S\) is assigned to each of \(s\) blocks
uniformly and independently at random.
}

\begin{definition}[Distinguishing pairs and relevant blocks]
Given \(x,y\in\{0,1\}^n\) and a block $B \subseteq [n]$, we say that \((x,y)\) is a {\em distinguishing
    pair for $B$} if
  \(x_{\overline{B}} = y_{\overline{B}}\) and \(f(x) \neq f(y)\).
We say $B$ is a \emph{relevant} block of $f$ if such a distinguishing pair exists for $B$
  (or equivalently, the influence of $B$ in $f$ is positive).

When $B=\{i\}$ is a relevant block we simply say that the $i$-th variable is relevant to $f$.  
\end{definition}

As will become clear later, all our algorithms reject a function $f$ only when 
  they have found $k+1$ pairwise disjoint blocks $B_1,\ldots,B_{k+1}$ and a distinguishing pair
  for each $B_i$.
When this occurs, it means that $B_1,\ldots,B_{k+1}$ are pairwise disjoint relevant blocks of $f$,
  which implies that $f$ cannot be a $k$-junta. 
As a result, our algorithms are one-sided.
To prove their correctness, it suffices to show that they
  reject with probability at least $2/3$ when $f$ is $\eps$-far from $k$-juntas with respect to~$\D$.

For the standard property testing model under the uniform distribution, Blais \cite{Blaisstoc09}
  obtained a nearly optimal algorithm:

\begin{theorem}[\cite{Blaisstoc09}]\label{blaistheorem}
  There exists a one-sided, \(O((k/\epsilon) + k\log k)\)-query algorithm \emph{\textbf{UniformJunta}}$(f,k,\eps)$
  that rejects $f$ with probability at least $2/3$
  when it is $\eps$-far from $k$-juntas under the uniform distribution.
Moreover, it rejects only when it has found $k+1$ pairwise disjoint blocks and
  a distinguishing pair of $f$~for each of them. 
\end{theorem}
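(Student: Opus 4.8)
The plan is to instantiate the adaptive, block-based algorithm sketched in the introduction and reduce its correctness to a single structural lemma describing how the uniform-distribution distance to $k$-juntas degrades under a random partition of the variables.

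\smallskip
\emph{The algorithm.} Partition $[n]$ uniformly at random into $s=\poly(k)$ blocks $B_1,\dots,B_s$, and maintain a family $\mathcal{I}$ of pairwise disjoint relevant blocks, together with a distinguishing pair for each; initially $\mathcal{I}=\emptyset$. Run $O(k/\eps)$ rounds. In a round: first, for each $B\in\mathcal{I}$, test (via the uniform-distribution subroutine of \cite{Blaisstoc09}) whether the subfunction that $f$ induces on $B$, when the coordinates outside $B$ are fixed as in $B$'s recorded distinguishing pair, is close to a $1$-junta; if for some $B$ this test produces two disjoint relevant sub-blocks of $B$ (certifying that it is far from every $1$-junta), replace $B$ by those two sub-blocks --- which increments $|\mathcal{I}|$ --- and move on. Otherwise, sample $\bx,\by\in\{0,1\}^n$ uniformly at random conditioned on $\bx$ and $\by$ agreeing on every block of $\mathcal{I}$, and query $f(\bx),f(\by)$; if $f(\bx)\ne f(\by)$, then $\diff(\bx,\by)$ is disjoint from $\bigcup_{B\in\mathcal{I}}B$, and a binary search over the blocks outside $\mathcal{I}$ (halving the live set of blocks and re-querying at each step) isolates, in $O(\log s)=O(\log k)$ queries, a single block together with a distinguishing pair for it, which is then added to $\mathcal{I}$. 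The algorithm rejects the first time $|\mathcal{I}|$ reaches $k+1$, and otherwise accepts. Rejection occurs only after $k+1$ pairwise disjoint relevant blocks with distinguishing pairs have been produced, so the algorithm is one-sided and satisfies the structural condition in the statement; its cost is $O(k/\eps)$ sampled pairs plus $O(k)$ binary-search-type sub-procedures of $O(\log k)$ queries each, i.e.\ $O(k/\eps+k\log k)$.

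\smallskip
\emph{Reduction to a progress lemma.} By one-sidedness it suffices to show that if $f$ is $\eps$-far from $k$-juntas under the uniform distribution then the algorithm rejects with probability at least $2/3$. For a family $\mathcal{I}$ of at most $k$ disjoint blocks, set $S_{\mathcal{I}}:=\bigcup_{B\in\mathcal{I}}B$. In a round using $\mathcal{I}$, the pair $(\bx,\by)$ is precisely ``$\bx$ together with an independent rerandomization of the coordinates outside $S_{\mathcal{I}}$'', so $\Pr[f(\bx)\ne f(\by)]$ equals the influence $\mathsf{Inf}_f(\overline{S_{\mathcal{I}}}):=\Pr_{\bx\leftarrow\{0,1\}^n,\,\bz\leftarrow\{0,1\}^{\overline{S_{\mathcal{I}}}}}\big[f(\bx)\ne f(\bx_{S_{\mathcal{I}}}\circ\bz)\big]$, which by a standard fact is at least $\dist_{\mathcal{U}}\big(f,\{g:g\text{ depends only on coordinates in }S_{\mathcal{I}}\}\big)$. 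The \textbf{Progress Lemma} I would establish is: with probability at least $0.9$ over the random partition, for \emph{every} family $\mathcal{I}$ of at most $k$ disjoint blocks in which every induced subfunction is close to a $1$-junta, $\mathsf{Inf}_f(\overline{S_{\mathcal{I}}})\ge\Omega(\eps)$. Granting this and conditioning on a good partition: in each round either the split branch fires (so $|\mathcal{I}|$ grows with at least constant probability) or every induced subfunction is close to a $1$-junta, in which case the sampling branch finds a fresh relevant block with probability $\ge\Omega(\eps)$; thus, while $|\mathcal{I}|\le k$, every round increments $|\mathcal{I}|$ with probability $\ge\Omega(\eps)$. Hence over $O(k/\eps)$ rounds the number of increments stochastically dominates a binomial with mean $\Omega(k)$, and a Chernoff bound shows it reaches $k+1$ --- i.e.\ the algorithm rejects --- with probability at least $0.9$; a union bound over the two bad events (and the easily boosted failure probabilities of the binary searches and split tests) finishes the reduction.

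\smallskip
\emph{The main obstacle.} The technical core is the Progress Lemma, equivalently its contrapositive: on a good partition, if some family $\mathcal{I}$ of at most $k$ disjoint blocks with all induced subfunctions close to $1$-juntas had $\mathsf{Inf}_f(\overline{S_{\mathcal{I}}})$ small, then $f$ would be $O(\eps)$-close to a genuine $k$-junta, contradicting the hypothesis. The difficulty is that $S_{\mathcal{I}}$ is a union of only $k$ blocks yet can contain arbitrarily many coordinates, so ``$f$ is well-approximated by a function on $S_{\mathcal{I}}$'' is much weaker than ``$f$ is close to a $k$-junta'' (for instance a parity of several blocks is such a function while being $1/2$-far from every $k$-junta). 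Overcoming this requires using that each block of $\mathcal{I}$ acts essentially like a single variable, that a $\poly(k)$-way random partition both separates the ``heavy'' coordinates of $f$ into distinct blocks and dilutes its low-influence structure across all $s$ blocks, and that ``close to a $1$-junta at one recorded distinguishing pair'' can be upgraded to an averaged statement over restrictions. This is exactly where \cite{Blaisstoc09} brings in the Efron--Stein orthogonal decomposition of $f$ over the product structure induced by the partition, together with careful coordinate- and block-influence estimates; I expect essentially all of the real work to reside in this step, the algorithm and the reduction above being routine once the lemma is available. (As the excerpt notes, this decomposition argument goes through verbatim for any product distribution over $\{0,1\}^n$, and more generally.)
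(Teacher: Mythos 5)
This statement is Theorem~\ref{blaistheorem}, which the paper cites from~\cite{Blaisstoc09} and does not prove --- so there is no in-paper proof to compare against. Evaluating your proposal on its own terms: it is not a proof, because the entire technical content is deferred to the ``Progress Lemma,'' which you state but do not establish. You correctly identify the central obstacle (a union of $k$ blocks can contain arbitrarily many coordinates, so ``well-approximated by a function of $S_{\mathcal{I}}$'' is far weaker than ``close to a $k$-junta''), and you correctly observe that overcoming it is exactly where Blais's Efron--Stein decomposition and block-influence estimates live --- but you then write ``I expect essentially all of the real work to reside in this step'' and stop. That is the step; without it the argument does not go through, and the remaining pieces (binary search over blocks, Chernoff over rounds) are routine bookkeeping.

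Two secondary issues are worth flagging. First, the algorithm you sketch is not Blais's: the step that tests each maintained block's induced subfunction against $1$-juntas and splits it when far is imported from the present paper's \textbf{MainDJunta} (Figure~\ref{fig:maindjunta}, lines 17--23), not from~\cite{Blaisstoc09}. As the paper describes it, Blais's tester has no such splitting phase; the analysis is handled globally via influence and the orthogonal decomposition over the product structure the partition induces, not by maintaining an invariant that each kept block ``acts like one variable.'' Your Progress Lemma is also not a clean combinatorial statement as written: it quantifies over families $\mathcal{I}$ via a condition (``every induced subfunction is close to a $1$-junta under $B$'s recorded distinguishing pair'') that depends on the algorithm's execution history, so it is not clear what is being asserted over the randomness of the partition alone. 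Second, invoking ``the uniform-distribution subroutine of~\cite{Blaisstoc09}'' inside an attempted proof of the theorem being attributed to~\cite{Blaisstoc09} is circular, or at minimum needs to be replaced by an independently justified $1$-junta tester.
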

\noindent {\bf Binary Search.}
The standard binary search procedure (see Figure \ref{fig:binarysearch}) takes as input two strings 
  $\smash{x,y\in \{0,1\}^n}$
  with $\smash{f(x)\ne f(y)}$, makes $\smash{O(\log n)}$ queries on $f$, and returns 
  a pair of strings~$\smash{x',y'} \in  \{0,1\}^n$ with $f(x')\ne f(y')$ and $x'=y'^{(i)}$ for some 
  $i\in \diff(x,y)$,
  i.e., a distinguishing pair for the $i$-th variable for some $i\in \diff(x,y)$.
\ignore{
%
}

\begin{figure}[t!]

\fbox{
\begin{minipage}{42em}

\begin{flushleft}
\noindent \textbf{Procedure} \textbf{BinarySearch}$(f,x,y)$\\
\noindent {\bf Input:} Query access to $f \colon \{0, 1\}^n \to \{0, 1\}$
  and two strings $x,y\in \{0,1\}^n$ with $f(x)\ne f(y)$.

\noindent {\bf Output:}  Two strings $x',y'\in \{0,1\}^n$ with $f(x')\ne f(y')$ and 
  $x'=y'^{(i)}$ for some $i\in \diff(x,y)$. 
\begin{enumerate}[itemsep=-0.5mm]
\item Let $B\subseteq [n]$ be the set such that $x=y^{(B)}$.
\item If $|B|=1$ return $x$ and $y$.
\item Partition (arbitrarily) $B$ into $B_1$ and $B_2$ of size $\lfloor |B|/2\rfloor$ and
  $\lceil |B|/2\rceil$, respectively.
\item Query $f(x^{(B_1)})$. 
\item If $f(x)\ne f(x^{(B_1)})$, return \textbf{BinarySearch}$(f,x,x^{(B_1)})$.
\item Otherwise, return \textbf{BinarySearch}$(f,x^{(B_1)},y)$.

\end{enumerate}
\end{flushleft}

\caption{Description of the standard binary search procedure.} \label{fig:binarysearch}

\end{minipage}
}\vspace{0.2cm}
\end{figure}

However, we cannot afford to use the standard binary search procedure
  directly in our main algorithm due to its query complexity of $O(\log n)$;
  recall our goal is to have the query complexity depend on $k$ only.  
Instead we will employ a blockwise version of the 
  binary search procedure,~as described in Figure \ref{fig:blockbinarysearch}.
It takes as input two strings $x,y\in \{0,1\}^n$ with $f(x)\ne f(y)$ and a sequence of pairwise disjoint
  blocks $B_1,\ldots,B_r$ such that $$\diff(x,y)\subseteq B_1\cup \cdots \cup B_r $$ (i.e., $(x,y)$ is a distinguishing pair for $B_1\cup \cdots \cup B_r$), makes $O(\log r)$ queries on $f$, and returns~two strings $x',y'\in \{0,1\}^n$
  satisfying $f(x')\ne f(y')$ and $\diff(x',y')\subseteq B_i$ for some $i\in [r]$
  (i.e., $(x',y')$ is a distinguishing pair for one of the blocks $B_i$ in the input).

\begin{figure}[t!]

\fbox{
\begin{minipage}{42em}

\begin{flushleft}
\noindent \textbf{Procedure} \textbf{BlockBinarySearch}$(f,x,y;B_1,\ldots,B_r)$\\
\noindent {\bf Input:} Query access to $f \colon \{0, 1\}^n \to \{0, 1\}$,
  two strings $x,y\in \{0,1\}^n$ with $f(x)\ne f(y)$,
  and a sequence of pairwise disjoint blocks $B_1,\ldots,B_r$ for some $r\ge 1$ with $\diff(x,y) \subseteq B_1 \cup \cdots \cup B_r$.

\noindent {\bf Output:}  Two strings $x',y'\in \{0,1\}^n$ with $f(x')\ne f(y')$ and 
  $\diff(x,y)\subseteq B_i$ for some $i\in [r]$. 
\begin{enumerate}[itemsep=-0.5mm]
\item If $r=1$ return $x$ and $y$.
\item Let $t=\lfloor r/2\rfloor$ and 
  $B$ be the intersection of $\diff(x,y)$ and $B_1\cup \cdots \cup B_t$.
\item Query $f(x^{(B)})$. 
\item If $f(x)\ne f(x^{(B)})$, return \textbf{BlockBinarySearch}$(f,x,x^{(B)};B_1,\ldots,B_t)$.
\item Otherwise, return \textbf{BlockBinarySearch}$(f,x^{(B)},y; B_{t+1},\ldots,B_r)$.

\end{enumerate}
\end{flushleft}
\caption{Description of the blockwise version of the binary search procedure.} \label{fig:blockbinarysearch}

\end{minipage}
}
\end{figure}



\def\bR{\mathbf{R}}
\def\bP{\mathbf{P}}
\def\bQ{\mathbf{Q}}
\def\bw{\mathbf{w}}
\def\bX{\mathbf{Z}}
\def\bS{\mathbf{S}}
\def\bT{\mathbf{T}}
\def\bC{\mathbf{C}}

\section{Warmup: A tester with $O((k/\eps) + k \log n)$ queries} \label{sec:warmup}

As a warmup, we present in this section a simple, one-sided distribution-free algorithm
  for  testing  $k$-juntas (\textbf{SimpleDJunta}, where the capital letter $D$ is a shorthand for
  distribution-free).~It~uses 
  $O((k/ \eps) + k \log n)$ queries, where $n$ as usual denotes the number of variables
  of the function~being tested.
The idea behind \textbf{SimpleDJunta} and its analysis (Lemma \ref{influence} below)
  will be useful in the next section where we present our main algorithm
  to remove the dependency on $n$.

The algorithm \textbf{SimpleDJunta} maintains a set $I\subset [n]$ 
  which is such that  a distinguishing pair has been found for each $i\in I$
  (i.e., $I$ is a set of relevant variables of $f$ discovered so far).
The algorithm sets $I=\emptyset$ at the beginning and
  rejects only when $|I|$ reaches $k+1$,
  which implies immediately  that the algorithm is one-sided. 
\textbf{SimpleDJunta} proceeds round by round. In each round it draws
  a pair of random strings $\bx$ and $\by$ with $\bx_I=\by_I$.
If $f(\bx)\ne f(\by)$, the standard binary search procedure is used on $\bx$ and $\by$
  to find a distinguishing pair 
  for a new variable $i\in \overline{I}$, which is then added to $I$.

The description of the algorithm can be found in Figure \ref{SimpleDJunta}.
The following theorem establishes its correctness. 

%

\begin{figure}[t!]

\fbox{
\begin{minipage}{42em}

\begin{flushleft}
\noindent \textbf{Algorithm} \textbf{SimpleDJunta}$(f,\D,k,\eps)$\\
\noindent {\bf Input:} Oracle access to a Boolean function $f \colon \{0, 1\}^n \to \{0, 1\}$ and 
  a probability distribution\\ $\D$ over $\{0,1\}^n$,
  a positive integer $k$, and a distance parameter $\eps>0$. 

\noindent {\bf Output:}  Either ``accept'' or ``reject.''
\begin{enumerate}[itemsep=-0.5mm]
\item Set $I=\emptyset$.
\item Repeat {{$ 8({k+1})/{\epsilon}$}} times:
\item \ \ \ \ \ \ Sample $\bx \leftarrow \D$ and a subset $\bR$ of $\overline{I}$ uniformly at random.
  Set $\by=\bx^{(\bR)}$. 
\item \ \ \ \ \ \ {If} $f(\bx)\not=f(\by)$, {then} 
  run the standard binary search on $\bx,\by$ to
   find  a distinguishing
\item \ \ \ \ \ \ \ \ \ \ \ \  pair for a new relevant variable $i\in \bR\subseteq \overline{I}$.
Set $I=I\cup \{i\}$.
\item \ \ \ \ \ \ If $|I|>k$, then halt and output ``reject.''
\item Halt and output ``accept.''
\end{enumerate}
\end{flushleft}
\caption{Description of the distribution-free testing algorithm \textbf{SimpleDJunta} for $k$-juntas.} \label{SimpleDJunta}

\end{minipage}
}

\end{figure}


\begin{theorem}\label{klogn}
(i) The algorithm \emph{\textbf{SimpleDJunta}} makes $O( ({k}/{\epsilon})+k\log n)$ queries and always accepts 
  when $f$ is a $k$-junta.
(ii) It rejects with probability at least $2/3$ if $f$ is $\epsilon$-far from $k$-juntas with respect to~$\calD$. 
\end{theorem}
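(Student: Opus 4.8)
The plan is to treat the two parts of Theorem~\ref{klogn} separately. Part (i) is immediate from the structure of \textbf{SimpleDJunta}. Each variable added to $I$ comes equipped with a distinguishing pair, so it is genuinely relevant to $f$; and each invocation of the binary search on $(\bx,\bx^{(\bR)})$ returns a distinguishing pair for some coordinate in $\diff(\bx,\bx^{(\bR)})=\bR\subseteq\overline{I}$, which is therefore a \emph{new} variable, so $|I|$ increases by exactly one on each round in which $f(\bx)\ne f(\by)$. If $f$ is a $k$-junta it has at most $k$ relevant variables, hence $|I|$ never exceeds $k$ and the algorithm never rejects; this gives one-sidedness. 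For the query bound, each of the $8(k+1)/\eps$ iterations costs one sample from $\D$ (which also supplies $f(\bx)$) and one black-box query for $f(\by)$, contributing $O(k/\eps)$ queries in total; and the binary search, costing $O(\log n)$ queries, is invoked only when $f(\bx)\ne f(\by)$, which happens at most $k+1$ times before $|I|$ reaches $k+1$ and the run halts, contributing $O(k\log n)$.

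For part (ii) I would first establish the influence lemma (Lemma~\ref{influence}): if $f$ is $\eps$-far from $k$-juntas with respect to $\D$ and $I\subseteq[n]$ with $|I|\le k$, then $\Pr_{\bx\leftarrow\D,\ \bR\leftarrow\overline{I}}\big[f(\bx)\ne f(\bx^{(\bR)})\big]\ge\eps/2$. Define $g(x)=\mathrm{Maj}_{\bz\leftarrow\{0,1\}^{\overline{I}}}f(x_{I}\circ\bz)$, breaking ties arbitrarily; then $g$ depends only on coordinates in $I$, so it is a $k$-junta and $\Pr_{\bx\leftarrow\D}[f(\bx)\ne g(\bx)]\ge\eps$. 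The key point is that for a fixed $x$, the string $x^{(\bR)}$ with $\bR\leftarrow\overline{I}$ is distributed uniformly over the fiber $\{y:y_{I}=x_{I}\}$; hence whenever $f(x)\ne g(x)$ the value $f(x)$ is a minority value of $f$ on that fiber, so $\Pr_{\bR\leftarrow\overline{I}}[f(x^{(\bR)})\ne f(x)]\ge 1/2$. Averaging over $\bx\leftarrow\D$ yields the claimed bound.

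Given the lemma, I would finish as follows. Conditioned on the history at the start of any round, as long as $|I|\le k$ (equivalently, fewer than $k+1$ successful rounds have occurred) the influence lemma applied to the current set $I$ shows the round succeeds (i.e.\ $f(\bx)\ne f(\by)$) with probability at least $\eps/2$, irrespective of the past. So I would couple the execution with i.i.d.\ $Z_{1},\dots,Z_{T}\sim\mathrm{Bernoulli}(\eps/2)$, $T=8(k+1)/\eps$, so that a round succeeds whenever its $Z_{t}=1$ and fewer than $k+1$ successes have occurred so far; a one-line induction then shows the number of successes after $T$ rounds is at least $\min\{k+1,\sum_{t}Z_{t}\}$. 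Since the algorithm rejects once it has collected $k+1$ successes, $\Pr[\text{reject}]\ge\Pr[\sum_{t=1}^{T}Z_{t}\ge k+1]$. As $\mathbb{E}[\sum_{t}Z_{t}]=4(k+1)=4(k+1)$ and $k\le \tfrac14\mathbb{E}[\sum_{t}Z_{t}]$, a multiplicative Chernoff bound gives $\Pr[\sum_{t}Z_{t}\le k]\le e^{-9(k+1)/8}<1/3$, and thus the algorithm rejects with probability greater than $2/3$.

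The conceptual heart — and the main obstacle — is the influence lemma, specifically the observation that flipping a uniform random subset of $\overline{I}$ realizes the uniform measure on the $I$-fiber, which is exactly what converts ``$f(\bx)\ne g(\bx)$'' into a constant-probability chance of exposing a discrepancy via a single batch flip. The rest is routine, the only points requiring care being to condition correctly on the per-round history (so that the $\eps/2$ success bound still applies even though $I$ grows during the run) and to verify that the constant in the Chernoff bound genuinely beats $1/3$.
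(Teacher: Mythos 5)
Your proof is correct and follows essentially the same path as the paper's: the same majority-over-the-fiber $k$-junta in the influence lemma (the paper calls it $h$, you call it $g$), the same observation that flipping a uniformly random subset of $\overline{I}$ realizes the uniform measure on the $I$-fiber, and the same coupling-plus-Chernoff argument to convert the per-round $\eps/2$ success probability over $8(k+1)/\eps$ rounds into a rejection probability above $2/3$.
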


\begin{proof}
For part (i), note that the algorithm only runs binary search (and spends 
  $O(\log n)$ queries) when $f(\bx)\ne f(\by)$
  and this happens at most $k+1$ times (even though the algorithm has 
  $O(k/\eps)$ rounds).  
The rest of part (i) is immediate from the description of the algorithm.
  
For part (ii), it suffices to show that when $|I|\le k$ at the beginning of a round,
  a new relevant variable is discovered in this round with at least a modestly large probability.
For this purpose we use the following simple but crucial lemma
  and note the fact that $\bx$ and $\by$ on line 3 can be equivalently drawn
  by first sampling $\bx\leftarrow \calD$ and $\bw\leftarrow \{0,1\}^n$ and then setting $\by=\bx_I\circ \bw_{\overline{I}}$
  (the way we draw $\bx$ and $\by$ in Figure \ref{SimpleDJunta} via $\bR\leftarrow
  \overline{I}$ makes
  it easier to connect with the main algorithm in the next section).

\begin{lemma}\label{influence}
If $f$ is $\epsilon$-far from $k$-juntas with respect to $\calD$, then for any  $I\subset [n]$ of size at most $k$, we have
\begin{equation}\label{hahaeq}\Pr_{\bx \leftarrow \calD, \bw \leftarrow \{0,1\}^n}\big[
f(\bx)\not=f(\bx_I\circ \bw_{\overline{I}})\big]\ge {\epsilon}/{2}.\end{equation}
\end{lemma}

Before proving Lemma \ref{influence}, we use it to finish the proof of part (ii).
Assuming Lemma~\ref{influence}~and that $f$ is $\eps$-far from $k$-juntas
  with respect to $\D$, for each round in which $|I| \leq k$ the algorithm finds a 
  new relevant variable with probability at least $\eps/2$.
Using a coupling argument, the~probability that
  the algorithm rejects $f$ (i.e., $|I|$ reaches $k+1$ during the $8(k+1)/\eps$ rounds)
  is at least the probability that $$\sum_{i=1}^{8(k+1)/\eps} \bX_i\ge k+1,$$
  where $\bX_i$'s are i.i.d. $\{0,1\}$-variables that are $1$ with probability $\eps/2$.
It follows from the Chernoff bound that the latter probability is at least $2/3$.
This finishes the proof of the theorem.\ignore{\footnote{\textbf{Xi:} I think you cannot just use the Markov here.
  Even though the expectation is large ($3(k+1)$ before or $4(k+1)$ now) it does not imply that the probability of
  the sum $\ge k+1$ is large.}}
\end{proof}

\begin{proof}[Proof of Lemma \ref{influence}]
Let $I$ be a subset of $[n]$ of size at most $k$.
To prove (\ref{hahaeq}) for $I$, 
we use $I$ to define the following Boolean function $h:\{0,1\}^n\rightarrow \{0,1\}$
  over $n$ variables: for each $x\in \{0,1\}^n$ we set
$$h(x):=\underset{b\in\{0,1\}}{\arg \max}\text{}\left\{\Pr_{\bw\leftarrow \{0,1\}^n}\big[f(x_I\circ \bw_{\overline{I}})=b\big]\right\},$$
where we break ties arbitrarily.
Then for any $x\in \{0,1\}^n$, we have 
\begin{equation}\label{hahaeq2}
\Pr_{\bw\leftarrow\{0,1\}^n}\big[f(x_I\circ\bw_{\overline{I}})=h(x)\big]\ge{1}/{2}.
\end{equation}

Furthermore, we have
\begin{align*}
&\hspace{-1.2cm}\Pr_{\bx \leftarrow \calD, \bw \leftarrow \{0,1\}^n}\big[
f(\bx)\not=f(\bx_I\circ\bw_{\overline{I}})\big]\\[0.6ex]
&=\sum\limits_{z\in\{0,1\}^n}\Pr_{\bx\leftarrow\calD}\big[\bx=z\big] \cdot \Pr_{\bw\leftarrow\{0,1\}^n}\big[
f(z)\not=f(z_I\circ\bw_{\overline{I}})\big]\\[0.6ex]
&\ge \sum\limits_{z\in\{0,1\}^n}\Pr_{\bx\leftarrow\calD}\big[\bx=z\big]
  \cdot \Big((1/2)\cdot \mathbf{1}\big[f(z)\not=h(z)\big]\Big)\\[0.6ex]
&= (1/2)\cdot \Pr_{\bx\leftarrow\calD}\big[f(\bx)\not=h(\bx)\big] \ge \eps/2,
\end{align*}
where the first inequality follows from (\ref{hahaeq2}) and 
  the second inequality follows from the assumption that $f$ is $\eps$-far from 
  every $k$-junta with respect to $\D$ and the fact that $h$ is a $k$-junta
  (since it only depends on variables in $I$ and $|I|\le k$).
This finishes the proof of the lemma.
\end{proof}



\section{Proof of Theorem~\ref{main}:  A tester with $\red{\tilde{O}(k^2)/\eps}$ queries} 
\label{sec:our-algorithm}

In this section, we present our main $\red{\tilde{O}(k^2)/\eps}$-query 
  algorithm for the distribution-free testing of $k$-juntas.
We start with some intuition behind the algorithm.

\subsection{Intuition}\label{sec:intuition}

Recall that the factor of $\log n$ in the query complexity of \textbf{SimpleDJunta}
  from the previous section is due to the use of the standard binary search procedure.
To avoid it, one could choose to terminate each call to binary search early 
  but this ends up giving us relevant \emph{blocks} of variables instead~of relevant variables.
To highlight the challenge, imagine that the algorithm has found so far 
  $\ell\le k$ many pairwise disjoint relevant blocks $B_j$, $j\in [\ell]$, i.e., it has found a distinguishing pair
   for each block $B_j$. By~definition,
  each $B_j$ must contain at least one relevant variable~$i_j\in B_j$. 
However, we do not know exactly  which variable in $B_j$ is $i_j$, and thus it is not clear how to draw a set $\bR$ from $\overline{I}$ uniformly at random,
  where $I=\{i_j:j\in [\ell]\}$, as on line 3 of \textbf{SimpleDJunta},
  in order~to~apply Lemma \ref{influence} to discover a new relevant block.
It seems that we are facing a dilemma when trying to improve \textbf{SimpleDJunta}
  to remove the $\log n$ factor: unless we pin down a set of relevant variables,
  it is not clear how to draw a random set from their complement, but pinning down a single relevant variable
  using the standard binary search procedure
would already cost $\log n$ queries.

To explain the main idea behind our $\red{\tilde{O}(k^2)/\eps}$-query algorithm, let's assume again
  that $\ell\le k$ many disjoint relevant blocks $B_j$ have been found so far,
  with a distinguishing pair $(x^{[j]},y^{[j]})$ for each $B_j$
  (satisfying that $\diff(x^{[j]},y^{[j]})\subseteq B_j$ and 
  $f(x^{[j]})\ne f(y^{[j]})$ by definition). Let 
$$
w^{[j]}=\left(x^{[j]}\right)_{\overline{B_j}}=\left(y^{[j]}\right)_{\overline{B_j}}\in \{0,1\}^{\overline{B_j}}.
$$
Next let us assume further that the function $g_j:=f\uhr_{w^{[j]}}$, for each $j\in [\ell]$, is a \emph{literal}, i.e.~either
  $g_j(z)=z_{i_j}$ for all $z\in \{0,1\}^{B_j}$ or $g_j(z)=\overline{z_{i_j}}$ for all $z\in \{0,1\}^{B_j}$, for some variable $i_j\in B_j$,
  but the variable $i_j$ is of course unknown to the algorithm.
(While this may seems very implausible, we make this assumption for now and explain
  below why it is not too far from real situations.)

To make progress, we draw a random two-way partition of 
  each $B_j$ into $\bP_j$ and $\bQ_j$, i.e., each $i\in B_j$ is added to $\bP_j$ or $\bQ_j$
  with probability $1/2$ (so they are disjoint and $B_j=\bP_j\cup \bQ_j$). 
We make three simple but crucial observations to increase the number of disjoint relevant blocks by one.
\begin{flushleft}\begin{enumerate}
\item Since $g_j$ is assumed to be a literal on the $i_j$-th variable (and by
  the definition of $g_j$ we have query access to $g_j$), it is easy to   
  tell whether $i_j\in \bP_j$ or $i_j\in \bQ_j$, simply by picking an arbitrary string $x \in \{0,1\}^{B_j}$ and comparing $g_j(x)$ with $g_j(x^{(\bP_j)})$.\ignore{
In particular, if $i_j\in \bP_j$, then
  by picking an arbitrary string $x\in \{0,1\}^{B_j}$ and flipping all its bits in $\bP_j$,
  we have that $g_j(x)\ne g_j(x^{(\bP_j)})$ (which is 
  a distinguishing pair for $\bP_j$);
  on the other hand, if $i_j\in \bQ_j$ then this can never happen.}
Below we assume that the algorithm correctly determines whether $i_j$ is in $\bP_j$ or $\bQ_j$ for all $j \in [\ell]$.  We let $\bS_j$ denote the element of $\{\bP_j,\bQ_j\}$ that contains $i_j$ and let $\bT_j$ denote the other one. We also assume below that the algorithm has obtained a distinguishing pair of $g_j$ for each block $\bS_j$.\ignore{figures out where $i_j$ goes for all $j\in [\ell]$,
  letting $\bS_j$ be the one that contains $i_j$ and letting $\bT_j$ be the other,
  and that the algorithm has also found a distinguishing pair of $g_j$ for each block $\bS_j$.}
\item Next we draw a subset $\bT$ of $\overline{B_1\cup\cdots\cup B_\ell}$ uniformly at random.
Crucially, the way that $\bP_j$ and $\bQ_j$ were drawn, and the above assumption that $\bS_j$ contains $i_j$, implies that
\ignore{
Then the way we draw $\bP_j$ and $\bQ_j$ and the assumption that we always figure out
  where $i_j$ goes imply that}
$$
\bR:=\bT\cup \bT_1\cup\cdots \cup \bT_\ell
$$
is indeed a subset of $\overline{I}$ drawn uniformly at random (recall that $I=\{i_j:j\in [\ell]\}$)
  since other than those in $I$, each variable is included in $\bR$ independently 
  with probability $1/2$.
If we draw a random string $\bx\leftarrow \D$, then Lemma \ref{influence} implies that 
  $f(\bx)\ne f(\by)$, where $\by=\bx^{(\bR)}$, with probability at least $\eps/2$.
\item Finally, assuming that $f(\bx)\ne f(\by)$ (with $\diff(\bx,\by)=\bR$),
  running the blockwise binary search
  on $\bx,\by$ and blocks $\bT,\bT_1,\ldots,\bT_\ell$ will lead to 
  a distinguishing pair for one of these blocks and will only require $O(\log \ell) \le O(\log k)$ queries.
If it is a distinguishing pair for $\bT$, then we can add $\bT$ to the list of 
  relevant blocks $B_1,\ldots,B_\ell$ and they remain pairwise disjoint.
If it is $\bT_j$ for some $j\in [\ell]$, then we can replace $B_j$ in the list
  by $\bS_j$ and $\bT_j$, for each of which we have found a distinguishing pair
  (recall that a distinguishing pair has already been found for each $\bS_j$ in the first step).
In either case we have that the number of pairwise disjoint relevant blocks grows by one.
\end{enumerate}\end{flushleft}
  
Coming back to the assumption we made earlier, 
  although $g_j$ is very unlikely to~be a literal, it must fall into one of 
  the following three cases: (1) close to a literal; (2) close to a (all-$0$ or all-$1$)
  constant function; or (3) far from $1$-juntas.  Here in all cases ``close'' and ``far'' means
  with respect to the \emph{uniform distribution} over $\{0,1\}^{B_j}$.
As we discuss in more detail in the rest of the section, with some more careful probability analysis the above arguments generalize~to the case in which every $g_j$ is only close to (rather than exactly) a literal. 
On the other hand, if one of the blocks $B_j$ is in case (2) or (3), then (using the fact that we have a distinguishing pair for $B_j$) it is easy to split $B_j$ into two blocks and find a distinguishing pair for each of them. 
(For example, for case (3) this can be done by running Blais's uniform distribution junta testing algorithm.)
As a result, we can always make progress by increasing the number of 
  pairwise disjoint relevant blocks by one. Our algorithm basically keep repeating
  these steps until the number of such blocks reaches $k+1$.

\subsection{Description of the main algorithm and the proof of correctness}

Our algorithm $\textbf{MainDJunta}(f,\D,k,\eps)$  
  is described in Figure \ref{fig:maindjunta}.
  It maintains two collections of blocks 
$V=\{B_1,\ldots,B_v\}$ ($V$ for ``verified'') and 
  $U=\{C_1,\ldots,C_u\}$ ($U$ for ``unverified'') 
   for some nonnegative integers $v$ and $u$.
They are set to be $\emptyset$ at initialization and always satisfy the following properties:\vspace{0.06cm}
\begin{flushleft}\begin{itemize}
\item[] \textbf{(A)}. $B_1,\ldots,B_v,C_1,\ldots,C_u\subseteq [n]$ are pairwise disjoint
  (nonempty) blocks of variables; \vspace{-0.1cm}
\item[] \textbf{(B)}. A distinguishing pair has been found for each of these blocks.
For notational convenience we use $(x^{[j]},y^{[j]})$ to denote the distinguishing 
  pair for each $B_j$ and $(x^{[C]},y^{[C]})$ to denote the distinguishing pair
  for each block $C\in U$.
We also use the notation 
$$
w^{[j]}:=\left(x^{[j]}\right)_{\overline{B_j}}=\left(y^{[j]}\right)_{\overline{B_j}}\in \{0,1\}^{\overline{B_j}}\quad\text{and}\quad
w^{[C]}:=\left(x^{[C]}\right)_{\overline{C}}=\left(y^{[C]}\right)_{\overline{C}}\in \{0,1\}^{\overline{C}},
$$
and we let $g_j:=f\uhr_{w^{[j]}}$ and $g_C:=f\uhr_{w^{[C]}}$, Boolean functions over 
  $\{0,1\}^{B_j}$ and $\{0,1\}^C$, respectively.
\vspace{0.06cm}
\end{itemize}\end{flushleft}
The algorithm rejects only when the total number of blocks $v+u\ge k+1$ so it is one-sided.

Throughout the algorithm and its analysis, we set a key parameter
$
\gamma := 1/(8k).
$
Blocks in~$V$ are intended to be those that have been ``verified''
  to satisfy the condition that 
  $g_j$ is $\gamma$-close to a literal
  (for some unknown variable $i_j\in B_j$) under the uniform distribution,
  while blocks in $U$ have not been verified yet so they may or may not satisfy the condition.
More formally, at any point in the execution of the algorithm we say that the algorithm is in \emph{good condition} if
  its current collections $V$ and $U$ satisfy conditions \textbf{(A)}, \textbf{(B)} and 
\begin{enumerate}
\item[]\textbf{(C)}. Every $g_j$, $j\in [v]$, is $\gamma$-close to a literal under the uniform 
  distribution over $\{0,1\}^{B_j}$.
\end{enumerate}

The algorithm \textbf{MainDJunta}$(f,k,\eps)$ starts with $V=U=\emptyset$ and 
  proceeds round by round.~For each round, we consider two different types that the round may have:  type 1 is that $u=0$, and type 2 is that $u>0$.
In a type-1 round (with $u=0$) we follow the idea sketched in Section \ref{sec:intuition}
  to increase the total number of disjoint relevant blocks by one.
We prove the following lemma for this case in Section \ref{sec:prooflemma1}.

\begin{lemma}\label{mainlemma1}
Assume that \emph{\textbf{MainDJunta}} is in good condition at the beginning of a round,
  with $u=0$ and $v\le k$.
Then it must remain in good condition at the end of this round.
Moreover, letting $V'$ and $U'$ be the two collections of blocks at the end of this round,
  we have either $|V'|=v$ and $|U'|=1$, or $|V'|=v-1$ and $|U'|=2$
  with probability at least $\eps/4$.
\end{lemma}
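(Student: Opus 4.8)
The plan is to follow the three-step outline of Section~\ref{sec:intuition}, but track carefully the failure probabilities introduced by the fact that each $g_j$ is only $\gamma$-close to a literal rather than exactly a literal. Assume we enter the round in good condition with $u=0$ (so $U=\emptyset$ and $V=\{B_1,\dots,B_v\}$) and $v\le k$. First I would draw, for each $j\in[v]$, a uniformly random two-way partition of $B_j$ into $\bP_j$ and $\bQ_j$. Since $g_j$ is $\gamma$-close to a literal $\tau_j$ on some unknown coordinate $i_j\in B_j$, the coin-flip test ``compare $g_j(x)$ with $g_j(x^{(\bP_j)})$ for an arbitrary $x$'' correctly identifies whether $i_j\in\bP_j$ or $i_j\in\bQ_j$ \emph{provided} neither $x$ nor $x^{(\bP_j)}$ is one of the $\le\gamma 2^{|B_j|}$ points where $g_j$ disagrees with $\tau_j$; to make this robust I would repeat the comparison on a few independent random strings $x$ (or equivalently query $g_j$ at several random pairs differing exactly on $\bP_j$) and take a majority, driving the per-block failure probability below, say, $\gamma$ or any small constant over $k$. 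A union bound over the $v\le k$ blocks keeps the total misidentification probability a small constant. Conditioned on all identifications being correct, let $\bS_j$ be the part containing $i_j$ and $\bT_j$ the other; note that since $g_j$ is genuinely non-constant on $B_j$ (it has a distinguishing pair and is $\gamma$-close to an actual literal, $\gamma<1/2$), $\bS_j$ really is a relevant block and the comparison step simultaneously hands us a distinguishing pair for $\bS_j$ — again modulo the $\gamma$-error set, which the repeated-sampling majority also circumvents with high probability.

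Next, following observation~(2), I would draw $\bT\leftarrow\overline{B_1\cup\cdots\cup B_v}$ uniformly and set $\bR:=\bT\cup\bT_1\cup\cdots\cup\bT_v$. The key combinatorial point is that, conditioned on the (correct) choices of $\bS_j\ni i_j$, the set $\bR$ is distributed exactly as a uniform random subset of $\overline{I}$ where $I=\{i_1,\dots,i_v\}$: every coordinate outside $I$ lands in $\bR$ independently with probability $1/2$ (coordinates of $B_j\setminus\{i_j\}$ via the partition, coordinates outside all $B_j$ via $\bT$), and no coordinate of $I$ is ever in $\bR$. Since $|I|=v\le k$, Lemma~\ref{influence} applies and gives $\Pr_{\bx\leftarrow\D}[f(\bx)\ne f(\bx^{(\bR)})]\ge\eps/2$. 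When this event occurs, set $\by=\bx^{(\bR)}$ so that $\diff(\bx,\by)\subseteq\bT\cup\bT_1\cup\cdots\cup\bT_v$, run \textbf{BlockBinarySearch}$(f,\bx,\by;\bT,\bT_1,\dots,\bT_v)$ using $O(\log(v+1))=O(\log k)$ queries, and obtain a distinguishing pair for exactly one of these $v+1$ disjoint blocks. If it is $\bT$, add $\bT$ to $U$: now $|V'|=v$, $|U'|=1$, and properties \textbf{(A)},\textbf{(B)},\textbf{(C)} all still hold (we added a disjoint block with a distinguishing pair to $U$, and did nothing to $V$). If it is some $\bT_j$, remove $B_j$ from $V$ and put both $\bS_j$ and $\bT_j$ into $U$ — we have distinguishing pairs for both (for $\bS_j$ from step~1, for $\bT_j$ from the block search), they are disjoint from everything else, so \textbf{(A)},\textbf{(B)} hold and \textbf{(C)} still holds for the remaining $v-1$ verified blocks; thus $|V'|=v-1$, $|U'|=2$.

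To finish, I would assemble the probabilities: the round succeeds (in the sense of producing one of the two claimed outcomes) if (i) all $v$ identifications in step~1 are correct, and (ii) the sampled $\bx$ witnesses $f(\bx)\ne f(\by)$. By choosing the number of repetitions in the majority test to be a suitable constant times $\log k$ — which costs only $O(k\log k)$ queries per round and is absorbed into the $\tilde O(k^2)/\eps$ budget — event (i) holds with probability at least $3/4$ (say), and event (ii) has probability at least $\eps/2$ by Lemma~\ref{influence}; these are independent given the partition choices, so the round succeeds with probability at least $(3/4)(\eps/2)\ge\eps/4$ (for $\eps\le 1$; one should state the constant so this clean bound holds, or simply use $\ge\eps/4$ after fixing the constants). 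Crucially, regardless of whether the round ``succeeds'' in this sense, the algorithm stays in good condition: if any step is inconclusive or $f(\bx)=f(\by)$, we simply leave $V$ and $U$ unchanged, so \textbf{(A)},\textbf{(B)},\textbf{(C)} are preserved trivially. The main obstacle, and the place where the ``more careful probability analysis'' promised in Section~\ref{sec:intuition} is really needed, is step~1: making the literal-identification test robust against the $\gamma 2^{|B_j|}$ bad inputs of $g_j$ while keeping the query cost under control, and verifying that the majority test also reliably yields a genuine distinguishing pair for $\bS_j$ (not a spurious one coming from the error set). The rest is bookkeeping of disjointness and a clean application of Lemma~\ref{influence}.
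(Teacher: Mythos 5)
Your overall structure matches the paper's proof: draw partitions, identify which half contains the hidden literal variable, observe that the resulting $\bR$ is distributed as a uniform subset of $\overline{I}$, apply Lemma~\ref{influence} to get $\eps/2$, run block binary search, and do the bookkeeping on $(V,U)$ in the two outcome cases. That part of the argument is sound and is essentially what the paper does.

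The place where you deviate is the robustness of the literal-identification step. You propose to repeat the comparison $O(\log k)$ times and take a majority to drive the per-block failure probability down. But that is not what \textbf{MainDJunta} does: line~6 makes exactly one call to \textbf{WhereIsTheLiteral}$(g_j,\bP_j,\bQ_j)$ per block per round, with no repetition. The observation you are missing is that the parameter $\gamma$ was set to $1/(8k)$ precisely so that a \emph{single} call already works: the four random query points inside \textbf{WhereIsTheLiteral} are each marginally uniform, so (this is Lemma~\ref{lem:where}) the call returns a distinguishing pair for the correct side with probability at least $1-4\gamma=1-1/(2k)$, and a union bound over the $v\le k$ blocks gives overall success probability at least $1/2$. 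Since $\gamma$ is already $\Theta(1/k)$, there is nothing left for a majority vote to fix. With this in hand the accounting is $\tfrac12\cdot\tfrac{\eps}{2}=\tfrac{\eps}{4}$, matching your final bound but via the algorithm actually being analyzed. One further stylistic remark: rather than conditioning on ``all identifications correct'' as you do, the paper defines $\bR^*$ using the \emph{true} literal positions (so $\bR^*$ is unconditionally uniform over $\overline{I}$) and then observes that $\bR=\bR^*$ with probability at least $1/2$ independently of $\bx$ and $\bT$; this avoids having to argue about independence after conditioning, though your version can be made to work. Everything else in your write-up --- that failure of the for-loop or $f(\bx)=f(\by)$ just leaves $(V,U)$ unchanged so good condition is preserved trivially, and the split into the two cases $|V'|=v,|U'|=1$ versus $|V'|=v-1,|U'|=2$ --- is correct.
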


\begin{figure}[p!]

\fbox{
\begin{minipage}{42em}

\begin{flushleft}
\noindent \textbf{Algorithm} \textbf{MainDJunta}$(f,\D,k,\eps)$ with 
  the same input\hspace{0.06cm}/\hspace{0.06cm}output as \textbf{SimpleDJunta} in Figure \ref{SimpleDJunta}.\\

\begin{enumerate}[itemsep=0mm]
\item Initialization: Set $V=U=\emptyset$,
  $r_1=\red{64k/\eps}$ and $r_2=\red{3(k+1)}$.\vspace{-0.1cm}
\item While $r_1>0$ and $r_2>0$ do (letting $V=\{B_1,\ldots,B_v\}$ and $U=\{C_1,\ldots,C_u\}$)\vspace{-0.1cm}
\item \ \ \ \ \ If $u=0,$ then \vspace{-0.1cm}
\item \ \ \ \ \ \ \ \ \ \ Set $r_1$ to be $r_1-1$.\vspace{-0.1cm}
\item \ \ \ \ \ \ \ \ \ \ For {$j=1$ to $v$} do
  ($\smash{(x^{[j]},y^{[j]})}$: distinguishing pair
   for $B_j$,
  $\smash{w^{[j]}= (x^{[j]} )_{\overline{B_j}}}$, $\smash{g_j=f\uhr_{w^{[j]}}}$)\vspace{-0.1cm}
\item \ \ \ \ \ \ \ \ \ \ \ \ \ \ \ Draw a random partition $\bP_j,\bQ_j$ of $B_j$ and run \textbf{WhereIsTheLiteral}$(g_j, \bP_j , \bQ_j )$.\vspace{-0.1cm}
\item \ \ \ \ \ \ \ \ \ \ \ \ \ \ \ If it returns a distinguishing pair of $g_j$ for $\bP_j$, set $\bS_j=\bP_j$ and $\bT_j=\bQ_j$;\vspace{-0.1cm}
\item \ \ \ \ \ \ \ \ \ \ \ \ \ \ \ Else if it returns a distinguishing pair of $g_j$ for $\bQ_j$, set $\bS_j=\bQ_j$ and $\bT_j=\bP_j$;\vspace{-0.1cm}
\item \ \ \ \ \ \ \ \ \ \ \ \ \ \ \ Else (it returns ``fail''), skip this round and go back to line 2.\vspace{-0.1cm}
\item \ \ \ \ \ \ \ \ \ \ Draw $\bx\leftarrow \D$ and a subset $\bT$ 
  of $\smash{\overline{B_1\cup\cdots \cup B_v}}$ uniformly at random.\vspace{-0.11cm}
\item 

  \ \ \ \ \ \ \ \ \ \ If {$f(\bx) \neq f(\by)$}, 
  where $\smash{\by=\bx^{(\bR)}}$ with $\bR=\bT\cup \bT_1\cup\cdots\cup\bT_v$,  then\vspace{-0.1cm}
\item \ \ \ \ \ \ \ \ \ \ \ \ \ \ \ Run the blockwise binary search on $\bx$ and $\by$
  with blocks $\bT,\bT_1,\ldots,\bT_v$.\vspace{-0.1cm}
\item \ \ \ \ \ \ \ \ \ \ \ \ \ \ \ If a distinguishing pair of $f$ for $\bT$ is found, add $\bT$ to $U$.\vspace{-0.1cm}
\item \ \ \ \ \ \ \ \ \ \ \ \ \ \ \ Else (a distinguishing pair of $f$ for $\bT_{j^*}$, for some $j^*\in [v]$,
  is found)\vspace{-0.1cm}
\item \ \ \ \ \ \ \ \ \ \ \ \ \ \ \ \ \ \ \ \ Concatenate $\smash{w^{[j^*]}}$ to the distinguishing 
  pair of $g_{j^*}$
  for $\bS_{j^*}$ found on line 7-8.\vspace{-0.1cm}
\item \ \ \ \ \ \ \ \ \ \ \ \ \ \ \ \ \ \ \ \ This gives us a distinguishing pair of $f$ for $\bS_{j^*}$.\vspace{-0.1cm}
\item \ \ \ \ \ \ \ \ \ \ \ \ \ \ \ \ \ \ \ \ Remove $B_{j^*}$ from $V$ and add both
  $\bS_{j^*}$ and $\bT_{j^*}$ to $U$.\vspace{-0.1cm}


\item \ \ \ \ \ Else (i.e., $u>0$)\vspace{-0.1cm}
\item \ \ \ \ \ \ \ \ \ \ Set $r_2$ to be $r_2-1$.\vspace{-0.1cm}
\item \ \ \ \ \ \ \ \ \ \  Pick a $C\in U$ arbitrarily; let $(x,y)$ be its distinguishing pair,
  $w=x_{\overline{C}}$ and $g=f\uhr_w$. \vspace{-0.1cm}
\item \ \ \ \ \ \ \ \ \ \ If \textbf{Literal}$(g)$ returns ``true,'' remove $C$ from $U$ and add it to $V$.\vspace{-0.1cm}
\item \ \ \ \ \ \ \ \ \ \ Else (it returns disjoint subsets $C',C^*$ of $C$
  and each a distinguishing pair of $g_C$)\vspace{-0.1cm}
\item \ \ \ \ \ \ \ \ \ \ \ \ \ \ \ Concatenate $w$ to obtain a distinguishing pair 
  of $f$ for each of $C'$ and $C^*$\vspace{-0.1cm}
\item \ \ \ \ \ \ \ \ \ \ \ \ \ \ \ Remove $C$ from $U$ and add both $C'$ and $C^*$ to $U$.\vspace{-0.1cm}
\item \ \ \ \ \ If $|V|+|U|\ge k+1,$ then halt and output ``reject.''\vspace{-0.1cm}
    \item Halt and output ``accept.'' \vspace{-0.2cm}
\end{enumerate}\end{flushleft}
\caption{Description of the distribution-free testing algorithm \textbf{MainDJunta} for $k$-juntas.}
\label{fig:maindjunta}

\end{minipage}
}
\end{figure}

\begin{figure}[t!]

\fbox{
\begin{minipage}{42em}

\begin{flushleft}
\noindent \textbf{Subroutine} \textbf{WhereIsTheLiteral}$(g,P,Q)$\\
\noindent {\bf Input:} Oracle access to a Boolean function $g$ over $\{0,1\}^B$
  with $P,Q$ being a partition of $B$.

\noindent {\bf Output:}  Either a distinguishing pair for $P$, 
  a distinguishing pair for $Q$, or ``fail.''
\begin{enumerate}[itemsep=-0.5mm]
\item Draw $\bw\leftarrow \{0,1\}^Q$ and $\bz \leftarrow \{0,1\}^P$ independently and uniformly at random.
\item If $g(\bw\circ\bz )\ne g(\bw\circ\bz^{(P)})$, return $(\bw\circ\bz ,\bw\circ\bz^{(P)})$
  as a distinguishing pair for $P$.
\item Draw $\bw'\leftarrow \{0,1\}^P$ and $\bz' \leftarrow \{0,1\}^Q$ independently and uniformly at random.
\item If $g(\bw'\circ\bz')\ne g(\bw'\circ\bz'^{(Q)})$, return $(\bw'\circ\bz ',\bw'\circ\bz'^{(Q)})$
  as a distinguishing pair for $Q$.
\item Return ``fail.''\vspace{-0.3cm}
\end{enumerate}\end{flushleft}
\caption{Description of the subroutine \textbf{WhereIsTheLiteral}.}\vspace{0.25cm}\label{fig:where}

\end{minipage}
}\vspace{0.2cm}
\end{figure}




\begin{figure}[h!]

\fbox{
\begin{minipage}{42em}

\begin{flushleft}
\noindent \textbf{Subroutine} \textbf{Literal}$(g)$\\
\noindent {\bf Input:} Oracle access to a Boolean function $g$ over $\{0,1\}^C$ where $C$ has a distinguishing pair.

\noindent {\bf Output:}  ``True'' or disjoint nonempty subsets 
  $C',C^*$ of $C$ and a distinguishing pair for each.
\begin{enumerate}[itemsep=-0.5mm] 
\item Repeat $\log k+\red{6}$ times:
\item \ \ \ \ \ If {\textbf{UniformJunta}\((g,1,\gamma)\)} rejects, then
\item \ \ \ \ \ \ \ \ \ \ Return the two disjoint blocks it has found and a distinguishing pair
  for each.
\item Let $(x,y)$ be the distinguishing pair for $C$.
\item Repeat $\log k+3$ times:
\item \ \ \ \ \ Draw a random partition $\bC',\bC^*$ of $C$ and query 
  $g(x^{(\bC')}),g(x^{(\bC^*)}),g(y^{(\bC')}),g(y^{(\bC^*)})$.
\item \ \ \ \ \ If $\smash{g(x^{(\bC')})=g(x^{(\bC^*)})\ne g(x)}$, then
\item \ \ \ \ \ \ \ \ \ \ Return $\bC',\bC^*$ and $(x,x^{(\bC')})$ and $(x,x^{(\bC^*)})$ as their distinguishing pairs.
\item \ \ \ \ \ If $\smash{g(y^{(\bC')})=g(y^{(\bC^*)})\ne g(y)}$, then
\item \ \ \ \ \ \ \ \ \ \ Return $\bC',\bC^*$ and $(y,y^{(\bC')})$ and $(y,y^{(\bC^*)})$ as their distinguishing pairs.


\item Return ``true.''\vspace{-0.3cm}
\end{enumerate}\end{flushleft}

\caption{Description of the subroutine \textbf{Literal}.}\label{fig:literal}
\end{minipage}
}
\end{figure}

In a type-2 round (with $u\ge 1$), we pick an arbitrary block $C$ from $U$ and  
  check whether $g_C$ is close to a literal under~the uniform distribution.
The following lemma, which we prove in Section \ref{sec:prooflemma2},
  shows that with high probability, either $C$ is moved to collection $V$ and 
  the algorithm remains in good condition, or the algorithm finds two disjoint subsets 
  of $C$ and a distinguishing pair for each of them so that $V$ stays the same but $|U|$ goes up by one
  (we add these two blocks to $U$ since they have not yet been verified).

\begin{lemma}\label{mainlemma2}
Assume that \emph{\textbf{MainDJunta}} is in good condition at the beginning of a round,
  with $u>0$ and $v+u\le k$.
Then with probability at least $1-1/\red{(64k)}$, one of the following two events occurs at the end of this round
  (letting $V'$ and $U'$ be the two collections of blocks at the end of this round):
\begin{enumerate}
\item The algorithm remains in good condition with $|V'|=|V|+1$ and $|U'|=|U|-1$; or\vspace{-0.15cm}
\item The algorithm remains in good condition with $V'=V$ and $|U'|=|U|+1$.
\end{enumerate}
\end{lemma}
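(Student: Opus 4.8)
The plan is to follow the two possible behaviors of the subroutine \textbf{Literal}$(g)$, with $g=g_C$, and to show that the lemma can fail only in one specific way whose probability I then bound by a short case analysis on the function $g$ itself. The starting observation is that \textbf{Literal}$(g)$ always outputs either ``true'' or a pair of disjoint nonempty subsets $C',C^*$ of $C$ together with a distinguishing pair of $g$ for each of them. In the first case \textbf{MainDJunta} moves $C$ from $U$ into $V$, so $|V'|=|V|+1$ and $|U'|=|U|-1$; in the second case it concatenates $w$ to the returned pairs to obtain distinguishing pairs of $f$ for $C'$ and $C^*$, then replaces $C$ in $U$ by $C'$ and $C^*$, so $V'=V$ and $|U'|=|U|+1$. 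Thus structurally one of the two events in the lemma always occurs, and the only question is whether ``good condition'' is preserved.

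First I would dispatch the easy direction: the second behavior always leaves the algorithm in good condition. Condition \textbf{(A)} holds because $C',C^*$ are disjoint nonempty subsets of $C$ and hence disjoint from every other current block (all of which were disjoint from $C$ by \textbf{(A)} at the start of the round); condition \textbf{(B)} holds because the objects returned by \textbf{Literal} are genuine distinguishing pairs of $g$, which become genuine distinguishing pairs of $f$ once $w$ is concatenated --- this is immediate in the \textbf{UniformJunta} branch by Theorem~\ref{blaistheorem} (there the two returned blocks are nonempty, pairwise disjoint, and come with distinguishing pairs), and in the random-partition branch (lines~5--10 of \textbf{Literal}) the guard that fires the return, e.g.\ $g(x^{(\bC')})=g(x^{(\bC^*)})\ne g(x)$, simultaneously forces $\bC',\bC^*\ne\emptyset$ and certifies $(x,x^{(\bC')})$ and $(x,x^{(\bC^*)})$ as distinguishing pairs of $g$; condition \textbf{(C)} is untouched because $V'=V$. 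Consequently the only way a round can violate the lemma is that \textbf{Literal}$(g)$ returns ``true'' while $g$ is \emph{not} $\gamma$-close to a literal under the uniform distribution over $\{0,1\}^C$, for then adding $C$ to $V$ breaks \textbf{(C)}. It therefore remains to show that the probability of this event is at most $1/(64k)$.

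For that I would split the assumption ``$g$ is $\gamma$-far from every literal'' into two sub-cases. If $g$ is moreover $\gamma$-far from every $1$-junta, then by Theorem~\ref{blaistheorem} each of the $\log k+6$ independent calls to \textbf{UniformJunta}$(g,1,\gamma)$ rejects with probability at least $2/3$, and any single rejection makes \textbf{Literal} return two disjoint blocks rather than ``true,'' so the probability of the bad event is at most $(1/3)^{\log k+6}\le 1/(64k)$ for all $k\ge 1$. Otherwise $g$ is $\gamma$-close to some $1$-junta; since the $1$-juntas are exactly the literals together with the two constants and $g$ is $\gamma$-far from every literal, $g$ must be $\gamma$-close to some constant $c\in\{0,1\}$. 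Note $g$ is not literally constant, because property \textbf{(B)} hands us a distinguishing pair $(x,y)$ for $C$; and exactly one of $g(x),g(y)$ equals the rare value $\overline c$, whose density under the uniform distribution over $\{0,1\}^C$ is less than $\gamma$; say $g(y)=\overline c$ (the case $g(x)=\overline c$ is handled symmetrically by line~7 in place of line~9). For a uniformly random partition $\bC',\bC^*$ of $C$, each of $y^{(\bC')}$ and $y^{(\bC^*)}$ is uniform over $\{0,1\}^C$, so a union bound gives $g(y^{(\bC')})=g(y^{(\bC^*)})=c\ne g(y)$ with probability greater than $1-2\gamma$, in which case line~9 fires and \textbf{Literal} returns two disjoint blocks; over the $\log k+3$ independent iterations of the second loop the probability this never happens is less than $(2\gamma)^{\log k+3}=(1/(4k))^{\log k+3}\le 1/(64k^3)\le 1/(64k)$, using $\gamma=1/(8k)$. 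Combining the two sub-cases gives the claimed bound.

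The routine parts --- the disjointness bookkeeping, the concatenation-with-$w$ step, and the observation that a triggered guard automatically makes the returned blocks nonempty --- I expect to go quickly. The step needing the most care is the ``close to a constant'' sub-case: one has to notice that $g_C$ cannot be an exact constant (this is where \textbf{(B)} enters), that XORing $y$ with a uniformly random subset of $C$ yields a uniformly random string so that the rare value is missed with probability greater than $1-\gamma$, and that whichever of lines~7 and~9 fires returns honest distinguishing pairs for honest nonempty blocks; after that the arithmetic just needs to be matched to the parameter choices $\gamma=1/(8k)$ and the loop lengths $\log k+6$ and $\log k+3$.
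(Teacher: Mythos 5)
Your proof follows essentially the same route as the paper's: the same three-way case analysis on $g=g_C$ (close to a literal / $\gamma$-far from $1$-juntas / close to a constant), the same use of \textbf{UniformJunta} and the random-partition loop with the same probability bounds, and the same observation that the distinguishing pair from condition \textbf{(B)} supplies a string taking the rare value in the close-to-constant case. The only difference is cosmetic --- you spell out the "good condition is preserved" bookkeeping (including the fact that a triggered guard forces $\bC',\bC^*\neq\emptyset$) a bit more explicitly than the paper does, and you use the marginally tighter bound $1/(64k^3)$ where the paper settles for $1/(64k^2)$.
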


Assuming Lemma \ref{mainlemma1} and Lemma \ref{mainlemma2}, we are 
  ready to prove the correctness of $\textbf{MainDJunta}$.
  
\begin{theorem}\label{maincorrectness}
(i) The algorithm \emph{\textbf{MainDJunta}} makes $\red{\tilde{O}(k^2)/\eps}$ queries 
  and always accepts $f$ when it is a $k$-junta.
(ii) It rejects with probability at least $2/3$ when $f$ is $\eps$-far from every $k$-junta with respect to $\D$.
\end{theorem}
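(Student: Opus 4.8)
The plan is to derive Theorem~\ref{maincorrectness} from Lemma~\ref{mainlemma1} and Lemma~\ref{mainlemma2}, which control what a type-1 round (those with $u=0$) and a type-2 round (those with $u>0$) do to the two collections $V,U$. Part (i) is the easy half. One-sidedness is immediate from the discussion following condition~\textbf{(B)}: \textbf{MainDJunta} outputs ``reject'' only when $v+u\ge k+1$, at which point it holds $k+1$ pairwise disjoint relevant blocks of $f$ together with a distinguishing pair for each, which is impossible when $f$ is a $k$-junta. For the query bound I will first note that the \textbf{while} loop executes at most $r_1+r_2=O(k/\eps)$ rounds, since each type-1 round decrements $r_1$ and each type-2 round decrements $r_2$. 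A type-1 round makes at most $v\le k$ calls to \textbf{WhereIsTheLiteral} (each $O(1)$ queries) plus one call to \textbf{BlockBinarySearch} over $v+1\le k+1$ blocks ($O(\log k)$ queries), i.e.\ $O(k)$ queries; a type-2 round runs \textbf{Literal}$(g)$, which makes $O(\log k)$ calls to \textbf{UniformJunta}$(g,1,\gamma)$ at $O(1/\gamma)=O(k)$ queries each (Theorem~\ref{blaistheorem} with $\gamma=1/(8k)$) plus $O(\log k)$ further queries, i.e.\ $O(k\log k)$ queries. Multiplying gives $r_1\cdot O(k)+r_2\cdot O(k\log k)=\tilde O(k^2)/\eps$.

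For part (ii), assume $f$ is $\eps$-far from every $k$-junta with respect to $\D$, and track the potential $\Phi:=v+u$, the number of disjoint relevant blocks currently held. The algorithm begins in good condition with $\Phi=0$, and it rejects as soon as $\Phi\ge k+1$, so at the start of every round $\Phi\le k$, and hence the side conditions ``$v\le k$''/``$v+u\le k$'' of Lemmas~\ref{mainlemma1} and~\ref{mainlemma2} are met. By Lemma~\ref{mainlemma1}, a type-1 round begun in good condition keeps the algorithm in good condition, never decreases $\Phi$, and increases $\Phi$ by $1$ with probability at least $\eps/4$. By Lemma~\ref{mainlemma2}, a type-2 round begun in good condition, with probability at least $1-1/(64k)$, keeps the algorithm in good condition and either leaves $\Phi$ unchanged (a block of $U$ is verified into $V$) or increases it by $1$ (a block of $U$ is split into two). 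Let $\mathcal E$ be the event that this good outcome occurs in every executed type-2 round; since the counter $r_2=3(k+1)$ caps the number of type-2 rounds, a union bound gives $\Pr[\neg\mathcal E]\le 3(k+1)/(64k)<1/10$.

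Now condition on $\mathcal E$. Then the algorithm stays in good condition, $\Phi$ is non-decreasing, and the $r_2$ budget never binds: each type-2 round removes exactly one block from $U$, so the number of type-2 rounds is at most the number of blocks ever added to $U$; each round that increases $\Phi$ (a successful type-1 round or a type-2 split) adds at most two blocks to $U$, and $\Phi$ increases at most $k+1$ times before the algorithm halts, so at most $2(k+1)<r_2$ blocks are ever added to $U$. Hence, until $\Phi$ reaches $k+1$, the algorithm keeps running rounds, executing a type-1 round whenever $u=0$; as the type-2 rounds are bounded, it executes type-1 rounds until either $\Phi$ reaches $k+1$ (reject) or all $r_1=64k/\eps$ type-1 rounds are spent. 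Each executed type-1 round succeeds with conditional probability at least $\eps/4$, so exactly as in the coupling argument in the proof of Theorem~\ref{klogn},
\[
\Pr\big[\,\textbf{MainDJunta}\text{ rejects}\ \big|\ \mathcal E\,\big]\ \ge\ \Pr\Big[\textstyle\sum_{i=1}^{r_1}\bX_i\ \ge\ k+1\Big],
\]
where the $\bX_i$ are i.i.d.\ $\{0,1\}$ variables equal to $1$ with probability $\eps/4$. Since $\mathbb E[\sum_i\bX_i]=16k\ge 2(k+1)$, the Chernoff bound makes the right-hand side at least $1-e^{-2k}\ge 1-e^{-2}$; combining with $\Pr[\mathcal E]\ge 9/10$ yields $\Pr[\text{reject}]\ge (9/10)(1-e^{-2})>2/3$, finishing part (ii).

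The main obstacle is the bookkeeping around the two counters. One must verify that $r_2=3(k+1)$ is large enough that type-2 rounds can never exhaust their budget and make the algorithm spuriously accept — which rests on the counting argument that the number of type-2 rounds is $O(k)$, and that in turn on the fact that $\Phi$ can increase only $k+1$ times — while simultaneously $r_1=64k/\eps$ must be large enough relative to the per-round success probability $\eps/4$ and the target $k+1$ for the Chernoff bound to push the failure probability below $1/3$. A secondary subtlety is that a type-2 ``split'' round can itself increase $\Phi$; this only helps, but it must be tracked carefully so that the stochastic domination by $\sum_i\bX_i$ above remains valid.
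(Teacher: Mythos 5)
Your proposal is correct and follows essentially the same route as the paper: the same query accounting, the same invocation of Lemmas~\ref{mainlemma1} and~\ref{mainlemma2} to control type-1 and type-2 rounds, and the same union bound plus Chernoff bound on type-1 successes at the end. Two small stylistic differences are worth noting: (a) you track the potential $|V|+|U|$ and separately count additions to $U$ (at most two per potential increment, hence at most $2(k+1)$ type-2 rounds) to show the $r_2$ budget never binds, whereas the paper uses the single potential $3|V|+2|U|$, which rises by at least one on every successful round of either type and thus handles both the halting condition and the $r_2$ budget at once; (b) you phrase the last step as the conditional bound $\Pr[\mathrm{reject}\mid\mathcal{E}]\ge\Pr\bigl[\sum_i \mathbf{X}_i\ge k+1\bigr]$, which is delicate because $\mathcal{E}$ is determined by the randomness of all type-2 rounds, whose occurrence and state are entangled with type-1 outcomes, so conditioning on $\mathcal{E}$ need not preserve the per-round lower bound $\eps/4$ from Lemma~\ref{mainlemma1}; the coupling more directly yields the joint bound $\Pr[\mathrm{accept}\wedge\mathcal{E}]\le\Pr\bigl[\sum_i\mathbf{X}_i\le k\bigr]$, which is how the paper's disjoint bad events $E_1,E_2$ avoid the issue, and with that fix your arithmetic goes through unchanged.
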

\begin{proof}[Proof of Theorem \ref{maincorrectness} Assuming Lemmas \ref{mainlemma1} and \ref{mainlemma2}]
 \textbf{MainDJunta} is one-sided since it rejects $f$ only when it has found 
  $k+1$ pairwise disjoint relevant blocks of $f$.
Its query complexity is 
\begin{align*}
&\hspace{-0.5cm}\text{(\# type-1 rounds)\hspace{0.05cm}$\cdot$\hspace{0.05cm}(\# queries per type-1 round) $ + $ (\# type-2 rounds)\hspace{0.05cm}$\cdot$\hspace{0.05cm}(\# queries per type-2 round)}\\[0.3ex]
&= O(k/\eps)\cdot (O(k)+O(\log k))+ O(k)\cdot O(\log k)\cdot O(k)
=O(k^2/\eps)+O(k^2\log k)=O(k^2\log k)/\eps.
\end{align*}
In the rest of the proof 
we show that it rejects $f$ with probability at least $2/3$ when $f$ is $\eps$-far from every $k$-junta
  with respect to $\D$.

For this purpose we introduce a simple potential function $F$ to measure the progress:
$$
F(V,U):=\red{3|V|+2|U|}.
$$
Each round of the algorithm is either of type-$1$ (when $|U|=0$) or of type-$2$ (when $|U|>0$).~By Lemma \ref{mainlemma1}, if the algorithm is in good condition at
  the beginning of a type-1 round, then
  the algorithm ends the round in good condition and the potential function $F$~goes~up by \red{at least~one} with probability at least $\eps/4$ (in which case we say that 
  the algorithm succeeds in this \mbox{type-$1$} round).  By Lemma  \ref{mainlemma2}, if the algorithm is in good condition at the
  beginning of a type-$2$ round, then the algorithm ends the round in good condition and 
   $F$ goes up by \red{at least one} with probability at least $1-1/(32k)$
  (in which case we say it succeeds in this type-$2$ round).
  
Note that $F$ is $0$ at the beginning ($V=U=\emptyset$)  and that we must have $|U|+|V|\ge k+1$
  (and thus, the algorithm rejects) when the potential function $F$ reaches $\red{3(k+1)}$ or above.
As a result,  a necessary condition for the algorithm to accept is that one of the following 
  two events occurs:
\begin{flushleft}\begin{enumerate}
\item[] $E_1$: At least one of the (no more than $\red{3(k+1)}$ many)  type-$2$ rounds fails.\vspace{-0.12cm}  
\item[] $E_2$: $E_1$ does not occur (so the algorithm ends every round in good condition, and the reason
  that the algorithm accepts cannot be that it uses up all the $3(k+1)$ many type-$2$ rounds), and
  the algorithm uses up all the $\red{64}k/\eps$ many type-$1$ rounds but at most 
  $3k+2$\\ of them succeed.
\end{enumerate}\end{flushleft}
By a union bound, the probability of $E_1$ is at most 
$$\red{3(k+1)\cdot 1/(64k)\le 6k\cdot 1/(64k)<1/8.}$$
As the algorithm ends every round in good condition, it follows from Lemma~\ref{mainlemma1} from a coupling argument
  that the probability of $E_2$ is at most the probability that
$$\sum_{i=1}^{\red{64k/\eps}} \bZ_i\le \red{3k+2},$$ where $\bZ_i$'s are i.i.d. $\{0,1\}$-valued random variables
  that take $1$ with probability $\eps/4$. It follows from the Chernoff bound the probability is at most
    (using $3k+2\le 5k$)
$$
\red{\exp\left(-\left(\frac{11}{16}\right)^2\cdot \frac{16k}{2}\right)=\exp\left(-\frac{121k}{32}\right)
< \exp(-3)<1/8.}
$$
Finally it follows from a union bound that the algorithm rejects with probability at least $3/4$.
\end{proof}

\subsection{Proof of Lemma \ref{mainlemma1}}\label{sec:prooflemma1}

We start with a lemma for the subroutine \textbf{WhereIsTheLiteral}, which
  is described in Figure \ref{fig:where}.

\begin{lemma}\label{lem:where}
Assume that $g:\{0,1\}^B\rightarrow \{0,1\}$ is $\gamma$-close (with respect to the uniform distribution) to a literal $x_i$ or $\overline{x_i}$ for some $i\in B$.~If $i\in P$, then \emph{\textbf{WhereIsTheLiteral}}$(g,P,Q)$ returns a distinguishing pair of $g$ for $P$
  with probability at least $1-4\gamma$; If $i\in Q$, then it returns a distinguishing pair of $g$ for $Q$
  with probability at least $1-4\gamma$.
\end{lemma}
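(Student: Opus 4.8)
The plan is to reduce the whole analysis to one deterministic fact about literals plus two union bounds. Fix the literal $\tau \in \{x_i, \overline{x_i}\}$ that $g$ is $\gamma$-close to under the uniform distribution over $\{0,1\}^B$, and let $E := \{x \in \{0,1\}^B : g(x) \neq \tau(x)\}$, so $|E| \le \gamma \cdot 2^{|B|}$. The single fact I will use repeatedly is that every string the subroutine ever feeds to $g$ — namely $\bw\circ\bz$, $\bw\circ\bz^{(P)}$, $\bw'\circ\bz'$, and $\bw'\circ\bz'^{(Q)}$ — is marginally uniform on $\{0,1\}^B$. Indeed, complementing a string on the coordinates of $P$ (resp.\ $Q$) is a bijection of $\{0,1\}^P$ (resp.\ $\{0,1\}^Q$), so $\bz^{(P)}$ is uniform on $\{0,1\}^P$ and independent of $\bw$, whence $\bw\circ\bz^{(P)} \leftarrow \{0,1\}^B$, and similarly for the other three. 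Consequently each of these four strings lies in $E$ with probability at most $\gamma$.

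First I would treat the case $i \in P$. Since $i \in P$, the strings $\bw\circ\bz$ and $\bw\circ\bz^{(P)}$ differ in coordinate $i$, so $\tau(\bw\circ\bz) \neq \tau(\bw\circ\bz^{(P)})$. Hence, unless at least one of $\bw\circ\bz, \bw\circ\bz^{(P)}$ lies in $E$, we have $g(\bw\circ\bz) = \tau(\bw\circ\bz) \neq \tau(\bw\circ\bz^{(P)}) = g(\bw\circ\bz^{(P)})$, so line~2 fires and the subroutine returns $(\bw\circ\bz, \bw\circ\bz^{(P)})$, which is a genuine distinguishing pair of $g$ for $P$ (the two strings agree outside $P$ and $g$ separates them). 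By a union bound the probability that line~2 fails to fire is at most $2\gamma \le 4\gamma$, which is the claimed bound for this case.

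The case $i \in Q$ is the one that actually forces the constant $4\gamma$, and it is where I expect the only real subtlety: because $g$ is merely \emph{approximately} a literal, line~2 can spuriously fire and return a perfectly valid distinguishing pair for $P$ even though the ``true'' relevant coordinate $i$ lies in $Q$. I would bound this bad event by the same counting: since now $i \notin P$, flipping $P$ does not flip coordinate $i$, so $\tau(\bw\circ\bz) = \tau(\bw\circ\bz^{(P)})$, and therefore line~2 can fire only if $\bw\circ\bz \in E$ or $\bw\circ\bz^{(P)} \in E$, an event of probability at most $2\gamma$. Conditioned on line~2 not firing, the subroutine reaches lines 3--4 with fresh, independent randomness $\bw', \bz'$; now applying the argument of the previous paragraph with the roles of $P$ and $Q$ exchanged (using $i \in Q$ so that flipping $Q$ flips coordinate $i$), line~4 fires and returns a distinguishing pair for $Q$ unless one of $\bw'\circ\bz', \bw'\circ\bz'^{(Q)}$ lies in $E$, again probability at most $2\gamma$. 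Since the randomness of lines 1--2 is independent of that of lines 3--4, a union bound over these two bad events shows that the subroutine fails to return a distinguishing pair of $g$ for $Q$ with probability at most $2\gamma + 2\gamma = 4\gamma$, completing the proof. Apart from this $P$-versus-$Q$ bookkeeping, the argument is entirely routine, so I do not anticipate any further obstacle.
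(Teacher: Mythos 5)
Your proof is correct and follows essentially the same route as the paper's: both arguments define the bad set of inputs on which $g$ deviates from the literal, observe that each of the four strings queried by the subroutine is marginally uniform on $\{0,1\}^B$, and conclude by a union bound over at most four bad events. Your write-up is slightly more refined in noting that the $i\in P$ case only costs $2\gamma$ and in explicitly flagging the ``spurious fire'' of line~2 when $i\in Q$; the paper handles only the $i\in Q$ case in detail, bounding directly by the event that any of the four strings lands in the bad set.
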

\begin{proof}
Let $K$ be the set of strings $x\in \{0,1\}^B$ such that $g(x)$ disagrees with the literal to
  which~it~is $\gamma$-close (so $|K|\le \gamma \cdot 2^{|B|}$).
We work on the case when $i\in Q$; the case when $i\in P$ is similar.

By the description of \textbf{WhereIsTheLiteral}, it returns a distinguishing pair for $Q$
  if
$$g(\bw\circ\bz )=g(\bw\circ\bz^{(P)})\quad\text{and}\quad
g(\bw'\circ\bz')\ne g(\bw'\circ\bz'^{(Q)}).$$
Note that this holds if all four strings fall outside of $K$ and thus,
  the probability that it does not hold is at most the probability
  that at least one of these four strings falls inside $K$.
The latter by a union bound is at most $4\gamma$ since each of these four strings 
  is drawn uniformly at random from $\{0,1\}^B$ by itself.
This finishes the proof of the lemma.
\end{proof}

We are now ready to prove Lemma \ref{mainlemma1}.

\begin{proof}[Proof of Lemma \ref{mainlemma1}]
First, it is easy to verify that if the algorithm starts a round in good condition, then it ends it in good condition.
This is because whenever a block is added to $U$, it is disjoint from other blocks and we have
  found a distinguishing pair for it.
  
Next it follows directly from Lemma \ref{lem:where} and a union bound that,
  for any sequence of partitions $P_j$ and $Q_j$ of $B_j$ picked on line 6,
  the probability that the for-loop correctly sets $\bS_j$ to be 
  the one that contains the special variable $i_j$ for all $j\in [v]$ is at least
  (recalling that $\gamma=1/(8k)$)
$$
1-4\gamma \cdot v \ge 1-4\gamma\cdot k= 1/2.
$$
Now we can view the process equivalently as follows.
First we draw $\bx\leftarrow \D$, $\bT\leftarrow \overline{B_1\cup\cdots \cup B_v}$,
  and random partitions $\bP_j,\bQ_j$ of each $B_j$.
If we let $\bT^*_j$ be the set in $\bP_j,\bQ_j$ that does not contain the special variable,
  then $\bR^*=\bT\cup\bT^*_1\cup\cdots\cup\bT^*_v$ is a set drawn uniformly at random
  from $\overline{I}$, where $I=\{i_j:j\in [v]\}$ consists of the special variables.
Therefore, it follows from Lemma \ref{influence}
  that $f(\bx)\ne f(\bx^{(\bR^*)})$ with probability at least $\eps/2$.
Since with probability at least $1/2$, the set $\bR$ 
  on line 11 agrees with $\bR^*$,
  we have that the algorithm reaches line 12 with
  $f(\bx)\ne f(\by)$ with probability at least $\eps/4$.
  Given this, the lemma is immediate by inspection of lines 12-17 of the algorithm.
\end{proof}

\subsection{Proof of Lemma \ref{mainlemma2}}\label{sec:prooflemma2}

First it follows from the description of the subroutine \textbf{Literal}$(g)$ that
  it either returns ``true'' or~a pair of nonempty disjoint subsets $C',C^*$ of $C$
  and a distinguishing pair of $g$ for each of them (see Theorem \ref{blaistheorem}).
Next, let $C \in V$ be the block picked in line 20.
If $g$ is $\gamma$-close to a literal, then it is easy to verify that 
  one of the two events described
  in Lemma \ref{mainlemma2} must hold (using the property of \textbf{Literal}$(g)$ above).
So we focus on the other two cases in the rest of the proof:
  $g$ is $\gamma$-far from $1$-juntas or $g$ is $\gamma$-close to a (all-$1$ or all-$0$)
  constant function.
In both cases we show below that the second event happens with high probability.

When $g$ is $\gamma$-far from $1$-juntas under the uniform distribution, we have that
  one of the $\red{\log k+6}$ calls to \textbf{UniformJunta} in \textbf{Literal} rejects with probability at least
$$
1-(1/3)^{ \log k+\red{6}}> 1-1/(\red{64}k).
$$ 
The second event in Lemma \ref{mainlemma2} occurs when this happens.

When $g$ is $\gamma$-close to a constant function (say the all-$1$ function),
  we have that either string $x$ or $y$ in the distinguishing pair for $C$
  disagrees with the function (say $g(x)=0$, since $g(x)\ne g(y)$).
Let $K$ be the set of strings in $\{0,1\}^C$ that disagree with the all-$1$ function.
Then line 7 of \textbf{Literal}$(g)$ does not hold only when one of $x^{(\bC')}$ or
  $x^{(\bC^*)}$ lies in $K$.
As both strings are distributed uniformly over $\{0,1\}^C$ by themselves, this happens 
  with probability at most $2\gamma$ by a union bound.
Therefore the probability that line 7 holds at least once is at least
$$
1-(2\gamma)^{\log k+3}=1-(1/(4k))^{\log k+3}>
1-(1/4)^{\log k+3}=
 1-1/(64k^2).
$$ 
As a result, the second event in Lemma \ref{mainlemma2} occurs with probability at least $1-1/(64k^2)$.

This finishes the proof of Lemma \ref{mainlemma2}.

\def\bD{\boldsymbol{\mathcal{D}}}
\def\E{\mathbf{E}} \def\balpha{\boldsymbol{\alpha}} \def\bbeta{\boldsymbol{\beta}}
\def\calE{\mathcal{E}} \def\bgamma{\boldsymbol{\gamma}}
\def\larr{\leftarrow} \def\calJ{\mathcal{J}}

\section{Proof of Theorem~\ref{lowerbound}:  An $\Omega(2^{k/3})$-query non-adaptive 
  lower bound}\label{sec:lower}

In this section we prove the $\Omega(2^{k/3})$ lower bound for the non-adaptive
  distribution-free testing of $k$-juntas that was stated as Theorem~\ref{lowerbound}. 
We start with some notation. 
Given a sequence $Y=(y^i:$ $i\in [q])$ of $q$ strings in $\{0,1\}^n$ and 
  a Boolean function $\phi:\{0,1\}^n\rightarrow \{0,1\}$, we write $\phi(Y)$ to denote the 
  $q$-bit string $\alpha$ with $\alpha_i=\phi(y^i)$ for each $i\in [q]$.
We also write $\bY=(\by^i:i\in [q])\larr \D^q$ to denote 
  a sequence of $q$ independent draws from the same probability distribution $\D$.

Let $k$ and $n$ be two positive integers that satisfy $k\le n/200$.
We may further assume that $k$~is at least some absolute constant $C$ (to be specified 
  later) since otherwise, the claimed $\Omega(2^{k/3})$ lower bound on query complexity holds trivially
  due to the constant hidden behind the $\Omega$. 
Let~$q=2^{k/3}.$
For convenience we refer to an algorithm as a $q$-query algorithm if it makes $q$ sample queries
  and $q$ black-box queries \emph{each}. Such algorithms are clearly at least as powerful as
  those that make $q$ queries in total. 
Our goal is then to show that there exists no $q$-query non-adaptive (randomized) algorithm 
  for the distribution-free testing of $k$-juntas over Boolean functions of $n$ variables, even when the distance parameter $\eps$ is $1/3$.
  
By Yao's minimax principle we focus on $q$-query non-adaptive \emph{deterministic} algorithms.    
Such an algorithm $A$ (which consists of two deterministic maps $A_1$ and $A_2$
  as discussed below) works as follows.  
Upon an input pair $(\phi,\D)$, 
  where $\phi:\{0,1\}^n\rightarrow \{0,1\}$ and $\D$ is a probability distribution over $\{0,1\}^n$,
  the algorithm receives in the first phase a sequence $Y=(y^{i}:i\in [q])$ of $q$
  strings (which should be thought of as samples from $\D$) and a binary string $\alpha=\phi(Y)$ of length $q$.
In the second phase, the algorithm $A$ uses the first map $A_1$ to obtain   
  a sequence of  $q$ strings $Z=(z^{i}:i\in [q])=A_1(Y,\alpha)$, and 
  feeds them to the black-box oracle.
Once the query results $\beta=\phi(Z)$ are back, $A_2(Y,\alpha,\beta)$ returns either $0$ or $1$ (notice that we do not need to include $Z$
  as an input of $A_2$ since it is determined by $Y$ and $\alpha$) in which cases the algorithm
  $A$ either rejects or accepts, respectively. 
A randomized algorithm $T$ works similarly and consists of two similar maps $T_1$ and $T_2$ but both
  are randomized. 

Given the description above, unlike typical deterministic algorithms,
  whether $A$ accepts or~not depends on not only $(\phi,\D)$ 
  but also the sample strings $\bY\leftarrow \D^q$ it draws.
Formally we have
$$
\Pr\big[\text{$A$ accepts $(\phi,\D)$}\big]=\Pr_{\bY\leftarrow \D^q}\Big[A_2\Big(\bY,\phi(\bY),\phi\big(A_1(\bY,\phi(\bY))\big) \Big)=1\Big].
$$

The plan of the rest of the section is as follows. We define in Section~\ref{sec:distributions} a pair of probability distributions $\YES$ and $\NO$ over pairs $(\phi,\D)$, where $\phi$ is a Boolean function over $n$ variables~and $\D$ is a distribution over $\{0,1\}^n$.
\red{For clarity we use $(f,\D)$ to denote pairs in the support of~$\YES$ and 
  $(g,\D)$ to denote pairs in the support of $\NO$.}
We show that (1) Every  $(f,\D)$ in the support of $\YES$ satisfies~that $f$ is a $k$-junta
  (Lemma \ref{lem:yes:simple});
(2) With probability $1-o_k(1)$, $(\bg,\bD)\leftarrow \NO$ satisfies that $\bg$ is~$1/3$-far from every $k$-junta with respect to $\bD$ (Lemma \ref{lem:far}). 
\ignore{
We first give definitions for $\YES$ distribution and $\NO$ distribution in sections \ref{sec:yes}, \ref{sec:no}, such that every draw $(f,\calD_f)$ from $\YES$ satisfies that $f$ is a $k$-junta, while with high probability a draw $(g,\calD_g)$ from $\NO$ satisfies that $g$ is $1/2$-far from $k$-junta with respect to $\calD_g$.
}
To obtain Theorem~\ref{lowerbound},~it suffices to prove the following 
  main technical lemma, which informally says that any $q$-query non-adaptive deterministic
    algorithm must behave similarly when it is run on $(\boldf,\bD) \leftarrow \YES$ versus $(\bg,\bD) \leftarrow \NO$:

\ignore{
}

\begin{lemma}\label{lem:main-lower}
Any $q$-query deterministic algorithm
  $A$ satisfies
\begin{equation}\label{hehe1}
\mathlarger{\Big|} \hspace{0.08cm}\E_{(\boldf,\bD ) \leftarrow \YES}\Big[ \Pr\big[\text{$A$ accepts $(\boldf,\bD)$}\big]\Big]
- \E_{(\bg,\bD)\leftarrow \NO} \Big[\Pr\big[\text{$A$ accepts $(\bg,\bD)$}\big]\Big] 
\hspace{0.06cm}\mathlarger{\Big|}\le {1/4}. \end{equation}
\end{lemma}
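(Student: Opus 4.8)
The strategy is, as already set up in the excerpt, to use Yao's principle and bound the extent to which a deterministic non-adaptive $q$-query algorithm $A$ can behave differently under $\YES$ and $\NO$. Since $A$ is deterministic, whether $A$ accepts is a fixed $\{0,1\}$-valued function of the \emph{transcript} $\big(\bY,\bphi(\bY),\bphi(\bZ)\big)$, where $\bZ=A_1(\bY,\bphi(\bY))$ is the sequence of black-box query strings; so it suffices to bound by $1/4$ the total variation distance between the law of this transcript under $(\boldf,\bD)\leftarrow\YES$ and under $(\bg,\bD)\leftarrow\NO$. I will exhibit an event $\calE$, depending on the underlying randomness $(\bS,\bJ,\bh,\bb,\bY)$ (where $\bh$ is the background truth table and $\bb=(\bb(z))_{z\in\bS}$ the coins used in the $\NO$ construction), such that (i) conditioned on $\calE$ the transcript has \emph{exactly the same} distribution under $\YES$ and $\NO$, and (ii) $\Pr[\neg\calE]=o_k(1)$ under both; choosing the absolute constant $C$ large enough then finishes the proof.

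Define $\calE$ to be the conjunction of: \textbf{(a)} the $q$ samples $\by^1,\dots,\by^q$ are pairwise distinct and lie in $q$ \emph{distinct} sections (among the $2^k$ determined by $\bJ$); and \textbf{(b)} every query string $\bz^i$ has Hamming distance strictly greater than $0.4n$ from every element of $\bS\setminus\{\by^1,\dots,\by^q\}$. For \textbf{(a)}: a second-moment computation shows two independent draws from the uniform distribution on $\bS$ lie in the same section with probability $O(1/m+2^{-k})=O(2^{-k})$ and are equal with probability $O(1/m)$, so a union bound over the $\binom{q}{2}$ pairs shows \textbf{(a)} fails with probability $O(q^2 2^{-k})=O(2^{-k/3})=o_k(1)$. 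For \textbf{(b)}: conditioned on $(\bY,\bphi(\bY))$ the string $\bz^i=A_1(\bY,\bphi(\bY))_i$ is fixed while each $z\in\bS\setminus\{\by^1,\dots,\by^q\}$ is (essentially) an independent uniform string, so $\Pr[d(\bz^i,z)\le 0.4n]\le 2^{-n}\sum_{d\le 0.4n}\binom{n}{d}\le 2^{-(1-H(0.4))n}$, where $H$ is the binary entropy function; since $H(0.4)<1$ this is $2^{-\Omega(n)}$, and a union bound over the at most $q\cdot|\bS|=\Theta(2^{4k/3}\log n)$ relevant pairs shows \textbf{(b)} fails with probability at most $2^{\,\tfrac{4}{3}k+o(k)-(1-H(0.4))n}=o_k(1)$. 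This last estimate is exactly where the hypothesis $k\le n/200$ enters, since it forces $\tfrac{4}{3}k$ to be far below $(1-H(0.4))n\approx 0.029\,n$.

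For claim (i): $(\bS,\bJ,\bY)$ is produced by the same process in both cases (it does not depend on $\phi$), and conditioned on \textbf{(a)} the labels $\bphi(\by^1),\dots,\bphi(\by^q)$ are i.i.d.\ uniform bits under both $\YES$ and $\NO$ --- in $\YES$ because they are the entries $\bh(\by^j)$ of the truth table for $q$ distinct sections, in $\NO$ because they equal $\bg(\by^j)=\bb(\by^j)$, coins attached to $q$ distinct elements of $\bS$ (using that the $\NO$-construction sets $\bg(z)=\bb(z)$ for every $z\in\bS$). Since \textbf{(b)} is a function of $(\bS,\bJ,\bY,\bphi(\bY))$ and the conditional law of $\bphi(\bY)$ given $(\bS,\bJ,\bY)$ and event \textbf{(a)} is the same in the two cases, the conditional law of $(\bS,\bJ,\bY,\bphi(\bY))$ given $\calE$ is the same, and in particular $\Pr_\YES[\calE]=\Pr_\NO[\calE]$. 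Now $\bZ=A_1(\bY,\bphi(\bY))$ is determined, and each label $\bphi(\bz^i)$ is the same random quantity in both cases: if $\bz^i$ lies in the section of some (by \textbf{(a)}, unique) sample $\by^j$, then $\bphi(\bz^i)=\bphi(\by^j)$ in both cases --- in $\YES$ because $\bh$ is constant on each section, and in $\NO$ because either $d(\bz^i,\by^j)\le 0.4n$, giving $\bg(\bz^i)=\bb(\by^j)=\bg(\by^j)$, or $d(\bz^i,\by^j)>0.4n$, in which case \textbf{(b)} (together with the fact that $\by^j$ is the only sample in its section) shows $\bz^i$ is farther than $0.4n$ from every element of $\bS$ in its section, so $\bg(\bz^i)=\bh(\bz^i)=\bh(\by^j)=\bg(\by^j)$; and if $\bz^i$ lies in a section containing no sample, then \textbf{(b)} again shows $\bg(\bz^i)=\bh(\bz^i)$, so $\bphi(\bz^i)=\bh(\bz^i)$ in both cases --- a uniform bit, equal across query strings lying in a common such section and independent of $\bphi(\bY)$ and of all the ``anchored'' labels (the relevant truth-table entries involve pairwise-disjoint sets of sections). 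Hence the full transcript has identical conditional law given $\calE$, which is claim (i).

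Combining, with $T$ the fixed set of accepting transcripts: $\Pr_\YES[T\cap\calE]=\Pr_\NO[T\cap\calE]$ by (i) together with $\Pr_\YES[\calE]=\Pr_\NO[\calE]$, so $\big|\Pr_\YES[T]-\Pr_\NO[T]\big|\le\Pr_\YES[\neg\calE]+\Pr_\NO[\neg\calE]=o_k(1)$, which is at most $1/4$ once $k\ge C$ for a suitable absolute constant $C$. I expect the main obstacle to be the careful bookkeeping behind claim (i): handling that the query strings $\bz^i$ are functions of the samples $\by^j\in\bS$ (so $\bz^i$ and $\bS$ are not independent, yet \textbf{(b)} still fails with probability $o_k(1)$ once we condition on $(\bY,\bphi(\bY))$), and checking, against the precise definition of $\bg$ near $\bS$, both that $\bg(z)=\bb(z)$ for every $z\in\bS$ and that, once \textbf{(b)} holds, each query string is within distance $0.4n$ in its own section of at most one element of $\bS$, so that no ``conflicting rule'' among several nearby elements of $\bS$ ever bites.
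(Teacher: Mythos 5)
Your approach --- directly coupling the full transcript $(\bY,\bphi(\bY),\bphi(\bZ))$ under $\YES$ and $\NO$ by exhibiting a high-probability event $\calE$ on which the conditional laws agree exactly --- is structurally different from the paper's, which introduces a hybrid randomized algorithm $A'$ that is handed $\bJ$ for free, samples a synthetic consistent junta $\bh'\larr\rjunta_{Y,\alpha,J}$, and answers all black-box queries with $\bh'(Z)$; the paper then splits the bound into three lemmas ($A'$ vs.\ $A'$, $A$ vs.\ $A'$ on $\YES$, $A$ vs.\ $A'$ on $\NO$). Your route is viable and arguably cleaner, but as written it has a genuine gap.

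The gap is in claim (i), in the case ``$\bz^i$ lies in the section of a unique sample $\by^j$ with $d(\bz^i,\by^j)>0.4n$.'' You argue that $\bg(\bz^i)=\bh(\bz^i)=\bh(\by^j)=\bg(\by^j)$. The first two equalities are fine (by \textbf{(b)} no element of $\bS$ in that section is within $0.4n$ of $\bz^i$, so $\bg(\bz^i)$ falls through to the background junta, which is constant on the section), but the last equality $\bh(\by^j)=\bg(\by^j)$ is false: in the $\NO$ construction $\bg(\by^j)=\bgamma(\by^j)$, and $\bgamma$ is drawn independently of $\bh$. So in this case the $\YES$ label is deterministically $\balpha_j$ while the $\NO$ label $\bh(\bz^i)$ is a fresh fair coin, independent of $\balpha_j$ --- the conditional transcript distributions do \emph{not} agree, and claim (i) fails for the event $\calE=\textbf{(a)}\wedge\textbf{(b)}$ you defined.

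To repair it you must add to $\calE$ exactly the paper's condition $E_2$: for every query string $\bz^i$ and every sample $\by^j$, if $\bz^i_{\bJ}=\by^j_{\bJ}$ then $d(\bz^i,\by^j)\le 0.4n$. This is not a negligible afterthought: a non-adaptive algorithm can freely choose $\bz^i$ at Hamming distance greater than $0.4n$ from the samples, and the probability that such a $\bz^i$ nonetheless lands in $\by^j$'s section is $\binom{0.6n}{k}/\binom{n}{k}\le (0.6)^k$ over the randomness of $\bJ$ (which, in the $\NO$ case, is still uniform conditioned on $(\bY,\balpha)$). Union-bounding over $q^2$ pairs gives $q^2(0.6)^k=2^{2k/3}\cdot 2^{k\log_2 0.6}\approx 2^{-0.07k}$. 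This is $o_k(1)$, but only because $q=2^{k/3}$; it is the quantitatively tight part of the whole lower bound and must appear explicitly in the proof. (Your conditions \textbf{(a)} and \textbf{(b)}, and their probability estimates, correspond to the paper's $E_0$ and $E_1$ and look fine; you are simply missing $E_2$.)
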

\begin{proof}[Proof of Theorem \ref{lowerbound} Assuming Lemma \ref{lem:main-lower}, 
\ref{lem:yes:simple} and \ref{lem:far}]
Assume for a contradiction that there exists a $q$-query non-adaptive randomized algorithm $T$
  for the distribution-free testing of $k$-juntas over $n$-variable Boolean functions when $\eps=1/3$.
Then it follows from Lemma \ref{lem:yes:simple} and \ref{lem:far} that
$$
\E_{(\boldf,\bD)\leftarrow \YES}\Big[\Pr\big[\text{$T$ accepts $(\boldf,\bD)$}\big]\Big]
- \E_{(\bg,\bD)\leftarrow\NO}\Big[\Pr\big[\text{$T$ accepts $(\bg,\bD)$}\big]\Big]
 \ge 1/3-o_k(1),$$
since the first expectation is at least $2/3$ and the second is at most $$1/3\big(1-o_k(1)\big)+ o_k(1)
  \le 1/3 +o_k(1).$$
As $T$ is a probability distribution over deterministic algorithms,
  there must exist~a~\mbox{$q$-query} non-adaptive deterministic algorithm $A$ that satisfies
$$
\E_{(\boldf,\bD)\leftarrow \YES}\Big[\Pr\big[\text{$A$ accepts $(\boldf,\bD)$}\big]\Big]
- \E_{(\bg,\bD)\leftarrow\NO}\Big[\Pr\big[\text{$A$ accepts $(\bg,\bD)$}\big]\Big]
 \ge 1/3-o_k(1),$$
  a contradiction with Lemma \ref{lem:main-lower} when $k$ is sufficiently large.
\end{proof}

\subsection{The $\YES$ and $\NO$ distributions} \label{sec:distributions}


Given $J\subseteq [n]$, we partition $\{0,1\}^n$ into \emph{sections} (with respect to $J$)  
  where the $z$-section, $z\in \{0,1\}^J$, consists of those $x\in \{0,1\}^n$ which have $x_J=z$.
We  write
$\rjunta_J$ to denote the uniform distribution over all juntas over $J$.  More precisely,
a Boolean function $\bh:\{0,1\}^n\rightarrow \{0,1\}$ drawn from~$\rjunta_J$ is
  generated as follows:
For each $z\in \{0,1\}^J$, a bit $\bb(z)$ is chosen independently and uniformly at random, and for each $x \in \{0,1\}^n$ the value of $\bh(x)$ is set to $\bb(x_J)$.
Let 
$$m := 36 \cdot 2^k \ln n.$$

We start with $\YES$.
A pair $(\boldf,\bD)$ drawn from $\YES$ is generated as follows:
\begin{flushleft}\begin{enumerate}
\item First we draw independently a subset $\bJ$ of $[n]$ of size $k$ uniformly at random
  and a subset $\bS$ of $\{0,1\}^n$ of size $m$ uniformly at random.

\item Next we draw $\boldf\larr\rjunta_\bJ$ and set 
$\bD$ to be the uniform distribution over $\bS$.
\end{enumerate}\end{flushleft}
For technical reasons that will become clear in Section \ref{sec:main-lower} 
  we use $\YES^*$ to denote the probability distribution supported over triples $(f,\D,J)$, with 
  $(\boldf,\bD,\bJ)\larr \YES^*$ being generated by the same two steps above (so the only
  difference is that we include $\bJ$ in elements of $\YES^*$).
  
The following observation is straight-forward from the definition of $\YES$.

\begin{lemma}\label{lem:yes:simple}
The function $f$ is a $k$-junta for every pair $(f,\D)$ in the support of $\YES$.
\end{lemma}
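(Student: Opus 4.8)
The plan is to simply unwind the definition of the distribution $\YES$ and of $\rjunta_\bJ$. By construction, any pair $(f,\D)$ in the support of $\YES$ arises from a fixed choice of a set $J\subseteq[n]$ with $|J|=k$, a fixed subset $S\subseteq\{0,1\}^n$ of size $m$, and a Boolean function $f$ in the support of $\rjunta_J$ (with $\D$ the uniform distribution over $S$). So it suffices to observe that every function in the support of $\rjunta_J$ is a $k$-junta whenever $|J|=k$.

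First I would recall the definition of $\rjunta_J$: a function $h$ in its support is obtained by fixing, for each $z\in\{0,1\}^J$, a bit $b(z)\in\{0,1\}$, and then setting $h(x)=b(x_J)$ for every $x\in\{0,1\}^n$. The key point is that the value $h(x)$ is determined entirely by the projection $x_J$ of $x$ onto the coordinates in $J$; equivalently, if $x,x'\in\{0,1\}^n$ satisfy $x_J=x'_J$ then $h(x)=h(x')$. Thus $h$ depends on at most the $|J|=k$ variables indexed by $J$, which is exactly the definition of a $k$-junta given in Section~\ref{sec:preliminaries} (take the indices to be the elements of $J$, padding arbitrarily if $|J|<k$, and $g:\{0,1\}^k\to\{0,1\}$ to be the truth table $b$).

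Finally I would conclude: since $f$ in the support of $\YES$ equals such an $h$ for some $J$ of size exactly $k$, $f$ is a $k$-junta. There is no real obstacle here — the statement is immediate from the definitions — so the only thing to be careful about is phrasing it cleanly and matching the formal definition of $k$-junta used earlier in the paper (which allows a function depending on \emph{at most} $k$ variables, so no padding subtlety arises).
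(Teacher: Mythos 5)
Your proposal is correct and follows exactly the route the paper intends: the paper states the lemma without a written proof, calling it ``straight-forward from the definition of $\YES$,'' and your unwinding of $\rjunta_\bJ$ to observe that $h(x)$ depends only on $x_J$ with $|J|=k$ is precisely that observation spelled out.
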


\ignore{  first draw a set of $k$ indices $J$ uniformly
  at random from $[n]$.
We also draw a set of $m$ binary strings $S$ from
  $\{0,1\}^n$ uniformly at random, where
  $m = 2^{k+2}$.
Let $\calD_f$ be the uniform distribution over $S$ and let $f$ be a function drawn from the distribution $\rjunta_J$.
Then $f$ is clearly a $k$-junta following the definition.
}


We now describe $\NO.$  A pair $(\bg,\bD)$ drawn from $\NO$ is generated as follows:
\begin{flushleft}\begin{enumerate}
\item 
We draw $\bJ$ and $\bS$ in the same way as the first step of $\YES$.
\item
Next we draw $\bh\larr \rjunta_\bJ$ and a map $\bgamma:\bS\rightarrow\{0,1\}$ uniformly
  at random by choosing a bit independently and uniformly at random for each string in $\bS$.
We usually refer to $\bh$ as the ``\emph{background junta}.''
\item The distribution $\bD$ is set to be the uniform distribution over $\bS$, which is the 
  same as $\YES$.
The function $\bg: \{0,1\}^n \to \{0,1\}$
is defined using $\bh,\bS$ and $\bgamma$ as follows:
\begin{itemize}
\item [(a)] For each string $y \in \bS$, set $\bg(y)=\bgamma(y)$;\vspace{0.06cm}
\item [(b)] For each string $x\notin \bS$, if there exists no $y\in \bS$ with $\red{y_\bJ=x_\bJ}$ and $d(x,y)\le 0.4n$, set
  $\bg(x)=\bh(x)$; otherwise we set $\bg(x)=1$ if there exists such a $y\in \bS$ with $\bgamma(y)=1$,
  and set $\bg(x)=0$ if every such $y\in \bS$ has $\bgamma(y)=0$.
(The choice of the tie-breaking rule here is not important; we just pick one to  
  make sure that $\bg$ is well defined in all cases.)
\end{itemize}
\end{enumerate}\end{flushleft}
Similarly we let $\NO^*$ denote the distribution supported on triples $(g,\D,J)$ as generated above.

See Figure~\ref{fig:example-x} for an illustration of a function drawn from $\NO$.
To gain some intuition, we first note that about half of the 
  strings $z\in S$ have $g(z)$ disagree with the value of the background junta on the section it lies in.
With such a string $z$ in hand (from one of the samples received in the first phase),
  an algorithm may attempt to find a string $w$ that lies in the same section as $z$ but
  satisfies $g(z)\ne g(w)$.
If such a string is found, the algorithm knows for sure that $(g,\D)$ is from the $\NO$ distribution. 
However, finding such a  $w$ is not easy because one must flip more than
  $0.4n$ bits of $z$, but without knowing the variables in $J$ it is hard to keep $w$ in the same section as $z$ after flipping this many bits.

Next we prove that with high probability,
$(\bg,\bD)\larr \NO$ satisfies that 
  $\bg$ is $1/3$-far from every $k$-junta with respect to $\bD$:

\begin{figure}
\centering
\includegraphics[width=0.45\linewidth]{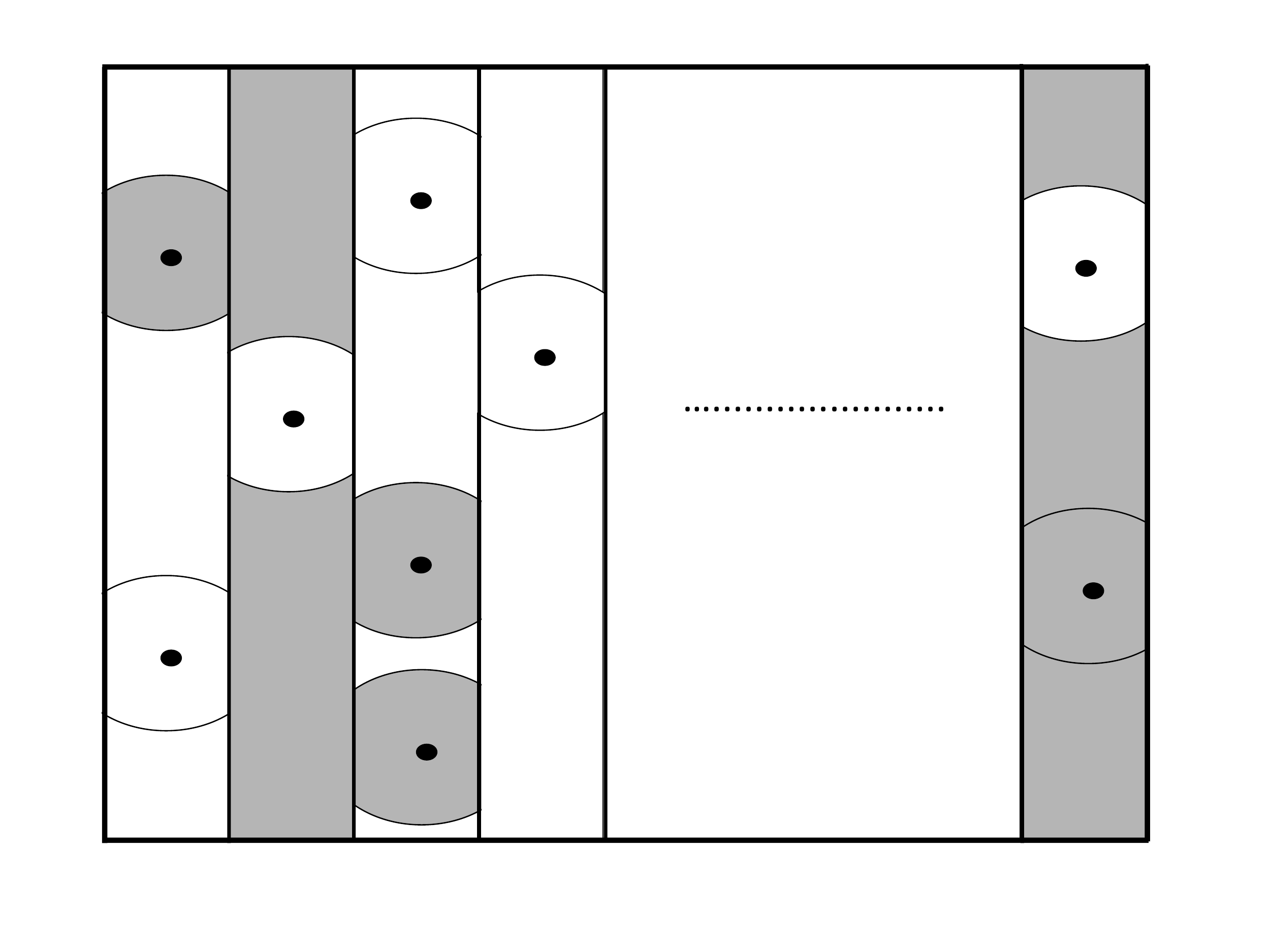}
\caption{A schematic depiction of how $\{0,1\}^n$ is labeled by a function $g$ from $\NO$.  The domain $\{0,1\}^n$ is partitioned into $2^k$ \emph{sections} corresponding to different settings of the variables in $J$; each section is a vertical strip in the figure.  Shaded regions correspond to strings where $g$
  evaluates~to $1$ and unshaded regions to strings where $g$ evaluates to $0$.  Each string in $S$ is a black dot and the value of $g$ on each such string is chosen uniformly at random.  Since in this figure the truncated circles are disjoint, the tie-breaking rule does not come into effect, and for each $z \in S$ all strings in its section within distance at most $0.4n$ (the points in the truncated circle around $z$) have the same value as $z$. The value of $g$ on other points is determined by the background junta $h$ which assigns a uniform random bit to each section.
}
\label{fig:example-x}
\end{figure}


\ignore{

}

\begin{lemma} \label{lem:far}
With probability at least $1-o_k(1)$, $(\bg,\bD)\larr \NO$ is such that 
  $\bg$ is $1/3$-far from every $k$-junta with respect to the distribution $\bD$ .
\end{lemma}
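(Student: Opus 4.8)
The plan is to fix an arbitrary $n$-variable $k$-junta $g'$, upper bound the probability (over $(\bg,\bD)\larr\NO$) that $\bg$ agrees with $g'$ on more than two thirds of the support of $\bD$, and then take a union bound over all $k$-juntas.

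The key (and essentially only) observation is that the ``truncated-ball'' structure of $\bg$ plays no role here: the distance $\dist_{\bD}(\bg,g')$ depends only on the values of $\bg$ on the support $\bS$ of $\bD$, and by construction $\bg(z)=\bgamma(z)$ for every $z\in\bS$, where the bits $\{\bgamma(z):z\in\bS\}$ are i.i.d.\ uniform and independent of $\bJ$, $\bh$, and of the locations of the points of $\bS$. Thus, conditioning on the support $\bS=S$ (a set of $m$ distinct strings), the vector $(\bg(z))_{z\in S}$ is uniform over $\{0,1\}^m$ while $(g'(z))_{z\in S}$ is a fixed element of $\{0,1\}^m$; hence the number of $z\in S$ with $\bg(z)\neq g'(z)$ is distributed as $\Bin(m,1/2)$, so $\dist_{\bD}(\bg,g')=\frac1m\,\Bin(m,1/2)$. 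A Chernoff/Hoeffding bound then gives
\[
\Pr\big[\dist_{\bD}(\bg,g')<1/3\big]\ \le\ \exp\!\big(-2m(1/6)^2\big)\ =\ \exp(-m/18),
\]
and since the right-hand side is independent of $S$, the same bound holds unconditionally over the draw of $(\bg,\bD)\larr\NO$.

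Next I would union-bound over all $n$-variable $k$-juntas. Every such function is specified by a choice of at most $k$ relevant coordinates together with a truth table over them, so there are at most $\binom{n}{k}2^{2^k}\le n^k 2^{2^k}$ of them (counting functions of fewer than $k$ variables with multiplicity only helps). Therefore
\[
\Pr_{(\bg,\bD)\larr\NO}\big[\exists\text{ a $k$-junta }g'\text{ with }\dist_{\bD}(\bg,g')<1/3\big]\ \le\ n^k\,2^{2^k}\,\exp(-m/18).
\]
Substituting $m=36\cdot 2^k\ln n$ yields $\exp(-m/18)=n^{-2\cdot 2^k}$, and since $2^{2^k}\le n^{2^k}$ (as $n\ge 2$), the bound becomes at most $n^{k+2^k-2\cdot 2^k}=n^{k-2^k}\le n^{-2^{k-1}}$, where the last inequality uses $2^{k-1}\ge k$ for every $k\ge1$. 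As $n\ge 200k\ge 200$, this is at most $200^{-2^{k-1}}=o_k(1)$, which is exactly the claim.

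I do not expect a genuine obstacle: the only points needing care are the initial observation that on $\bS$ the function $\bg$ is simply an unstructured uniform random bit string (so that the only requirement on $m$ is that it beat a constant times the logarithm of the number of $k$-juntas, i.e.\ $\Theta(2^k\log n)$), and the constant bookkeeping ensuring that the $\exp(-m/18)$ saving from $m=36\cdot2^k\ln n$ is large enough to absorb both the $\binom{n}{k}\le n^k$ and the $2^{2^k}$ factors in the union bound.
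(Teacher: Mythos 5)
Your proposal is correct and follows essentially the same route as the paper: fix a $k$-junta, observe that on the support of $\bD$ the function $\bg$ is an i.i.d.\ uniform bit string $\bgamma$ independent of everything else, so $\dist_{\bD}(\bg,g')\sim\frac1m\Bin(m,1/2)$, apply a Chernoff-type bound, and union-bound over the at most $\binom{n}{k}2^{2^k}$ $k$-juntas. The only (cosmetic) differences are that you use the Hoeffding form of the tail bound, giving $e^{-m/18}$ rather than the paper's $e^{-m/36}$, and you carry out the final arithmetic showing $n^{k}\cdot 2^{2^k}\cdot n^{-2\cdot 2^k}\le 200^{-2^{k-1}}$ explicitly where the paper simply notes that $n^k 2^{2^k}=o_k(n^{2^k})$.
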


\begin{proof}
\ignore{
}
Fix a $k$-junta $h$, i.e. any set $I \subset [n]$ with $|I|=k$ and any $2^k$-bit truth table over  variables~in~$I$.
We have that $\dist_{\bD}(\bg,h)$ is precisely the fraction of strings $z \in \bS$ such that 
  $\bgamma(z) \neq h(z).$ Since each bit $\bgamma(z)$ is drawn independently 
  and uniformly at random, we have that
\[
\Pr_{(\bg,\bD ) \leftarrow \NO}\big[\hspace{0.06cm}\dist_{\bD }(\bg,h) \leq 1/3
\hspace{0.06cm}\big] = \Pr_{\bj \leftarrow \Bin(m,1/2)}\big[\hspace{0.06cm}\bj \leq m/3
\hspace{0.06cm}\big],
\]
which, recalling that $m=36 \cdot 2^k \ln n$, by a standard Chernoff bound is at most $e^{-m/36} = n^{-2^k}.$  The result follows by a union bound over all (at most) $${n \choose k}\cdot 2^{2^k} \leq n^k\cdot  2^{2^k}=o_k(n^{2^k})$$ possible $k$-juntas $h$ over $n$ variables.
This finishes the proof of the lemma.
\end{proof}

Given Lemma~\ref{lem:yes:simple} and \ref{lem:far}, to prove Theorem~\ref{lowerbound} it remains only to prove Lemma~\ref{lem:main-lower}.

\ignore{

I think in the old argument here, if we fix a junta $h$ and then argue about $g$ relative to that $h$ we need to do a union bound over all $h$'s.  (Note that every $g$ is 1/2-close to either the constant-1 function or the constant-0 function.)

%
%
%
%
}

\subsection{Proof of Lemma~\ref{lem:main-lower}}\label{sec:main-lower}


The following definitions will be useful.
Let $Y=(y^i:i\in [q])$ be a sequence of $q$ strings in $\{0,1\}^n$, $\alpha$ be a $q$-bit string,
  and $J\subset [n]$ be a set of variables of size $k$.
We say that $(Y,\alpha,J)$ is \emph{consistent} if 
\begin{equation}\label{consistency}
\alpha_i=\alpha_j\quad\text{for all $i,j\in [q]$ with\quad  $y^i_J=y^j_J$}.
\end{equation}
Given a consistent triple $(Y,\alpha,J)$, 
  we write $\rjunta_{ Y,\alpha,J}$ to denote the uniform distribution over all juntas $h$ 
  over $J$ that are consistent with $(Y,\alpha).$  More precisely, a draw of $\bh \leftarrow \rjunta_{Y,\alpha,J}$ is generated as follows: For each $z\in \{0,1\}^J$, if there exists a $y^{i}$ such that $y^{i}_J=z,$ then $\bh(x)$ is set to $\alpha_i$ for all $x\in \{0,1\}^n$ with $x_J=z$; 
  if no such $y^{i}$ exists, then a uniform random bit $\bb(z)$ is chosen independently and $\bh(x)$ is set to $\bb(z)$ for all $x$ with $x_J=z.$

To prove Lemma \ref{lem:main-lower}, we first derive from $A$ a new \emph{randomized} algorithm $A'$ that 
  works on triples $(\phi,\D,J)$ from the support of either $\YES^*$ or $\NO^*$.
Again for clarity we~use $\phi$ to denote a function from the 
  support of $\YES$/$\YES^*$ 
  or $\NO$/$\NO^*$, $f$ to denote a function from $\YES$/$\YES^*$ 
  and $g$ to denote a function from $\NO$/$\NO^*$.

In addition to being randomized, $A'$ differs from $A$ in two important ways:
\begin{flushleft}\begin{enumerate}
\item
Like $A$, $A'$  receives  
   samples $\bY\larr \D^q$ and $\phi(\bY)$, but 
unlike $A$, $A'$ also receives $J$ for free.\vspace{-0.06cm}
   
\item 
Unlike $A$, $A'$ does not make any black-box queries but 
  simply runs on the triple $(\bY,\phi(\bY),J)$ it receives at the beginning. 
So formally $A'$ is a randomized algorithm that runs on triples 
  $(Y,\alpha,J)$, where $Y=(y^i:i\in [q])$ is a sequence of $q$ strings, $\alpha$ is a $q$-bit string,
  and $J\subset [n]$ is a set of variables of size $k$,
  and outputs ``accept'' or ``reject.'' %
\end{enumerate}\end{flushleft}
A detailed description of the randomized algorithm $A'$ running on
  $(Y,\alpha,J)$ is as follows:
\begin{flushleft}\begin{enumerate}
\item First, if $(Y,\alpha,J)$ is not consistent, $A'$ immediately halts and rejects
(simply because this can never occur if $(Y,\alpha,J)$ is obtained from a triple $(f,\D,J)$ in the 
  support of $\YES^*$).
Otherwise $A'$ applies $A_1$ on $(Y,\alpha)$ to obtain a sequence $Z=(Z^i:i\in [q])$
  of $q$ strings.

\item Next, $A'$ draws $\bh' \leftarrow \rjunta_{Y,\alpha,J}$.
(This is the only part of $A'$ that is randomized.)

\item Finally, $A'$ runs $A_2(Y,\alpha,\bh'(Z))$ and outputs the same
  result (accept or reject).
\end{enumerate}\end{flushleft}
From the description of $A'$ above, whether it accepts a triple $(\phi,\D,J)$ or not
  depends on both the randomness of $\bY$ and $\bh'$. 
Formally we have
$$
\Pr\big[\text{$A'$ accepts $(\phi,\D,J)$}\hspace{0.03cm}\big]=
\Pr_{\bY,\bh'}\Big[\text{$(\bY,\balpha,J)$ is consistent and $A_2\big(\bY,\balpha,\bh'(\bZ)\big)=1$}\Big].
$$

Lemma~\ref{lem:main-lower} follows immediately from the following three lemmas
  (note that the marginal distribution of $(\boldf,\bD)$ in $\YES^*$ (or $(\bg,\bD)$ in $\NO^*$) 
    is the same as $\YES$ (or $\NO$)). 
In all three lemmas we assume that $A$ is a $q$-query non-adaptive deterministic
  algorithm while $A'$ is the~randomized algorithm derived from $A$ as described above.

\begin{lemma}[$A'$ behaves similarly on $\YES^*$ and $\NO^*$] \label{lem:1} We have
\ignore{
Let $T^\ast$ be any algorithm that is given access to a (function, distribution) pair $(\phi,\D_\phi)$ and a subset $J' \subset [n],
|J'|=k$ and makes $q\le 2^{k/3}/100$ calls to the sampling oracle and no calls to the query oracle.  Then}
\begin{align*}
\mathlarger{\Big|} \hspace{0.08cm}\E_{(\boldf,\bD,\bJ)\larr \YES^*}\Big[
  \Pr\big[\text{$A'$ accepts $(\boldf,\bD,\bJ)$}\big]\Big]
- \E_{(\bg,\bD,\bJ)\leftarrow \NO^*}\Big[\Pr\big[\text{$A'$ accepts $(\bg,\bD,\bJ)$} \big]\Big]
\hspace{0.03cm}\mathlarger{\Big|}\le 1/8.
\end{align*}
\end{lemma}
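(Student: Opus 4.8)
\textbf{Proof proposal for Lemma~\ref{lem:1}.}
The plan is to show that the behavior of $A'$ on $\YES^*$ and on $\NO^*$ differs only by an event of small probability: namely, the event that the sample sequence $\bY$ together with the black-box query sequence $\bZ=A_1(\bY,\bphi(\bY))$ contains two strings that lie in the same $\bJ$-section but receive different labels under $\bg$ in the $\NO^*$ world. Away from that event, I claim the two experiments are \emph{identically distributed}, which immediately yields the bound (with room to spare, since the bad event will have probability $O(q^2 / 2^k) = O(2^{-k/3})=o_k(1) \le 1/8$ for large $k$).

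To make this precise I would couple the two experiments by sharing the randomness: draw $\bJ$ and $\bS$ once, and note that $\bD$ (uniform over $\bS$) is the same in both worlds. The sample sequence $\bY$ is drawn from $\bD^q$ in both. Now compare the labels: in $\YES^*$, $\bphi=\boldf\larr\rjunta_\bJ$; in $\NO^*$, $\bphi=\bg$ as constructed from $\bh\larr\rjunta_\bJ$ and $\bgamma$. The key structural observation is that $A'$ \emph{ignores the actual labels assigned by $\phi$ to the black-box query points} — instead it redraws a fresh consistent junta $\bh'\larr\rjunta_{\bY,\balpha,\bJ}$ and feeds $\bh'(\bZ)$ into $A_2$. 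So the only inputs to $A'$'s final decision are $(\bY,\balpha,\bJ)$ plus the internal randomness $\bh'$. Hence it suffices to show that the joint distribution of $(\bY,\balpha,\bJ)$ is statistically close between $\YES^*$ and $\NO^*$. In $\YES^*$, conditioned on $(\bY,\bJ)$, the vector $\balpha=\boldf(\bY)$ is determined by a uniform random junta over $\bJ$: for each distinct value of $y^i_\bJ$ appearing among the samples we get an independent uniform bit, and samples sharing a section value get the same bit. In $\NO^*$, conditioned on $(\bY,\bJ)$, the label $\balpha_i=\bg(y^i)=\bgamma(y^i)$ is an \emph{independent} uniform bit for each $i$ — unless two samples $y^i,y^j$ coincide exactly as strings (in which case they trivially get the same label in both worlds; duplicates contribute nothing). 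The distributions of $\balpha$ therefore differ exactly on the event that some pair $y^i\ne y^j$ has $y^i_\bJ=y^j_\bJ$; off that event, both produce i.i.d.\ uniform bits and the triples $(\bY,\balpha,\bJ)$ are identically distributed, so $A'$ accepts with exactly the same probability.

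It remains to bound the probability of a section collision. Fix any pair $i\ne j$ with $\by^i\ne \by^j$: since $\by^i,\by^j$ are independent uniform draws from $\bS$ (and $\bS$ is a uniform random $m$-subset of $\{0,1\}^n$, independent of $\bJ$), the probability that $\by^i_\bJ=\by^j_\bJ$ is at most $2^{-k}$ (conditioned on the two strings being distinct and on $\bJ$, each of the $k$ coordinates in $\bJ$ where they could differ agrees with probability $\le 1/2$; more carefully one conditions on $d(\by^i,\by^j)=d\ge 1$ and gets probability $\binom{n-d}{k}/\binom{n}{k}\le (1-d/n)^{\,\text{-ish}}\le$ a quantity $\le 1$, and in any case the union-bound-friendly estimate $2^{-k}$ times a constant suffices after handling the rare case $d$ small). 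Summing over the $\binom{q}{2}<q^2/2$ pairs gives a collision probability at most $q^2/2^{k+1}=2^{2k/3}/2^{k+1}=2^{-k/3-1}=o_k(1)$, which is below $1/8$ once $k$ exceeds the absolute constant $C$. Combining: $|\E_{\YES^*}[\Pr[A'\text{ accepts}]]-\E_{\NO^*}[\Pr[A'\text{ accepts}]]|$ is at most the collision probability, hence at most $1/8$.

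\textbf{Main obstacle.} The delicate point is not the collision bound but verifying that, off the collision event, the triple $(\bY,\balpha,\bJ)$ genuinely has the \emph{same} distribution in both worlds \emph{including} the correlation structure induced by exact duplicate samples and by the fact that $\bS$ is sampled without replacement. I would handle duplicates by observing they force equal labels in both models and so never create a discrepancy, and handle the without-replacement issue by noting $\bS$'s distribution is shared by the coupling so it is irrelevant to the comparison. One must also double-check that $A'$'s first step — rejecting immediately when $(\bY,\balpha,\bJ)$ is inconsistent — is consistent with this analysis: inconsistency is exactly a section collision with unequal labels, so on $\YES^*$ it never happens (juntas are automatically consistent), and on $\NO^*$ it is absorbed into the bad event; thus it does not break the "identical off the bad event" claim. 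Everything else is bookkeeping.
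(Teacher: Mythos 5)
Your proposal is correct and takes essentially the same route as the paper's proof. Both of you reduce the problem to showing that the triple $(\bY,\balpha,\bJ)$ is nearly identically distributed under $\YES^*$ and $\NO^*$ (because $A'$ makes its decision entirely from this triple plus its own fresh randomness $\bh'$), both identify the bad event as two samples landing in the same $\bJ$-section, and both bound that event by a birthday-paradox union bound of order $q^2/2^k = O(2^{-k/3})$. The paper packages the bad event as ``$\bY$ not scattered by $\bJ$'' (Claim~\ref{claim:good}) and handles duplicate strings by first bounding $\Pr[\text{some }\by^i=\by^j]\le q^2/m$ before moving to section collisions; you instead observe directly that exact duplicates never create a discrepancy between the two worlds and so can be ignored, which is a slightly cleaner framing of the same calculation. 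The one spot where the paper is more careful is the estimate $\Pr[\by^i_J=\by^j_J \mid \by^i\ne\by^j] = (2^{n-k}-1)/(2^n-1)\le 2^{-k}$, which you gesture at (``more carefully one conditions on $d\ge 1$\dots'') without writing out; this is a minor gap in exposition rather than substance.
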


\begin{lemma}[$A$ and $A'$ behave identically on $\YES$ and $\YES^*$, respectively] \label{lem:2}
We have
\begin{equation}\label{eq:lem2}
\E_{(\boldf,\bD,\bJ)\larr \YES^*}\Big[  \Pr\big[\text{$A$ accepts $(\boldf,\bD)$}\big]\Big]
= \E_{(\boldf,\bD,\bJ)\larr \YES^*} \Big[\Pr\big[\text{$A'$ accepts $(\boldf,\bD,\bJ)$}\big]\Big].
\end{equation}
\end{lemma}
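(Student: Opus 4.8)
The plan is to exhibit a term-by-term identity between the two experiments by conditioning on the ``observable'' part and checking that, conditioned on it, $A$ and $A'$ receive query answers with identical distributions. Concretely, write the $\YES^*$-experiment for $A$ as: draw $(\bJ,\bS)$, then $\boldf\leftarrow\rjunta_{\bJ}$ with $\bD$ the uniform distribution over $\bS$, then $\bY\leftarrow\bD^q$, set $\balpha=\boldf(\bY)$ and $\bZ=A_1(\bY,\balpha)$, and output $A_2(\bY,\balpha,\boldf(\bZ))$; the $\YES^*$-experiment for $A'$ is identical except that the answer vector $\boldf(\bZ)$ is replaced by $\bh'(\bZ)$ for a fresh $\bh'\leftarrow\rjunta_{\bY,\balpha,\bJ}$. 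Since both algorithms then apply the \emph{same} deterministic map $A_2(\bY,\balpha,\cdot)$ to produce their decision, it suffices to show that the joint law of $(\bY,\balpha,\boldf(\bZ))$ in the first experiment equals that of $(\bY,\balpha,\bh'(\bZ))$ in the second.

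First I would record that, since $\boldf$ depends only on $\bJ$ and $\balpha=\boldf(\bY)$, the triple $(\bY,\balpha,\bJ)$ is always consistent for draws from $\YES^*$ (cf.~Lemma~\ref{lem:yes:simple}); hence $A'$ never halts at its first step, and the sequence $\bZ=A_1(\bY,\balpha)$ it uses coincides with the one $A$ uses. Next, the crux: fix $(\bJ,\bS,\bY)=(J,S,Y)$. Because $\boldf$ is drawn from $\rjunta_J$ independently of $\bS$, and hence of $\bY$, the conditional law of $\boldf$ given $(\bJ,\bS,\bY)=(J,S,Y)$ is still $\rjunta_J$; consequently $\balpha=\boldf(Y)$ is supported on strings $\alpha$ consistent with $(Y,\cdot,J)$, and for each such $\alpha$ the conditional law of $\boldf$ given additionally $\balpha=\alpha$ is, by the very definition of $\rjunta_{Y,\alpha,J}$, equal to $\rjunta_{Y,\alpha,J}$. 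On this event $\bZ$ is the deterministic sequence $A_1(Y,\alpha)$, so applying the deterministic evaluation map $\phi\mapsto\phi(A_1(Y,\alpha))$ to $\boldf$ (in the first experiment) and to $\bh'$ (in the second) produces the same distribution of answer vectors. Summing over $\alpha$ and averaging over $(\bJ,\bS,\bY)$ yields the claimed equality of joint laws, and therefore $(\ref{eq:lem2})$.

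I do not anticipate a genuine \textbf{obstacle}; the only point requiring care is the order of conditioning. One must condition on $(\bJ,\bS,\bY)$ \emph{first}, so that $\balpha$ stays random while $\bZ$ is a deterministic function of $(\bY,\balpha)$, before invoking the identity ``the posterior law of $\boldf$ given its values on $\bY$ is $\rjunta_{\bY,\balpha,\bJ}$''. Once the conditioning is arranged this way the rest is bookkeeping, and in particular it is irrelevant whether the query strings in $\bZ$ collide with sample strings in $\bY$ or with one another, since $\bh'$ is by construction consistent with $(\bY,\balpha)$ on $\bJ$, exactly as $\boldf$ is.
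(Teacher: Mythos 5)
Your proof is correct and follows essentially the same route as the paper: both reduce the claim to showing that, conditioned on $(\bY,\balpha,\bJ)$, the conditional law of $\boldf$ equals $\rjunta_{\bY,\balpha,\bJ}$, and both observe that consistency of draws from $\YES^*$ ensures $A'$ uses the same $\bZ$ as $A$. Your version is slightly more explicit about conditioning on $\bS$ first (to make clear why $\boldf$ is conditionally independent of $\bY$ given $\bJ$), but this is a presentational difference, not a different argument.
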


\begin{lemma}[$A$ and $A'$ behave similarly on $\NO$ and $\NO^*$, respectively] \label{lem:3}
We have
$$
\mathlarger{\Big|}\hspace{0.08cm}
  \E_{(\bg,\bD,\bJ )\larr \NO^*}\Big[\Pr\big[\text{$A$ accepts $(\bg,\bD)$}\big]\Big]
- \E_{(\bg,\bD,\bJ)\leftarrow\NO^*}\Big[\Pr\big[\text{$A'$ accepts $(\bg,\bD,\bJ)$}\big]\Big]
\hspace{0.03cm}\mathlarger{\Big|}\le 1/8.
$$
\end{lemma}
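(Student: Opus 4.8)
The plan is to couple the two experiments by revealing shared randomness and then showing that, on a high-probability event, the sequence of black-box answers that $A$ receives from $\bg$ is distributed \emph{exactly} as the sequence $\bh'(\bZ)$ that $A'$ feeds into $A_2$. First I would reveal and condition on $\bS$, $\bJ$, the sample sequence $\bY=(\by^i:i\in[q])\larr\bD^q$, and the labels $\bgamma(\by^1),\dots,\bgamma(\by^q)$; since $\bg(y)=\bgamma(y)$ for every $y\in\bS$, these already determine $\balpha=\bg(\bY)$ and hence the (now fixed) query sequence $\bZ=(\bZ^i:i\in[q])=A_1(\bY,\balpha)$. The structural point that makes everything work is that $\bJ$ is drawn independently of $\bS$, $\bY$, and $\bgamma$, hence independently of $\bZ$, and that the background-junta truth table $\bb$ (uniform on $\{0,1\}^{\bJ}$) is independent of all of these.

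Next I would define a ``good event'' $G$, measurable with respect to the quantities conditioned on above, as the conjunction of the following four events, each of which I would show has probability $1-o_k(1)$: (a) all $q$ samples lie in distinct sections with respect to $\bJ$ --- this fails with probability $O(q^{2}\cdot 2^{-k})=O(2^{-k/3})$ by a second-moment/birthday estimate using $m\ge 2^{k}$ and $q=2^{k/3}$; (b) no $\bZ^i$ lies in the same section as a sample $\by^l$ while having $d(\bZ^i,\by^l)>0.4n$ --- a fixed pair of strings at Hamming distance $d>0.4n$ lands in a common section of a uniformly random $k$-set with probability $\binom{n-d}{k}/\binom{n}{k}\le(0.6)^{k}$, and since $\bZ^i$ and $\by^l$ are independent of $\bJ$, a union bound over the $q^{2}$ pairs gives failure probability $\le q^{2}(0.6)^{k}=o_k(1)$; (c) no $\bZ^i$ lies within Hamming distance $0.4n$ of a \emph{non-sampled} point $y'\in\bS$ that lies in the same section as $\bZ^i$ --- using the identity $\binom{n}{d}\binom{n-d}{k}=\binom{n}{k}\binom{n-k}{d}$, the expected number of such pairs is at most $q\cdot m\cdot 2^{-n}\sum_{d\le 0.4n}\binom{n-k}{d}$, which is exponentially small in $n$ because $0.4<1/2$ and $k\le n/200$; and (d) no $\bZ^i$ belongs to $\bS\setminus\{\by^1,\dots,\by^q\}$ --- failure probability $\le qm/2^{n}=o_k(1)$. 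Note that (a) forces $(\bY,\balpha,\bJ)$ to be consistent, so on $G$ the algorithm $A'$ does not reject in its first step.

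The heart of the argument is then to show that, conditioned on $\bS,\bJ,\bY,\bgamma(\by^1),\dots,\bgamma(\by^q)$ and on $G$, the two random strings $\bg(\bZ)$ and $\bh'(\bZ)$ have the same distribution; I would verify this coordinate by coordinate. If $\bZ^i=\by^l$ for some $l$, then both $\bg(\bZ^i)$ and $\bh'(\bZ^i)$ equal $\balpha_l$. If $\bZ^i$ is not a sample but its section contains a sample $\by^l$, then (d) gives $\bZ^i\notin\bS$, (b) gives $d(\bZ^i,\by^l)\le 0.4n$, and (a) together with (c) imply that $\by^l$ is the \emph{unique} point of $\bS$ lying in $\bZ^i$'s section within Hamming distance $0.4n$ of $\bZ^i$; hence the definition of $\bg$ gives $\bg(\bZ^i)=\bgamma(\by^l)=\balpha_l=\bh'(\bZ^i)$. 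Finally, if $\bZ^i$ is not a sample and its section contains no sample, then (c) and (d) imply that no point of $\bS$ lies in $\bZ^i$'s section within distance $0.4n$, so $\bg(\bZ^i)=\bh(\bZ^i)=\bb(\bZ^i_{\bJ})$, while $\bh'(\bZ^i)=\bb'(\bZ^i_{\bJ})$ where $\bb'$ is the family of fresh uniform bits that $\bh'$ assigns to the sections containing no sample. Since $\bb'$ and the restriction of $\bb$ to the sections containing no sample are each uniform and independent of all the conditioned quantities, and since two such coordinates $\bZ^i,\bZ^j$ receive equal values under $\bg$ exactly when $\bZ^i_{\bJ}=\bZ^j_{\bJ}$, which is also exactly when they receive equal values under $\bh'$, the random vectors $(\bg(\bZ^i))_{i\in[q]}$ and $(\bh'(\bZ^i))_{i\in[q]}$ have the same joint law, and both are independent of the coordinates pinned down by $\balpha$. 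Therefore $\Pr[A_2(\bY,\balpha,\bg(\bZ))=1\wedge G]=\Pr[A_2(\bY,\balpha,\bh'(\bZ))=1\wedge G]$; since $G$ implies consistency, each of the two expectations in Lemma~\ref{lem:3} lies within $\Pr[\neg G]$ of this common value, so the left-hand side is at most $2\Pr[\neg G]=o_k(1)$, which is below $1/8$ once $k$ exceeds a suitable absolute constant.

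I expect the main obstacle to be the coordinate-by-coordinate matching, and in particular engineering the good events so that whenever a query string lands in a section containing a sample it is necessarily \emph{close} to that (unique) sample: this is precisely what forces $\bg$'s truncated-ball modification to pin the value of that string to $\balpha_l$, rather than leaving it equal to the (independent) background-junta value or letting it be influenced by an independent nearby point of $\bS$. Making the quantitative estimate in event (c) go through is where the parameter choices $0.4n$, $q=2^{k/3}$, $m=\Theta(2^{k}\ln n)$ and the hypothesis $k\le n/200$ all have to fit together.
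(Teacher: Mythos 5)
Your proposal is correct and takes essentially the same approach as the paper: you define a high-probability ``good'' event (your (a),(b),(c),(d) are a mild repackaging of the paper's $E_0,E_1,E_2$, with (c)+(d) being a slightly weaker but still sufficient version of $E_1$), show that conditioned on it the distributions of $\bg(\bZ)$ and $\bh'(\bZ)$ coincide via a section-by-section case analysis, and conclude with a union/coupling bound. The only cosmetic difference is that you state a bound of $2\Pr[\neg G]$ rather than the tight $\Pr[\neg G]$, which is harmless since both are $o_k(1)$.
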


We start with the proof of Lemma \ref{lem:1}, which 
  says that a limited algorithm such as 
  $A'$ cannot effectively distinguish between a draw from $\YES^*$ versus $\NO^*$:


\begin{proof}[Proof of Lemma \ref{lem:1}]
Since $A'$  runs on $(Y,\alpha,J)$, it suffices to
  show that the distributions of $(\bY,\balpha,\bJ)$ induced from
  $\YES^*$ and $\NO^*$ have small total variation distance.
For this purpose we first note that the distributions of $(\bY,\bJ)$ induced from
  $\YES^*$ and $\NO^*$ are identical: In both cases, $\bY$ and $\bJ$ are independent;
  $\bJ$ is a random subset of $[n]$ of size $k$;
  $\bY$ is obtained by first sampling a subset $\bS$ of $\{0,1\}^n$ of size $m$
  and then drawing a sequence of $q$ strings from $\bS$ with replacement.

Fix a pair $(Y,J)$ in the support of $(\bY,\bJ)$.
We say $Y$ is \emph{scattered} by $J$ if $y^{i}_J\ne y^{j}_J$ for all $i\ne j\in [q]$.
In particular this implies 
  that no string appears more than once in $Y$.
The following claim, whose proof we defer, shows that 
  $\bY$ is scattered by $\bJ$ with high probability.

\begin{claim} \label{claim:good}
We have that $\bY$ is scattered by $\bJ$ with probability at least 
$1 - O(2^{- k/3}).$
\end{claim}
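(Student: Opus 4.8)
The plan is to bound, for a single pair of distinct indices, the probability that the two corresponding samples land in the same section of $\bJ$, and then take a union bound over the $\binom{q}{2}$ pairs. Since $q=2^{k/3}$ and there are $2^k$ sections, a birthday-type estimate gives a bound of order $\binom{q}{2}\cdot 2^{-k}=O(2^{-k/3})$. The only subtlety is that the $q$ strings comprising $\bY$ are drawn from the random $m$-set $\bS$ \emph{with replacement}, so two of them can coincide as strings; this contributes an extra term, which will be harmless because $m=36\cdot 2^k\ln n$ is far larger than $q^2$.

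First I would recall, as already observed just before the claim, that the law of $(\bY,\bJ)$ is the same under $\YES^*$ and $\NO^*$: $\bJ$ is a uniform random $k$-subset of $[n]$, independent of $\bY$, and $\bY$ is obtained by first drawing a uniform random $m$-subset $\bS\subseteq\{0,1\}^n$ and then taking $q$ i.i.d.\ uniform samples from $\bS$. Since the claim concerns only $(\bY,\bJ)$, we may work with this common distribution, and since all pairs of indices are exchangeable it suffices to bound $p:=\Pr[\by^1_{\bJ}=\by^2_{\bJ}]$ and conclude via
\[ \Pr\big[\bY\text{ is not scattered by }\bJ\big]\ \le\ \binom{q}{2}\,p . \]

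To bound $p$, write $p=\Pr[\by^1=\by^2]+\Pr[\by^1\ne\by^2\text{ and }\by^1_{\bJ}=\by^2_{\bJ}]$. Conditioned on $\bS$, the samples $\by^1,\by^2$ are i.i.d.\ uniform on a set of size $m$, so the first term is exactly $1/m$, and the second term equals $\E_{\bS}\!\big[\,m^{-2}\cdot|\{(a,b)\in\bS\times\bS:a\ne b,\ a_{\bJ}=b_{\bJ}\}|\,\big]$. The number of ordered pairs of distinct strings of $\{0,1\}^n$ that agree on a fixed $k$-set is $2^{n}(2^{n-k}-1)$, and each such pair lies in $\bS\times\bS$ with probability $\tfrac{m(m-1)}{2^{n}(2^{n}-1)}$; hence that expectation is $\tfrac{m(m-1)(2^{n-k}-1)}{2^{n}-1}\le 2m^{2}2^{-k}$, using $(2^{n-k}-1)/(2^{n}-1)\le 2\cdot2^{-k}$ for $n\ge1$. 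Therefore $p\le \tfrac1m+2\cdot2^{-k}$.

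Putting these together with $q=2^{k/3}$ and $m=36\cdot2^{k}\ln n$ yields
\[ \Pr\big[\bY\text{ is not scattered by }\bJ\big]\ \le\ \binom{q}{2}\Big(\tfrac1m+2\cdot2^{-k}\Big)\ \le\ \frac{q^{2}}{2m}+q^{2}2^{-k}\ =\ \frac{2^{-k/3}}{72\ln n}+2^{-k/3}\ =\ O(2^{-k/3}), \]
which is Claim~\ref{claim:good}. I do not foresee any real obstacle: this is essentially a two-line birthday computation, and the only things to get right are the bookkeeping for the with-replacement sampling (the $1/m$ term and the expectation over the random set $\bS$) and the verification that $q=2^{k/3}$ is exactly the choice that makes the dominant term $q^{2}2^{-k}$ equal $2^{-k/3}$ rather than something growing with $k$.
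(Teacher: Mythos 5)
Your proof is correct and takes essentially the same birthday-paradox approach as the paper: both arguments isolate the two sources of bad events — exact collisions among the $q$ with-replacement samples (contributing $q^2/m=O(2^{-k/3})$) and projection collisions modulo $J$ (contributing $\binom{q}{2}2^{-k}=O(2^{-k/3})$) — and finish with a union bound. The paper organizes this by first conditioning on all samples being distinct (after which each pair is uniform over distinct strings, so $\Pr[\by^i_J=\by^j_J]\le 2^{-k}$), whereas you compute the per-pair probability $p$ directly by splitting off $\Pr[\by^1=\by^2]=1/m$ and bounding the rest by linearity of expectation over $\bS$; these are just different ways of bookkeeping the same calculation. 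One small slip to fix: after writing the second term as $\E_{\bS}\big[m^{-2}\cdot|\{(a,b)\in\bS\times\bS:a\ne b,\ a_{\bJ}=b_{\bJ}\}|\big]$, you report "that expectation" as $\frac{m(m-1)(2^{n-k}-1)}{2^n-1}$, which is $\E_{\bS}[|\{\cdot\}|]$ without the $m^{-2}$ factor; with the factor restored it is $\frac{(m-1)(2^{n-k}-1)}{m(2^n-1)}\le 2\cdot 2^{-k}$, so your final bound $p\le \frac{1}{m}+2\cdot 2^{-k}$ and the conclusion stand.
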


Fix any $(Y,J)$ in the support of $(\bY,\bJ)$ such that $Y$ is scattered by $J$.  
We claim that the distributions of $\balpha$ conditioning on $(\bY,\bJ)=(Y,J)$ in the
  $\YES^*$ case and the $\NO^*$ case are identical,~from which it follows 
  that the total variation distance between the distributions of $(\bY,\balpha,\bJ)$
  in the two cases is at most $O(2^{- k/3})\le 1/8$ when $k$ is sufficiently large.
Indeed $\balpha$ is uniform over strings of length $q$ in both cases.
This~is trivial for  $\NO^*$. For $\YES^*$ note that $\balpha$ is determined
  by the~random $k$-junta $\boldf\larr \rjunta_J$; the claim follows from the assumption that
  $Y$ is scattered by $J$.
\end{proof}

\begin{proof}[Proof of Claim \ref{claim:good}]
We fix $J$ and show that
  $\bY$ is scattered by $J$ with high probability.
As strings of $\bY$ are drawn one by one, the probability of 
  $\by^i$ colliding with one of the previous samples is at most $(i-1)/m\le q/m$.
By a union bound, all strings in $\bY$ are distinct with
  probability at least
$$
1-q\cdot (q/m)=1-q^2/m=1-O(2^{-k/3}).
$$
Conditioning on this event, $\bY=(\by^i)$ is distributed precisely as a uniform random 
  sequence from $\{0,1\}^n$ with no repetition and thus each pair $(\by^i,\by^j)$ is distributed
  uniformly over pairs~of~distinct strings in $\{0,1\}^n$.
As a result, we have
$$
\Pr\left[\hspace{0.03cm}\by^{i}_J=\by^{j}_J\hspace{0.05cm}\right]
=\frac{2^{n-k}-1}{2^n-1}\le \frac{1}{2^k}.
$$
By a union bound over ${q\choose 2}$ pairs we have that the 
  probability of $\bY$ being scattered by $J$ is at least
$$
\big(1-O(2^{-k/3})\big)\cdot\left(1-{q\choose 2}\cdot 2^{-k}\right)\ge 1-O(2^{-k/3}).
$$
This finishes the proof of the claim.
\end{proof}

Next we prove Lemma \ref{lem:2}.

\begin{proof}[Proof of Lemma \ref{lem:2}]
The first expectation in (\ref{eq:lem2}) is equal to the probability that 
$$
A_2\mathlarger{\big(}\bY,\balpha,\boldf\big(A_1(\bY,\balpha\big)\mathlarger{\big)}=1, 
$$
where $(\boldf,\bD,\bJ)\larr \YES^*$, $\bY\larr \bD^q$ and $\balpha=\boldf(\bY)$.
For the second expectation, since the triple on which we run $A'$ is always consistent,
  we can rewrite it as the probability that
$$
A_2\mathlarger{\big(}\bY,\balpha,\bh'\big(A_1(\bY,\balpha)\big)\mathlarger{\big)}=1,
$$
where $(\boldf,\bD,\bJ)\larr \YES^*$, $\bY\larr \bD^q$, $\balpha=\boldf(\bY)$ and $\bh'\larr \rjunta_{\bY,\balpha,\bJ}$.

To show that these two probabilities are equal, we first note that 
  the distributions of $(\bY,\balpha,\bJ)$ are identical.
Fixing any triple $(Y,\alpha,J)$ in the support of $(\bY,\balpha,\bJ)$, which must 
  be consistent, we claim that the distribution of $\boldf$ conditioning on
  $(\bY,\balpha,\bJ)=(Y,\alpha,J)$ is exactly $\rjunta_{Y,\alpha,J}$.~This~is because, for each $z\in \{0,1\}^J$, if $y^i_J=z$ for some $y^i$ in $Y$, then
  we have $\boldf(x)=\alpha_i$ for all strings~$x$ with $x_J=z$;
  otherwise, we have $\boldf(x)=\bb(z)$ for all $x$ with $x_J=z$, where $\bb(z)$ is
  an independent and uniform bit.
This is the same as how $\bh'\larr \rjunta_{Y,\alpha,J}$ is generated.
It follows directly~from this claim that the two probabilities are the same. 
This finishes the proof of the lemma.
\ignore{
Given a fixed triple $(Y,s,J)$, note that a function $f$
  drawn from $\YES$ conditioning on $Y,s$ and $J$ is
  distributed exactly as $\rjunta_{(Y,s,J)}$.
The lemma then follows.
}
\end{proof}

Finally we prove Lemma \ref{lem:3}, the most difficult among the three lemmas:

\begin{proof}[Proof of Lemma \ref{lem:3}]
Similar to the proof of Lemma \ref{lem:2}, the first expectation is the probability of
$$
A_2\mathlarger{\big(}\bY,\balpha,\bg\big(A_1(\bY,\balpha\big)\mathlarger{\big)}=1, 
$$
where $(\bg,\bD,\bJ)\larr \NO^*$ and $\balpha=\bg(\bY)$, while the second expectation is the probability of
$$
\text{$(\bY,\balpha,\bJ)$ is consistent and }
A_2\mathlarger{\big(}\bY,\balpha,\bh'\big(A_1(\bY,\balpha)\big)\mathlarger{\big)}=1,
$$
where $(\bg,\bD,\bJ)\larr \NO^*$, $\bY\larr \bD^q$, $\balpha=\bg(\bY)$, and $\bh'\larr \rjunta_{\bY,\balpha,\bJ}$.
We note that the distributions of $(\bY,\balpha,\bJ,\bD)$ in the two cases are identical.

The following definition is crucial.
We say a tuple $(Y,\alpha,J,\D)$ in the support of $(\bY,\balpha,\bJ,\bD)$ is \emph{good} if
  it satisfies the following three conditions ($S$ below is the support of $\D$):
\begin{flushleft}\begin{enumerate}
\item[]$E_0$: $Y$ is scattered by $J$.\vspace{-0.12cm}
\item[]$E_1$: Let $Z=A_2(Y,\alpha)$. Then every $z$ in $Z$ and every 
  $x$ in $S\setminus Y$ have $d(x,z)>0.4n$.
(In $S\setminus Y$ we abuse notation and use $Y$ as 
  a set that contains all strings in the sequence $Y$.)\vspace{-0.12cm}
\item[]$E_2$: If a string $z$ in $Z$ satisfies $z_J=y_J$ for some $y$ in $Y$, then
  we must have $d(y,z)\le 0.4n$.
\end{enumerate}\end{flushleft}
We delay the proof of the following claim to the end.

\begin{claim}\label{lastclaim}
We have that $(\bY,\balpha,\bJ,\bD)$ is good with probability at least $7/8$.
\end{claim}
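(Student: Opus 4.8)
The plan is to bound $\Pr\big[(\bY,\balpha,\bJ,\bD)\text{ is not good}\big]$ by a union bound over the three bad events, $\Pr[\overline{E_0}]+\Pr[\overline{E_1}]+\Pr[\overline{E_2}]$, and to show each term is $o_k(1)$; the claimed bound of $7/8$ then follows once $k$ exceeds a suitable absolute constant $C$ (which we are free to enlarge). The first term is handled for free: $\overline{E_0}$ is exactly the event that $\bY$ is not scattered by $\bJ$, and since the joint distribution of $(\bY,\bJ)$ under $\NO^*$ is precisely the one analyzed in Claim~\ref{claim:good}, we get $\Pr[\overline{E_0}]=O(2^{-k/3})$.

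The crucial preliminary for the other two terms is to pin down the conditional distribution of the ``fresh'' randomness given the samples. Condition on $\bY=Y$ consisting of $q$ distinct strings (this fails with probability at most $\binom q2/m=O\big(2^{-k/3}/\ln n\big)$) and on $\balpha=\alpha$, so that the query strings $Z=A_1(Y,\alpha)$ become a fixed set of $q$ strings. Reading off the generative process of $\NO^*$, I will check that conditioned on this event: (a) $\bS\setminus Y$ is a uniformly random $(m-q)$-subset of $\{0,1\}^n\setminus Y$ (a one-line Bayes computation, using that $\Pr[\bY=Y\mid \bS=S]$ equals $m^{-q}$ or $0$ according to whether all strings of $Y$ lie in $S$); and (b) $\bJ$ is still a uniformly random $k$-subset of $[n]$, independent of $(\bS,\bgamma)$ and hence of $(Y,\alpha)$ and of $\bS\setminus Y$, since $\bJ$ is drawn independently of everything that determines $\bY,\balpha,\bS$.

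With $(Y,\alpha)$ fixed, $\overline{E_1}$ is the event that some $x\in \bS\setminus Y$ falls in the union of Hamming balls $\bigcup_{z\in Z}\{x:d(x,z)\le 0.4n\}$. Each such ball has size $\sum_{i\le 0.4n}\binom ni\le 2^{0.98n}$ (a standard volume bound, valid since $0.4<1/2$), so the union has size at most $q\cdot 2^{0.98n}$; since $\bS\setminus Y$ is a uniform $(m-q)$-subset of a universe of size $2^n-q$, a union bound over its elements gives
\[
\Pr\big[\overline{E_1}\,\big|\,Y,\alpha\big]\ \le\ \frac{(m-q)\cdot q\cdot 2^{0.98n}}{2^n-q}\ \le\ 72\ln n\cdot 2^{\,4k/3-0.02n},
\]
using $m=36\cdot 2^k\ln n$ and $q=2^{k/3}$; since $k\le n/200$ we have $4k/3\le n/150<0.02n$, so this is $2^{-\Omega(n)}$. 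For $\overline{E_2}$, with $Y$ and $Z$ fixed the event is that some pair $z\in Z,\,y\in Y$ with $d(y,z)>0.4n$ nevertheless has $z_{\bJ}=y_{\bJ}$, i.e. that $\bJ$ avoids all of the more than $0.4n$ coordinates where $z$ and $y$ differ; for a single such pair this has probability $\binom{n-d(y,z)}{k}\big/\binom nk\le (0.6)^{k}$, and a union bound over the at most $q^2$ pairs gives $\Pr[\overline{E_2}\mid Y,\alpha]\le q^2\cdot (0.6)^k=2^{2k/3-0.73k}=2^{-\Omega(k)}$. Averaging over $(Y,\alpha)$, adding back the $O(2^{-k/3}/\ln n)$ chance of a repeated sample, and summing the three contributions yields $\Pr[\text{not good}]=o_k(1)\le 1/8$ for all $k\ge C$.

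The main obstacle is setting up the conditioning in step (b) correctly so that the two tail estimates decouple: the whole point is that $Z$ depends on the samples $(Y,\alpha)$, but once these are fixed $Z$ is a deterministic object about which the yet-unrevealed randomness knows nothing — $\bS\setminus Y$ is fresh uniform randomness for $E_1$, and $\bJ$ is fresh uniform randomness for $E_2$. Once that is in place the two bounds are routine, the only quantitative inputs being that a Hamming ball of radius $0.4n$ occupies a $2^{-\Omega(n)}$ fraction of the cube with enough room to absorb the factor $2^{4k/3}$ (using $k\le n/200$), and that a random $k$-set misses a fixed set of size $>0.4n$ only with probability $(0.6)^k$, which beats $q^2=2^{2k/3}$.
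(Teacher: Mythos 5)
Your proof is correct and its skeleton is the same as the paper's: a union bound over the failures of $E_0$, $E_1$, $E_2$, with $E_0$ reduced to Claim~\ref{claim:good}, $E_1$ handled by noting that conditioned on $(\bY,\balpha)=(Y,\alpha)$ (and hence on $Z=A_1(Y,\alpha)$) the set $\bS\setminus Y$ is still a uniformly random subset of $\{0,1\}^n\setminus Y$, and $E_2$ handled by noting that $\bJ$ remains a uniformly random $k$-subset of $[n]$ independent of $(\bY,\balpha)$ and hence of $Z$. The only genuine difference is the tail estimate for $E_1$: the paper first couples the without-replacement sample $\bS\setminus\bY$ to a multiset $\bT$ of $m-\ell$ i.i.d.\ uniform draws (paying a total-variation cost of $(mq+m^2)/2^n$) and then applies a Chernoff bound for the event that some string of $\bT$ lands within Hamming distance $0.4n$ of some $z\in Z$, whereas you bound the relevant Hamming ball directly by the entropy estimate $\sum_{i\le 0.4n}\binom{n}{i}\le 2^{H(0.4)n}\le 2^{0.98n}$ and union-bound over the $m-q$ elements of $\bS\setminus Y$ using only that each is marginally uniform. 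Both routes give a bound of order $mq\cdot 2^{-\Omega(n)}$, which is negligible under $k\le n/200$; your version sidesteps the coupling step and is marginally cleaner, and your explicit Bayes check that $\bS\setminus Y$ and $\bJ$ are ``fresh'' given $(Y,\alpha)$ spells out the conditional-independence structure that the paper relies on but states more tersely.
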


Fix any good $(Y,\alpha,J,\D)$ in the support and let $Z=A_2(Y,\alpha)$.
We finish the proof by showing that the distribution of $\bg(Z)$, a binary string of length $q$,
 conditioning on 
  $(\bY,\balpha,\bJ,\bD)=(Y,\alpha,J,\D)$ is the same as that of $\bh'(Z)$ with $\bh'\larr \rjunta_{Y,\alpha,J}$.
This combined with Claim \ref{lastclaim} implies that the difference of the two probabilities has absolute value at most $1/8$.  

To see this is the case we partition strings of $Z$ into $Z_w$, where each $Z_w$ is a nonempty set 
  that contains all $z$ in $Z$ with $z_J=w\in \{0,1\}^J$.
For each $Z_w$, we consider the following two cases:
\begin{flushleft}\begin{enumerate}
\item If there exists no string $y$ in $Y$ with $y_J=w$, then by $E_1$ strings
  in $Z_w$ are all far from strings of $S$ (i.e., the support of $\D$) in this section and thus,
  $\bg(z)=\bb(w)$ for some independent and uniform bit $\bb(w)$, for all strings $z\in Z_w$.
\item If there exists a $y$ in $Y$ with $y_J=w$ (which must be unique by $E_0$), say $y^i$,
  then by $E_1$ and $E_2$ strings in $Z_w$ are all close to $y$ and far from other strings
  of $S$ in this section.
As a result, we have
  $\bg(z)=\alpha_i$ for all strings $z\in Z_w$.
\end{enumerate}\end{flushleft}
So the conditional distribution of $\bg(Z)$ is identical to that of 
  $\bh'(Z)$ with $\bh'\hspace{-0.05cm}\larr\hspace{-0.05cm}\rjunta_{Y,\alpha,J}$.
This finishes the proof of the lemma.
\end{proof}

\ignore{
To show that (\ref{eq:goal}) holds, we fix any good outcome $(Y,J')$ of $(\bY,\bJ')$, any outcome $s \in \{0,1\}^q$ of $\bg(Y)$, and any outcome $Z$ of $\bZ$, and we show below that
\begin{align}
&\left|
  \Pr_{(\bg,\D_\bg)\leftarrow \NO,\bJ' \leftarrow (\bg,\D_\bg),\bY \leftarrow (\D_\bg)^q}\big[T_2((Y,s),(Z,\bg(Z)) \ \text{accepts} \ | \ \bJ'=J, \bY=Y, \bg(Y)=s\big]\right. \nonumber\\
  & \left.
- \Pr_{(\bg,\D_\bg)\leftarrow\NO,
\bJ' \leftarrow (\bg,\D_\bg),
\bh' \leftarrow \rjunta_{(\bY,\bg(\bY)),\bJ'},\bY \leftarrow (\D_\bg)^q}
\big[T_2((Y,s),(Z,\bh'(Z)))\ \text{accepts}\ | \ \bJ'=J, \bY=Y, \bg(Y)=s\big]
\hspace{0.03cm}\right|\nonumber\\
& \le 0.04, \label{eq:goal2}
\end{align}
which implies (\ref{eq:goal}).  We thus fix $(Y,J'),s,Z$ as described above for the rest of the argument.

}

\ignore{
}

\begin{proof}[Proof of Claim \ref{lastclaim}]
We bound the probabilities of $(\bY,\balpha,\bJ,\bD)$ violating each of the three conditions 
  $E_0$, $E_1$ and $E_2$ and apply a union bound.
By Claim \ref{claim:good}, $E_0$ is violated with probability $O(2^{-k/3})$.

For $E_1$, we fix a pair $(Y,\alpha)$ in the support and let $\ell\le q$ be the
  number of distinct strings in $Y$ and $Z=A_2(Y,\alpha)$.
Conditioning on $\bY=Y$, $\bS\setminus \bY$ is a uniformly random 
  subset of $\{0,1\}^n\setminus Y$ of size $m-\ell$. 
Instead of working with $\bS\setminus \bY$, we let $\bT$ denote a 
  set obtained by making $m-\ell$ draws from $\{0,1\}^n$ uniformly at random (with replacements).

On the one hand, the total variation distance between $\bS\setminus \bY$
  and $\bT$ is exactly the probability that either (1) $\bT\cap Y$ is nonempty or (2)
  $|\bT|<m-\ell$.
By two union bounds, (1) happens with probability at most 
$(m-\ell)\cdot (\ell/2^n)\le mq/2^n$ and (2) happens with probability at most $(m/2^n)\cdot m$.
As a result, the total variation distance is at most
  $(mq+m^2)/2^n$.
On the other hand, the probability that one of the strings of $\bT$ has distance at most
  $0.4n$ with one of the strings of $Z$ is at most
$ 
mq\cdot \exp(-n/100)
$ 
by a Chernoff bound followed by a union bound.
Thus, the probability of violating $E_1$ is at most (using the assumption that $k\le n/200$)
$$
(mq+m^2)/2^n+mq\cdot \exp(-n/100)
=O\big(2^{-n/300}\hspace{0.05cm}\red{\ln n} \big).
$$


For $E_2$, we fix a pair $(Y,\alpha)$ in the support and let $Z=A_2(Y,\alpha)$.
Because $\bJ$ is independent~from $(\bY,\balpha)$, it remains a subset of $[n]$ of size $k$
  drawn uniformly at random.
For each pair  $(y,z)$ with $y$ from $Y$ and $z$ from $Z$
  that satisfy $d(y,z)>0.4n$,
  the probability of $y_\bJ=z_\bJ$ is at most 
$$
{\frac {{0.6n \choose k}}{{n \choose k}}} \le (0.6)^k.
$$
Since there are at most $q^2$ many such pairs, it follows from a union bound that 
  the probability of violating $E_2$ is at most
$
q^2\cdot (0.6)^k\le e^{-0.07k}.
$

Finally the lemma follows from a union bound when $k$ (and thus, $n$) is sufficiently large.
\end{proof}


\bibliographystyle{abbrv}
\bibliography{allrefs}

\end{document}